\documentclass[12pt]{article}

\usepackage{amsmath,amssymb,amsthm,cite,enumitem,mathtools,xcolor}
\usepackage[margin=3cm]{geometry}
\usepackage{setspace}

 \usepackage[utf8]{inputenc} 
\usepackage[T1]{fontenc}
\usepackage{newtxtext} 
\usepackage{bm} 
\usepackage[normalem]{ulem}  

\usepackage{multicol}
\usepackage{authblk}
\newcommand\Ker{\mathcal{N}}
\newcommand\Ran{\mathcal{R}}

\title{Propagators in curved spacetimes \\ from operator theory}

\author[1]{Jan Dereziński}
\author[1,2]{Christian Ga\ss}

\affil[1]{\small Department of Mathematical Methods in Physics, Faculty of Physics, \protect\\
University of Warsaw, Pasteura 5, 02-093 Warszawa, Poland, \protect\\
email: jan.derezinski@fuw.edu.pl, christian.gass@fuw.edu.pl}

\affil[2]{\small Department of Physics, University of Vienna, \protect\\
Boltzmanngasse 5, A-1090 Vienna, Austria}

\date{\today}

\newcommand\RR{\mathbb{R}}
\newcommand\CC{\mathbb{C}}
\DeclarePairedDelimiter{\abs}{\lvert}{\rvert}
\DeclarePairedDelimiter{\norm}{\lVert}{\rVert}

\DeclarePairedDelimiterX{\cinner}[2]{(}{)}{#1\mkern2mu\delimsize\vert\mkern2mu\mathopen{}#2}
\DeclarePairedDelimiterX{\rinner}[2]{\langle}{\rangle}{#1\mkern2mu\delimsize\vert\mkern2mu\mathopen{}#2}

\DeclareMathOperator{\sgn}{sgn}     
\DeclareMathOperator{\supp}{supp}   
\DeclareMathOperator{\T}{T}         

\DeclareMathOperator{\realpart}{Re}
\DeclareMathOperator{\imaginarypart}{Im}
\renewcommand{\Re}{\realpart}
\renewcommand{\Im}{\imaginarypart}

\newcommand{\lrnb}{\overset{\leftrightarrow}{\nabla}}
\newcommand{\KG}{\mathrm{KG}}
\renewcommand\sc{\mathrm{sc}}
\newcommand\sym{\mathrm{sym}}
\newcommand\as{\mathrm{as}}
\newcommand\op{\mathrm{op}}
\newcommand{\dd}{\textup{d}}
\newcommand{\ii}{\textup{i}}
\newcommand{\ee}{\textup{e}}

\newcommand{\dS}{\textup{dS}}

\newcommand{\PJ}{\textup{PJ}}
\newcommand{\AdS}{\textup{AdS}}
\newcommand{\tAdS}{\widetilde{\textup{AdS}}}

\DeclareMathOperator{\re}{Re}

\newcommand{\R}{\textup{R}}

\newcommand{\TT}{\textup{T}}
\newcommand{\F}{\textup{F}}

\newcommand\Span{\mathrm{Span}}
\newcommand{\eps}{\varepsilon}      

\newcommand{\bC}{\mathbb{C}}        
\newcommand{\bN}{\mathbb{N}}        
\newcommand{\bR}{{\mathbb{R}}}      
\newcommand{\bS}{\mathbb{S}}        
\newcommand{\bH}{\mathbb{H}}        
\newcommand{\bZ}{\mathbb{Z}}        
\newcommand{\s}{\textup{s}}
\newcommand{\h}{\textup{h}}

\newcommand{\sW}{\mathcal{W}}       
\newcommand{\sWsc}{\mathcal{W}_{\rm sc}} 

\newcommand{\del}{\partial}         


\newcommand{\aside}[1]{} 


\newcommand{\word}[1]{\quad\text{#1}\quad} 

\def\wick:#1:{\,\mathopen:#1\mathclose:\,} 

\def\epsi^#1_#2{\eps^{#1}{}_{\!#2}} 
\def\epsii_#1^#2{\eps_{#1}{}^{\!#2}} 
\def\lLa^#1_#2{\Lambda^{#1}{}_{#2}}  

\def\duo<#1,#2>{\langle#1,#2\rangle} 
\def\scal<#1|#2>{\langle#1\mathbin|#2\rangle} 

\newcommand{\bes}{\begin{subequations}}
\newcommand{\ees}{\end{subequations}}

\theoremstyle{plain}
\newtheorem{thm}{Theorem}           
\numberwithin{thm}{section}         
\newtheorem{prop}[thm]{Proposition} 
\newtheorem{lem}[thm]{Lemma}       
\newtheorem{con}[thm]{Conjecture}   

\theoremstyle{definition}
\newtheorem{defn}[thm]{Definition}  
\newtheorem{remark}[thm]{Remark}      
\newtheorem{definition}[thm]{Definition}      

\def\bbbone{{\mathchoice {\rm 1\mskip-4mu l} {\rm 1\mskip-4mu l}
{\rm 1\mskip-4.5mu l} {\rm 1\mskip-5mu l}}}
\def\one{\bbbone}

\renewcommand\bar{\overline}
\def\cZ{{\mathcal Z}}

\def\cD{{\mathcal D}}

\def\cW{{\mathcal W}}
\def\cR{{\mathcal R}}
\def\cN{{\mathcal N}}
\newcommand\e{\mathrm{e}}
\newcommand{\beq}{\begin{equation}}
\newcommand{\eeq}{\end{equation}}

\usepackage{graphicx}
\usepackage{floatrow}
\usepackage{tikz,tikz-3dplot}
\usetikzlibrary{calc,arrows,shapes,positioning,decorations,3d,patterns}
\tdplotsetmaincoords{80}{45}
\tdplotsetrotatedcoords{-90}{180}{-90}

\tikzset{surface/.style={draw=blue!70!black, fill=blue!40!white, fill opacity=.25}}


\makeatletter
\renewcommand{\section}{\@startsection{section}{1}{\z@}%
                       {-3.5ex \@plus -1ex \@minus -.2ex}%
                       {2.3ex \@plus.2ex}%
                       {\normalfont\large\bfseries}}
\renewcommand{\subsection}{\@startsection{subsection}{2}{\z@}%
                       {-3.25ex \@plus -1ex \@minus -.2ex}%
                       {1.5ex \@plus .2ex}%
                       {\normalfont\normalsize\bfseries}}
\makeatother

\usepackage{parskip}

\numberwithin{equation}{section}

\hyphenation{equi-va-lence equi-va-lent equi-vari-ant ge-ne-ral
ge-ne-rate ge-ne-ra-ted ge-o-des-ic geo-met-ric geo-met-ries
geo-met-ry Hamil-ton-ian Her-mit-ian ma-ni-fold ma-ni-folds
Min-kow-ski neigh-bour-hood ope-ra-tor ope-ra-tors or-tho-go-nal
pro-duct qua-drat-ic re-nor-ma-li-za-tion Rie-mann-ian semi-def-i-nite
skew-ad-joint sum-ma-bi-li-ty sum-ma-ble to-po-lo-gi-cal to-po-lo-gy
va-cuum}

\usepackage{hyperref}
\usepackage{xcolor}
\hypersetup{
    colorlinks,
    linkcolor={blue!30!black},
    citecolor={blue!30!black},
    urlcolor={blue!30!black}
}

\begin{document}

\maketitle
\begin{abstract} \noindent
We discuss two distinct operator-theoretic settings useful for
describing (or defining) propagators associated with a scalar
Klein-Gordon field on a Lorentzian manifold $M$. Typically, we
assume that $M$ is globally hyperbolic. The term {\em propagator}
here refers to any Green function or bisolution of the Klein-Gordon
equation pertinent to  Quantum Field Theory.

\vspace{4pt} \noindent
The {\em off-shell} setting is based on the Hilbert space $L^2(M)$.
It leads to the definition of the operator-theoretic Feynman and
anti-Feynman propagators, which often coincide with the so-called
in-out Feynman and out-in anti-Feynman propagator.
On some special spacetimes, the sum of the operator-theoretic Feynman
and anti-Feynman propagator equals the sum of the forward and backward
propagator. This is always true on static stable spacetimes and,
curiously, in some other cases as well.

\vspace{4pt} \noindent
The {\em on-shell} setting is  based on the Krein space $\sW_\KG$ of
solutions of the Klein-Gordon equation. It allows us to define 2-point
functions associated to two, possibly distinct, Fock states as the
Klein-Gordon kernels of projectors onto maximal uniformly positive
subspaces of  $\sW_\KG$.

\vspace{4pt} \noindent
After a general discussion, we review a number of examples. We start
with static and asymptotically static spacetimes, which are especially
well-suited for Quantum Field Theory. Then we discuss FLRW spacetimes,
reducible by a mode decomposition to 1-dimensional Schr\"odinger
operators. We compare various approaches to de Sitter space where,
curiously, the off-shell approach  gives non-physical propagators.
Finally, we discuss the universal cover of anti-de Sitter
spaces, where the on-shell approach may require boundary conditions,
unlike the off-shell approach.
\end{abstract}

{\footnotesize
\tableofcontents
}
\section{Introduction}
\label{sec:intro}

\subsection{Propagators and states}

Let $M$ be  a Lorentzian manifold  of dimension $d$ with a
{\em pseudometric tensor} $g_{\mu\nu}$. Let $Y(x)$ be a real-valued
{\em   scalar potential}, e.g. $Y(x)=m^2$. Consider a field on $M$
satisfying the {\em Klein-Gordon equation }
\begin{align}
\big(-\Box  +Y(x)\big)\phi(x)=0,
\label{kg1}
\end{align}
where $\Box:=  |g|^{-\frac12}
\partial_\mu|g|^{\frac12} g^{\mu\nu}\partial_\nu$ is the {\em d'Alembertian}.
If one wants to compute various pertinent quantities related to $\phi$,
and especially to its quantization $\hat\phi$,  one needs to know several
distributions on $M\times M$, often called ``propagators'' or
``two-point functions''.

These  distributions  fall into two categories:
{\em Green functions}  (also called {\em fundamental solutions}),
and {\em bisolutions} of the Klein-Gordon equation.
A {\em Green function} of the Klein-Gordon
equation is a distribution $G^\bullet$ on $M\times M$ satisfying
\begin{align}
  (-\square_{x}+Y(x)) G^\bullet(x,y)
  = \delta(x,y)= (-\square_{y}+Y(y)) G^\bullet(x,y),
\end{align}
where $\delta(x,y)$ denotes the distributional kernel of the identity.
A {\em bisolution} of the Klein-Gordon equation is a distribution
$G^\bullet$ on $M\times M$ satisfying
\begin{align}
  (-\square_{x}+Y(x)) G^\bullet(x,y)
  = 0= (-\square_{y}+Y(y)) G^\bullet(x,y).
\end{align}
In our paper we will colloquially use the term  ``propagator''\footnote{
This nomenclature is in accordance with the previous papers \cite{DS18,DS19,DS22}.
Note, however, that the term ``propagator'' is often reserved  only for
some of these distributions. Following the usage common in physics
we will often also use the term ``two-point function'' for
 (anti-)Wightman bisolutions.}
for various  distinguished Green functions and bisolutions of
\eqref{kg1} motivated by  QFT: the advanced and retarded propagators,
the Pauli-Jordan propagator,  Feynman and anti-Feynman propagators,
 and  Wightman and anti-Wightman two-point functions (in some
  situations also called
positive/negative frequency bisolutions). Wightman functions serve as
two-point functions of a quantum state. Hence,  abusing somewhat terminology, they
are often simply called {\em states}.

In most of our paper we assume that $M$ is globally hyperbolic.
Then one can show the existence of the \emph{retarded} (or \emph{forward})  and \emph{advanced}
 (or \emph{backward}) propagator
$G^\vee(x,x')$ and $G^\wedge(x,x')$, which are the unique Green functions
supported for $x$ in the causal future resp. causal past
of $x'$.  The bisolution  defined by
\begin{align}
G^\PJ(x,x')=G^\vee(x,x')-G^\wedge(x,x')\label{paulijordan}
\end{align}
is usually called  the \emph{Pauli-Jordan
propagator} or the \emph{commutator function}.
It also possesses a causal support. All three propagators $G^\vee$,
$G^\wedge$ and  $G^\PJ$ are useful in the Cauchy problem of the
Klein-Gordon equation.
The classical field $\phi(x)$ satisfying \eqref{kg1}
is equipped with the Poisson bracket
\[\{\phi(x),\phi(y)\}=-G^\PJ(x,y).\]
Therefore, following \cite{DS18,DS19,DS22}, $G^\vee$,
$G^\wedge$ and  $G^\PJ$ will be called {\em classical propagators}.
 In Quantum Field Theory one uses a few other propagators, whose operator-theoretic meaning -- especially on curved
spacetimes -- is the main subject of this article.

 Quantization of the classical field $\phi(x)$ is performed in two
 steps. In the first step we replace it by
 an operator valued distribution $\hat\phi(x)$, which beside the
 Klein-Gordon equation
\begin{align}
\big(-\Box  +Y(x)\big)\hat\phi(x)=0
\label{kg1.}
\end{align}
satisfies
 the so
 called {\em Peierls relation}
 \[[\hat\phi(x),\hat\phi(y)]=-\ii G^\PJ(x,y)\one.\]
 The fields $\hat\phi(x)$ generate a $*$-algebra.

In the second step one selects a representation of the fields in a
Hilbert space.  In practice, this is done by choosing  a state $\omega_\alpha$
on this $*$-algebra, that is, a positive and normalized linear functional.
Then $\omega_\alpha$ defines the GNS Hilbert space
with a distinguished  vector $\Omega_\alpha$.
One usually considers a Fock state (a pure quasifree state), where
the GNS representation has the form of a bosonic Fock space and
$\Omega_\alpha$ is its vacuum.
The expectation values in this state define four important two-point
functions:
\begin{align}
 G_{\alpha}^{(+)}(x,y) &:=
                         \big(\Omega_\alpha|\hat\phi(x)\hat\phi(y)\Omega_\alpha\big)
,\label{G1}\\
G_{\alpha}^{(-)}(x,y) &:= \big(\Omega_\alpha|\hat\phi(y)\hat\phi(x)\Omega_\alpha\big),
\label{G2}\\
\label{eq:GFa}
 G_\alpha^\F(x,y) &:= \ii
 \big(\Omega_\alpha|\T\big(\hat\phi(x)\hat\phi(y)\big)\Omega_\alpha\big),
                                \\ \label{eq:GFb}
  G_\alpha^{\overline{\F}}(x,y) &:= -\ii
\big(\Omega_\alpha| \overline\T\big(\hat\phi(x)\hat\phi(y)\big)\Omega_\alpha\big).
 \end{align}Here, $\T$ and $ \overline\T$ denote the chronological,
 resp. anti-chronological time ordering.
Note that $G_\alpha^{(+)}$ and  $G_\alpha^{(-)}$ are automatically
bisolutions; $G_\alpha^\F$ and  $G_\alpha^{ \overline\F}$ are Green functions.

It is perhaps less known that it is useful  to define mixed
propagators corresponding to two {\em different} states. Suppose
that they are given by vectors
$\Omega_\alpha$ and $\Omega_\beta$, belonging to the same
representation space, with nonzero $(\Omega_\alpha|\Omega_\beta)$.
Then we set
\begin{align}
\label{eq:zero_divided_by_zero}
 G_{\alpha,\beta}^{(+)}(x,y) &:= \frac{\big(\Omega_\alpha|\hat\phi(x)\hat\phi(y)\Omega_\beta\big)}{\big(\Omega_\alpha|\Omega_\beta\big)}
,\\\label{eq:GF0}
G_{\alpha,\beta}^{(-)}(x,y) &:=\frac{ \big(\Omega_\alpha|\hat\phi(y)\hat\phi(x)\Omega_\beta\big)}{\big(\Omega_\alpha|\Omega_\beta\big)},
\\
\label{eq:GFa.2}
  G_{\alpha,\beta}^\F(x,y) &:=\ii
\frac{ \big(\Omega_\alpha|\T\big(\hat\phi(y)\hat\phi(x)\big)\Omega_\beta\big)}{\big(\Omega_\alpha|\Omega_\beta\big)},
         \\ \label{eq:GFb.2}
  G_{\alpha,\beta}^{\overline{\F}}(x,y) &:= -
\ii
\frac{ \big(\Omega_\alpha| \overline \T\big(\hat\phi(y)\hat\phi(x)\big)\Omega_\beta\big)}{\big(\Omega_\alpha|\Omega_\beta\big)}.
\end{align}
Again, $G_{\alpha,\beta}^{(+)}$ and  $G_{\alpha,\beta}^{(-)}$ are
bisolutions; the Feynman propagator $G_{\alpha,\beta}^\F$ and the
anti-Feynman propagator $G_{\alpha,\beta}^{ \overline\F}$ are Green
functions.

We have a minor terminological problem: should
$G_{\alpha,\beta}^\F$ be called the ``$\alpha-\beta$ Feynman
propagator'' or the
``$\beta-\alpha$ Feynman propagator''? The latter choice is consistent
with the ``time arrow'': in typical applications the vacuum
$\Omega_\beta$ is first, and $\Omega_\alpha$ is later. This order is
used e.g. in \cite{FSS13} (see e.g. equation (74)).  In
  symbols we will use the former
order, in  names we will use
the latter order. So $G_{\alpha,\beta}^\F$ will be called the
{\em $\beta-\alpha$ Feynman propagator}.

The functions
$ G_{\alpha}^{(+)}(x,y) $ are used to define the GNS representation
for the state $\omega_\alpha$ and Wick-ordered
product of fields.  Wick ordering is a first step to
renormalization, which is needed to define higher order monomials of
fields.  The renormalization
procedure will not work for an arbitrary state. In practice one
assumes that it has the so-called {\em Hadamard property}, and then
renormalization  works well.
Note that this analysis can be performed on a local level,
without considering the whole spacetime.

Let us now describe the application of Feynman propagators.
Suppose we perturb the dynamics and we want to compute the
{\em scattering operator} $S_\alpha$ in the representation given by
$\Omega_\alpha$. By a standard argument going back to Dyson, often
called the {\em Wick
Theorem},  $S_\alpha$   can be expressed as a perturbation series
with terms labelled by Feynman diagrams. In order to evaluate Feynman
diagrams one needs to
replace the lines by
$G_\alpha^\F(x,y)$.

Often it is natural to compute the {\em renormalized scattering operator}
$S_{\alpha,\beta}$, acting from the representation
generated by $\Omega_\beta$ to the representation generated by
$\Omega_\alpha$.
  Actually, it is then useful to divide the scattering
operator by the overlap between the vacua, and compute
\beq
\tilde S_{\alpha,\beta}:=\frac{S_{\alpha,\beta}}{\big(\Omega_\alpha|\Omega_\beta\big)}.\eeq The
  algorithm is similar as above, except that we put
  $G_{\alpha,\beta}^\F$  at each line of a Feynman diagram.

We will see that $G_{\alpha,\beta}^\F$ can usually be defined even
if $(\Omega_\beta|\Omega_\alpha)=0$.  Therefore, we can then also
compute $\tilde S_{\alpha,\beta}$. In fact, if the theory is linear,
$\tilde S_{\alpha,\beta}$ will be usually a well-defined unbounded
quadratic form, whose integral kernel
$\tilde S_{\alpha \beta }(k_\alpha ,k_\beta )$ can be called the
``renormalized scattering amplitude''. Obviously, the unitarity of
$\tilde S_{\alpha \beta }$ is lost, hence renormalized scattering
amplitudes will not have a direct probabilistic interpretation.
However  their ratios
\begin{align}
\frac{\tilde S_{\alpha \beta }(k_\alpha ,k_\beta )}{
    \tilde S_{\alpha \beta }(k_\alpha ',k_\beta ')}
\end{align}
have a meaning: they can be used to compute {\em branching ratios} of
various processes.

If we want to compute 
$\frac{S_{\alpha,\beta}^*}{(\Omega_\beta|\Omega_\alpha)}$ we
  proceed similarly, except that Feynman propagators need to be
  replaced by anti-Feynman propagators $  G_{\beta,\alpha}^{ \overline\F}$.

One of the important problems of QFT on curved spacetimes is the choice of
a state. In Minkowski space and with $Y(x)=m^2\geq0$ there is a natural
state, described in all textbooks on QFT.
More generally, every   stationary and stable Klein-Gordon equation
possesses a natural state. Stationarity means that one can identify $M$ with
$\mathbb{R}\times\Sigma$ so that $g^{\mu\nu}$ and
$Y$ are independent of $t\in\mathbb{R}$, $\Sigma$ is spacelike and
$\partial_t$ is timelike.
Stability means that the corresponding classical
 Hamiltonian is bounded from below.
Again,  requiring that the state is invariant  under the
time evolution,
and in the GNS representation the dynamics is
implemented by a positive quantum Hamiltonian fixes the state uniquely.
The one-particle
Hilbert space is then  taken to be the {\em positive frequency space}, that is, the spectral
subspace of the generator of the evolution corresponding to the positive
part of the spectrum.

On generic spacetimes there are no distinguished states.
There is however one class of spacetimes, particularly well
adapted to QFT, where there are {\em two} distinguished states.
These are spacetimes with asymptotically stationary and
stable future and past. Such spacetimes possess two distinguished
states: the {\em in-state}  and the {\em out-state}, given by  vectors
$\Omega_-$ and $\Omega_+$. Obviously, they define two pairs of two-point
functions
\begin{align}
 G_{\pm}^{(+)}(x,x')
 &= \big(\Omega_{\pm}|\hat\phi(x)\hat\phi(x')\Omega_{\pm}\big),\\
   G_{\pm}^{(-)}(x,x')
 &= \big(\Omega_{\pm}|\hat\phi(x')\hat\phi(x)\Omega_{\pm}\big).
\end{align}
One can use them to define two GNS representations acting on two Fock
spaces.

More interesting are however the following mixed Feynman propagators: the
\emph{in-out Feynman propagator}
  $G_{+-}^\F$ and the
\emph{out-in  anti-Feynman propagator} $G_{-+}^{\overline{\F}}$:
\begin{align}\label{fyn1}
 G_{+-}^\F(x,x')
  &= \ii \frac{\big(\Omega_{+}|\T\big(\hat\phi(x)\hat\phi(x')\big)\Omega_{-}\big)
 }{(\Omega_{+}|\Omega_{-})}, \\ \label{fyn2}
  G_{-+}^{\overline{\F}}(x,x')
 &= -\ii \frac{\big(\Omega_{-}|\overline{\T}\big(\hat\phi(x)\hat\phi(x')\big)\Omega_{+}\big)
 }{(\Omega_{-}|\Omega_{+})}.
\end{align}
We will see below that $G_{+-}^\F$ and $G_{-+}^{\overline{\F}}$ play an
important role in applications, and possess an alternative definition that
works well even if the overlap $(\Omega_{-}|\Omega_{+})$ is formally zero.

In a generic situation, \eqref{fyn1}, \eqref{fyn2} and \eqref{Scat2} may be
ill-defined because the overlap $(\Omega_{+}|\Omega_{-})$ is
zero.
Fortunately, as we will see, one can define $G_{+-}^\F$ and
$G_{-+}^{\overline{\F}}$ independently via operator theory,
without a division by zero.

On an asymptotically stationary and stable spacetime it is natural to
use for the initial, resp. final  representation the Hilbert space
generated by $\Omega_-$, resp.
$\Omega_+$.
Thus the main objects of interest are \begin{align}
\frac{S_{+-}}
{(\Omega_{+}|\Omega_{-})},&\quad
\frac{S_{+-}^*}
{(\Omega_{-}|\Omega_{+})}. \label{Scat2}\end{align}
They can be evaluated using
$G_{+-}^\F$ and $  G_{-+}^{\overline{\F}}$, even  if
 ${(\Omega_{+}|\Omega_{-})}=0$.

The main topic of the present article is  how to define
various propagators using tools of operator
theory. We will see in particular that one does not need to worry about
dividing by the overlap $(\Omega_\alpha|\Omega_\beta)$. It is
possible to give a purely operator theoretic definition of
\eqref{eq:zero_divided_by_zero},  \eqref{eq:GF0}, \eqref{eq:GFa.2} and
\eqref{eq:GFb.2}, which works  also if
$(\Omega_{\alpha}|\Omega_{\beta})=0$.

\subsection{Operator-theoretic interpretations of propagators}
There are two distinct operator-theoretic settings related to the
Klein-Gordon equation, which are useful in defining and computing
propagators: the space of solutions to
\eqref{kg1}, which we denote $\cW_\KG$, and the Hilbert
space $L^2(M,|g|^{\frac12})$.  The space $\cW_\KG$ may be called
the \emph{on-shell space} and $L^2(M,|g|^{\frac12})$ the \emph{off-shell space}.

To define the on-shell space one usually  starts from the space of
complex space-compact solutions to \eqref{kg1},
denoted $\cW_\sc$. This space is endowed with the so-called
\emph{Klein-Gordon charge form}---an indefinite sesquilinear form obtained
by integrating the natural current over an arbitrary Cauchy surface.
 In the generic case, this space does not have a
distinguished positive scalar product. Nevertheless,  one can fix
  a family of equivalent positive scalar products compatible
  with the Klein-Gordon form. Then, for technical
  reasons, we extend $\cW_\sc$ to a complete space $\cW_\KG$,
which has the structure of Krein space:  a space
 with a Hilbertian topology equipped with a distinguished
indefinite  form given by a bounded self-adjoint involution. Using elements of the theory of
 Krein spaces one is able to give meaning to
the quantities \eqref{eq:zero_divided_by_zero},  \eqref{eq:GF0},
\eqref{eq:GFa.2} and \eqref{eq:GFb.2}, avoiding expressions of the type
$\tfrac{0}{0}$. This is a big advantage of the
operator-theoretic viewpoint.

In practice, it is convenient to represent the space $\cW_\KG$ in terms
of Cauchy data. More precisely, we first identify
$M=\mathbb{R}\times\Sigma$, where $\Sigma$ has a spatial
signature  and $\partial_t$ a temporal signature. Each element of $\cW_\KG$ is uniquely determined by its
value at $\{t\}\times\Sigma$ and its temporal derivative. This allows us to
describe elements of $\cW_\KG$ as pairs of  functions on $\Sigma$.

The space $\cW_\KG$
is not the only operator-theoretic setting for propagators.
There is another one, provided by the Hilbert space is
$L^2(M,|g|^{\frac12})$.
At first many readers may protest -- this space does not describe
physically relevant states. However, as we will see it is very useful
 for the computation of propagators.

It can be easily   shown that on Minkowski space the
usual Feynman and anti-Feynman propagator are the boundary values of
the resolvent kernel of the Klein-Gordon operator on $L^2(\mathbb{R}^{1,d-1})$:
\begin{align}\label{opero1}
  G^\F(x,y)&:=\lim_{\epsilon\searrow0}\frac{1}{(-\Box+m^2+\ii\epsilon)}(x,y),\\
  G^{ \overline\F}(x,y)&:=\lim_{\epsilon\searrow0}\frac{1}{(-\Box+m^2-\ii\epsilon)}(x,y).\label{opero2}
\end{align}
It is not difficult to see that an analogous statement is true on  stationary
stable spacetimes.

More generally, suppose  we use the path integral
formalism to define perturbative  QFT. The usual prescription says that
one should split the action in a quadratic part and the interaction,
and then derive Feynman diagrams from the path integral. It is easy to
see that this prescription formally yields \eqref{opero1} and \eqref{opero2} as the
expressions corresponding to the lines in Feynman diagrams.
This suggests an  alternative  definition of Feynman and
anti-Feynman propagator, which we describe below.

It is clear that for real-valued $Y(x)$, $-\Box+Y(x)$ is a Hermitian
operator on $C_c^\infty(M)$ in the sense of $L^2(M,|g|^{\frac12})$. Suppose that it
is essentially self-adjoint. Then its spectrum is contained in $\mathbb{R}$ and we
may define the  {\em operator-theoretic  Feynman and anti-Feynman propagator}
$ G_\op^\F(x,y)$ and $G_\op^{ \overline\F}(x,y)$ via
\begin{align}\label{opero1.}
  G_\op^\F(x,y)&:=\lim_{\epsilon\searrow0}\frac{1}{(-\Box+Y(x)+\ii\epsilon)}(x,y),\\
  G_\op^{ \overline\F}(x,y)&:=\lim_{\epsilon\searrow0}\frac{1}{(-\Box+Y(x)-\ii\epsilon)}(x,y),\label{opero2.}
\end{align}
provided that the distributional limits on the right-hand side exist.

Note  that there is no guarantee that the limits \eqref{opero1.}
and \eqref{opero2.} exist. For instance, if the Klein-Gordon operator
has a zero eigenvalue, they fail to exist.

For static stable Klein-Gordon operators the existence of
\eqref{opero1.} and \eqref{opero2.} is proven in \cite{DS18}.
There are heuristic arguments \cite{DS19,DS22} showing that the above
definitions work   on asymptotically stationary stable
spacetimes and
the in-out Feynman and the out-in anti-Feynman propagator
coincide with the
operator-theoretic Feynman propagators:
\begin{align}\label{pqp1}
 G^\F_{\rm op} (x,y)= G_{+-}^\F(x,y),\\
 G^{\overline{\F}}_{\rm op}(x,y) =  G_{-+}^{\overline{\F}}(x,y). \label{pqp2}
\end{align}
These identities can be viewed as a justification of the
path-integral approach to QFT.

From the rigorous point of view, the definitions \eqref{opero1.} and \eqref{opero2.} raise difficult
mathematical questions.
First, the essential self-adjointness for generic spacetimes is a
nontrivial problem. For asymptotically Minkowskian spacetimes
satisfying some non-trapping conditions it has been proven in
\cite{V20,NT23,NT23_2}. Under similar conditions one can show that
\eqref{pqp1} and \eqref{pqp2} are true.

Propagators satisfy various identities. We already mentioned
\eqref{paulijordan}, which defines the Pauli-Jordan
propagator. Another identity universally true is
\begin{align}
G^\PJ(x,x')=\ii G_{\alpha,\beta}^{(+)}(x,x')-\ii G_{\alpha,\beta}^{(-)}(x,x'),\label{paulijordan00}
\end{align}
 valid for any pair of Fock states $\omega_\alpha,\omega_\beta$.

On Minkowski space with $Y(x)=m^2\geq0$,
and more generally for a stationary stable  Klein-Gordon equation, we
have the identity
\begin{align}\label{special}
G_\op^\F + G_\op^{\overline{\F}}&= G^{\lor} + G^{\land}.
\end{align}
 In particular, the support of $G_\op^\F + G_\op^{\overline{\F}}$ is
causal.

\begin{definition}
 We will say that the Klein-Gordon equation is {\em special} if one
 can define $G_\op^\F$  and $G_\op^{\overline{\F}}$ (which we expect
 to be true in typical situations) and
 the support of $G_\op^\F + G_\op^{\overline{\F}}$ is
causal.
We will then also say that the {\em specialty condition} is satisfied.
\end{definition}

 Special Klein-Gordon equations have the following
advantage. One may expect that it is in many situations quite
simple to compute the distributions $G_\op^\F$  and $G_\op^{\overline{\F}}$
using operator-theoretic tools. Then, splitting
$G_\op^\F + G_\op^{\overline{\F}} $ into two distributions,
one supported in the causal future and the other supported in
the causal past, we may determine $G^{\lor}$ and $G^{\land}$.

 The specialty condition is generically violated. It is however very
  useful if it holds. We will discuss some
  interesting cases when it is true.

\subsection{Outline of the paper}
In Section \ref{ssc:props_curved}, we describe in detail
both basic operator-theoretic settings to QFT on curved spacetimes that we
outlined in the introduction: the on-shell space $\cW_\KG$ and the
off-shell space $L^2(M)$.

The remaining sections are dedicated to the discussion of various examples
of spacetimes with largely different properties:
\begin{enumerate}
\item
First we discuss stationary spacetimes. Here one can give fairly
explicit formulas for   the Pauli-Jordan bisolution, and the four
  basic Green functions: the avanced/retarded propagators, and
  operator-theoretic (anti-)Feynman propagators.
  If in addition
  the Klein-Gordon equation is stable, then there is a distinguished
Fock  state. The corresponding 2-point functions and
(anti-)Feynman propagators are easy to describe. The
specialty condition is fulfilled and the (anti-)Feynman propagators
defined in the off-shell and on-shell formalism coincide.

In the tachyonic case, that is, if the Hamiltonian is not positive,
the speciality condition is violated, and we cannot define
positive/negative  frequency bisolutions. This includes the Minkowski
space with imaginary mass, that is, $m^2<0$. Of course, this
case is not very physical, but it is occasionally discussed in the
literature.

\item Spacetimes asymptotically stationary and stable in the past and
  future form a class well suited for the formalism of QFT. After
  identifying $M$ with $\mathbb{R}\times \Sigma$, where $\mathbb{R}$
  describes time and $\Sigma$ is a Cauchy surface with a time dependent
  Riemannian metric, one can
  give a fairly explicit description of all propagators using the time
  evolution of solutions, as described in \cite{DS22}. Remarkably, the
in-out  Feynman and out-in anti-Feynman propagator are well
defined---this is a non-trivial statement proven in \cite{DS22}.
The existence of the operator-theoretic (anti-)Feynman propagator
is a difficult mathematical problem, solved only under some strong
  assumptions. There are heuristic arguments showing that, if they
  exist, they coincide with the in-out  Feynman and  out-in anti-Feynman
  propagator.
As we mentioned above, the
specialty condition is rarely fulfilled.
 \item The Klein-Gordon operator on $1+0$-dimensional spacetimes
   essentially reduces to a one-dimensional
 Schrödinger operator. The corresponding propagators are
 well-known objects from the theory of such operators.
The  speciality condition is fulfilled if and only if
the scattering operator is reflectionless. Obviously, it is satisfied if the potential is a
constant. But curiously, as is well known, there exist potentials
which are reflectionless at all energies. The best known such
potential is
\beq-\frac{\mu^2-\frac14}{\cosh^2 x}\label{scarfi}
\eeq
for half-integer $\mu$.
\item Spacetimes, whose pseudometric depends on time only through a
  conformal factor,  are usually called
  Friedmann-Lemaître-Robertson-Walker (FLRW) spacetimes. In such
  spacetimes, after diagonalization of the spatial Laplacian, or in
  other words, after decomposing it into ``modes'', the Klein-Gordon
  equation can be reduced to the $1+0$-dimensional setting.
Thus in principle one can write all propagators as the direct sum
 or integral of
propagators for each mode. In particular, the Klein-Gordon equation is
special if each mode is reflectionless.

\item The theory of propagators on the $d$-dimensional de Sitter space $\dS_d$
  is especially rich and surprising.

The de Sitter space can be interpreted as the ``Wick rotated''
  $d$-dimensional sphere. Analytically continuing the Green
  function of the sphere  in the usual spherical
  coordinates we obtain a certain Feynman and anti-Feynman propagator.
  For  $m^2\geq (\frac{d-1}{2})^2$,  they can be used to write down
  the Wightman two-point functions of a state, as well as the classical propagators. This
  state is usually called the Euclidean (or Bunch-Davies) state and is believed to be
  the physical choice on the de Sitter space, because it is Hadamard.

  The d'Alembertian on de Sitter space is essentially self-adjoint
  on smooth compactly supported functions. This is a special case of a
  general mathematical theorem saying that invariant differential
  operators on maximally symmetric pseudo-Riemannian manifolds are
  essentially self-adjoint. In our paper we compute the integral
  kernel of resolvent of the d'Alembertian  on $\dS_d$. Taking its
  boundary values yields the operator-theoretic Feynman and anti-Feynman
  propagator. Curiously, they are different from the Euclidean Feynman
  and anti-Feynman propagator. The specialty condition is satisfied in odd
  dimensions; it is not true in even dimensions.

 Our derivation of the formula for the resolvent is based on an
argument which, while in our opinion convincing and elegant, is not fully
rigorous. Alternative, more complicated proofs of our formula are possible,
e.g. following the approach of \cite{FSS13} based on mode decompositions.

  It is well-known that all de Sitter invariant states can be
  described and expressed in terms of Gegenbauer functions.
  They are
  usually called  {\em $\alpha$-vacua}, where $\alpha$ is a complex
  parameter that can be used to parametrize them.
  $\alpha=0$
  corresponds  to the Euclidean vacuum. All other
  $\alpha$-vacua are not Hadamard.

   The de Sitter space  is not asymptotically stationary.
   However, it possesses two distinguished states, which can be called
   the in-state and the-out
state. The former has an incoming behavior in the past,
the latter is outgoing in the future. The operator theoretic Feynman
and anti-Feynman
propagators satisfy the identitities \eqref{pqp1} and
\eqref{pqp2}. In odd dimensions the in-state coincides with the
out-state. In even dimensions this is not the case. In all dimensions,
the in-state and out-state are distinct from Euclidean state.

De Sitter space is a FLRW spacetime (with a conformal factor that
blows up exponentially). Therefore, it is possible to decompose the
Klein-Gordon equation into modes. In each mode one obtains the 1-dimensional
Schr\"odinger operator with the potential \eqref{scarfi},
 where $\mu $
depends on the dimension and the degree of spherical harmonics. $\mu $
is a half-integer for odd
dimensions and an integer for even dimensions. This is another way to
see that the Klein-Gordon equation in odd dimensions is special and in
even dimension is not.

 One can define retarded and advanced
propagators for all values of $m^2\in\mathbb{R}$. However,
 the case $m^2<   (\frac{d-1}{2})^2$ seems not physical. In fact,
 below $ (\frac{d-1}{2})^2$ the spectrum of the
d'Alembertian is discrete. Operator-theoretic Feynman and anti-Feynman propagators
are well defined (and identical) outside of this spectrum.
As can be expected, the specialty condition is then violated.

\item  The universal cover of anti-de Sitter space $\tAdS_d$
  is another maximally symmetric spacetime, where one can compute all
  propagators.
It is a stationary spacetime, which is not globally hyperbolic:
it possesses geodesics that escape to the spatial boundary in a
finite proper time.
  One can apply two approaches to define the propagators on the
  universal cover of anti-de Sitter space.

  The first approach uses $L^2(\tAdS_d)$.
 The d'Alembertian  is essentially self-adjoint---there is no need to
 fix boundary conditions. We compute the resolvent of the
   d'Alembertian and define the operator-theoretic Feynman and
   anti-Feynman propagators as its limits. (Again, our computation is based
   on a conjecture and thus not fully rigorous.)
   If $m^2>-(\frac{d-1}{2})^2$, then their sum has a causal support, so one can
   define the retarded and advanced propagator by  splitting
this sum. In
particular, the specialty condition is satisfied.

Alternatively, one can use the evolution of the Cauchy data. For $m^2\geq
-(\frac{d-1}{2})^2+1$ this evolution is uniquely defined--one does not
need to specify boundary conditions. For $m^2<-(\frac{d-1}{2})^2+1$
boundary conditions are needed. For $-(\frac{d-1}{2})^2\leq
m^2<-(\frac{d-1}{2})^2+1$ there exists a distinguished boundary
condition (corresponding to the Friedrichs extension), which agrees
with the propagators obtained from the operator-theoretic Feynman
propagator. In particular, we have  distinguished retarded and
advanced propagators. For $m^2 <-(\frac{d-1}{2})^2$ there are no distinguished
boundary conditions at spatial infinity.
Thus retarded and advanced propagators are non-unique and none is distinguished.
\end{enumerate}

 Pertinent elements of the theory of Krein spaces are discussed in Appendix
\ref{app:Involutions_and_projections}.
Propagators on de Sitter and anti-de Sitter space can be
described explicitly in terms of special functions (Gegenbauer
functions). We introduce their relevant properties in Appendix
\ref{app:gegenbauer}.

\begin{remark}
 We restrict our considerations to a real scalar field $\hat\phi(x)$,
 but they can be generalized to a complex scalar field in a fairly
 straightforward manner. One needs then two pairs of creation
 and annihilation operators. Both the
 real and the complex formalism are treated in \cite{DS22}.

\end{remark}

\subsection{Literature about the subject}

Quantum Field Theory on curved spacetimes is one of the most discussed
and developed areas of theoretical physics. It has enormous literature,
including numerous standard textbooks \cite{BirrellDavies,PT,BF,Gerard}.
At first glance, our paper may appear to be a review article, as it presents
various facts and concepts from existing literature. Surprisingly, however,
many of ideas of our paper seem to be clearly articulated here for the first time.
Let us in particular mention:
\begin{itemize}
 \item the description of the two operator-theoretic
setups in Section \ref{ssc:props_curved} and Appendix
\ref{app:Involutions_and_projections}, which is a continuation
of the works \cite{DS18,DS19,DS22} of D. Siemssen and one of the authors
(JD);
\item the comparison of four different approaches to the Klein-Gordon
equation on de Sitter space from Section \ref{ssc:dS},
where we also present a list of new formulas e.g. for correlation
functions between different states;
\item the
discussions of the ``speciality condition'' throughout all sections,
which provides a useful tool for computations.
\end{itemize}

We start our review of the literature with the
``classical propagators'', that is, the retarded and
advanced propagator, and the Pauli-Jordan function. They belong to
standard knowledge and are
well-studied in standard references. In the  massless case on the flat
 $\mathbb{R}^{1,3}$ the retarded and advanced propagators are well
 known from classical electrodynamics, and are sometimes called
the {\em Lienard-Wiechert potentials}. In the massive flat case their expressions
in terms of Hankel functions are contained in many textbooks.
The Cauchy problem of the wave equation on curved
spacetimes was studied already  by Hadamard \cite{hadamard}, at least locally.
A recent reference to this subject  on arbitrary globally
hyperbolic manifolds is
the book by B\"ar, Ginoux and Pf\"affle \cite{bar}. In the
introduction to this book one reads: ``Tracing back the references
[on the uniqueness and existence  of linear wave equation on
lorentzian manifolds] one typically ends at unpublished
lecture notes of Leray \cite{leray} or their exposition by
Choquet-Bruhat \cite{choquet-bruhat}.''

In the literature the Pauli-Jordan
function is often called the commutator function or (recently, in the
mathematics oriented literature) the
causal propagator, \cite{BF,Gerard}. Note, however, that the latter name can lead to
confusion: in \cite{bogoliubov} the Feynman propagator is called the
causal Green function.

Propagators on the Minkowski space, including ``non-classical''
ones,
are well-known
from various textbooks on Quantum Field Theory (especially the
old-fashioned ones). For instance,  Appendix 2 of Bogoliubov--Shirkov
\cite{bogoliubov} and Appendix C of Bjorken--Drell \cite{bjorken}
contain expressions for these functions in the position space in the
physical case of $\mathbb{R}^{1,3}$, and discuss  conventions used by various
authors.

``Non-classical'' propagators are expectation values of
products of two fields. Those without time-ordering, sometimes called
Wightman functions, are ubiquitous in the mathematical
literature, since they are needed to define the GNS
representation and multiplication in appropriately defined local algebras. One of major questions, which is asked in various
papers  is whether they satisfy the Hadamard condition.

Expectation values of time-ordered fields, that is, Feynman
propagators, are needed when we want to
find scattering amplitudes. They often appear  in the physics
literature as
mixed two-point
functions, typically with the  out-vacuum on the left and in-vacuum on
the right. For instance, in Birrell-Davies \cite{BirrellDavies} in (9.13) one finds
the following definition of Green functions:
\beq
\tau(x_1,x_2...x_m)=
\frac{\langle \mathrm{out},0|
  T(\phi(x_1)\phi(x_2)...\phi(x_m)|0,\mathrm{in}\rangle}
{\langle \mathrm{out},0|0,\mathrm{in}\rangle}.\eeq
Then the authors write: ``...unlike the case of Minkowski space where
$|0,\mathrm{out}\rangle=|0,\mathrm{in}\rangle$    (up to a phase
factor), the vacuum $|0,\mathrm{in}\rangle$    in curved spacetime
will not in general be stable: $\langle \mathrm{out},0|0,\mathrm{in}\rangle\neq1$.''
In particular the relationship \eqref{pqp1}, which says that the
``in-out Feynman propagator''$G_{+-}^\F$ coincides with the Feynman propagator
formally computed  in the path-integral approach (which can be
interpreted as $G_\op^\F$) is implicitly contained in
\cite{BirrellDavies} (and in general in the physics literature).
 Elements of this philosophy are also found
in \cite{rumpf1,RumpfdS}.

In the more recent rigorous literature, mixed (two-state)
propagators are almost absent. The majority of recent works,
 for example the seminal papers
\cite{brunetti-fredenhagen,hollands-wald1}, emphasize
the local point of view. Their usual goal is to construct
a net of local algebras, for which it is enough to fix a single
state, preferably Hadamard, which can be done locally.

A systematic rigorous study of various natural propagators on curved
spacetimes was undertaken in the series of papers by one of the
authors (JD)  with a coauthor \cite{DS18,DS19,DS22}. In particular,
the construction of the distinguished Feynman propagator by methods of
Krein spaces on an asymptotically stationary stable spacetimes is
contained in \cite{DS22}. A construction of the same Feynman
propagator on a (more narrow) class of asymptotically Minkowskian
spaces by methods of pseudodifferential calculus was given by G\'erard
and Wrochna \cite{GW19,Gerard}.

There exist many works, especially  in the PDE
literature, about {\em parametrices} of the
Klein-Gordon equations, that is, inverses modulo a {\em smoothing
  operator}. A celebrated paper with this philosophy is the work by
Duistermaat and H\"ormander \cite{DuHor}, which describes four natural
parametrices: retarded, advanced, Feynman and anti-Feynman. Such
parametrices are enough in the study of propagation of singularities,
and they do not require a global knowledge of the spacetime. Anyway,
most of the literature using parametrices seems restricted to retarded
and advanced propagators. See e.g. \cite{CDD20,CV22} where parametrices
involving Fourier integral operators are used as approximations of exact
retarded and advanced propagators.

Similarly, it is often argued in mathematical physics papers that it is
enough to know a two-point function only up to a {\em smooth term}. This
is sufficient if we want to prove the existence of renormalized powers
of fields \cite{brunetti-fredenhagen}.
In our paper we are interested in {\em exact} Green functions and
bisolutions, which are needed to compute scattering amplitudes exactly.

The usefulness of the setting of Krein spaces for the Klein-Gordon
equation has been known for a long time
\cite{Ves70,Naj79,Naj80,Ves90,GGH13,GGH15}.
Note that in some of these papers  the ``Klein-Gordon operator'' means the
``generator of the Klein-Gordon evolution'', denoted in our paper by
$B(t)$. For us the ``Klein-Gordon'' operator is the operator on
$L^2(M)$ whose resolvent appears in \eqref{opero1.} and \eqref{opero2.}.

To our knowledge the above papers miss the relevance of
the Krein setting for two-state Wightman  functions and in-out
Feynman propagators. This seems to have been noted only in \cite{DS22}.

The rigorous literature about the off-shell approach to the Klein-Gordon
equation seems very scarce. To the papers about self-adjointness of
Klein-Gordon operators mentioned above \cite{V20,NT23,NT23_2,DS22}, one
could add \cite{kaminski} about pathological examples and \cite{BN}
about the Wick rotation on a background of an ADM metric.

We will discuss the literature about the examples that we present
in Sections \ref{sec:asymp_stat}, \ref{sec:FLRW}, \ref{ssc:dS}
and \ref{sec:AdS} in the respective sections.

\section{Propagators in curved spacetimes}
\label{ssc:props_curved}

\subsection{Klein-Gordon equation}

In this section we will describe how to generalize the well-known  propagators
from ${\mathbb R}^{1,d-1}$ to generic spacetimes.

Consider  a  Lorentzian manifold   $M$  of dimension $d$ with
{\em pseudometric tensor}
$g_{\mu\nu}$. Define the {\em d'Alembertian}
\begin{align}
\label{klein}
\Box:=   |g|^{-\frac12} \partial_\mu|g|^{\frac12} g^{\mu\nu}
\partial_\nu.
\end{align}
Note that we have
\beq
  \Box=g^{\mu\nu}\nabla_\mu\nabla_\nu,\label{klein1}\eeq
  where the left $\nabla$
is the covariant derivative on covectors, and the right $\nabla$ on
scalars   (which coincides with the usual derivative $\partial$).
We also define the {\em Klein-Gordon operator}
$-\Box+Y(x),$ where $Y(x)$ is
an $x$-dependent, real-valued {\em   scalar potential}.
Most of the time we will assume that $Y(x)=m^2$, so that
the Klein-Gordon operator is $-\Box+m^2$.

Note that the d'Alembertian
\eqref{klein} acts on scalar functions. It is sometimes
more convenient to
replace it by the d'Alembertian  in the half-density formalism, that
is
\begin{align} \Box_{\frac12}&:=|g|^{\frac14}\Box |g|^{-\frac14}
=   |g|^{-\frac14} \partial_\mu|g|^{\frac12} g^{\mu\nu}
    \partial_\nu |g|^{-\frac14}
                              .\label{half}
\end{align}
In the half-density formalism the space $L^2(M,|g|^{\frac12})$ is
replaced by $L^2(M)$, where we just take the Lebesgue measure with respect to
given coordinates. We will write $\Box$ for $\Box_{\frac12}$ when it is clear
from the context that we use the half-density formalism. See e.g. \cite{DS22}.

\subsection{Green functions and bisolutions}
Suppose that we have a continuous  sesquilinear form
\begin{align}
  C_\mathrm{c}^\infty(M)\times
  C_\mathrm{c}^\infty(M)\ni(f_1,f_2)\mapsto
 ( f_1|A f_2)\in\CC.\label{afa}
 \end{align}
  By the Schwartz Kernel Theorem there exists a distribution
  $A(\cdot,\cdot)$ on $M\times M$, so that \eqref{afa} in local
  coordinates can be written as
  \beq
   (
  f_1|Af_2)=\int\int \overline{f_1(x)}|g|^{\frac12}(x)A(x,y)
  |g|^\frac12(y)f_2(y)\dd x\dd y.\eeq
 The distribution   $A(\cdot,\cdot)$ will be called the {\em integral
 kernel of $A$}. Note that we use the integral notation for distributions
 and that we say ``integral kernel''  for $A(\cdot,\cdot)$ even if it
 is a distribution.

  Actually, the  above definition applies only to the scalar formalism.
If we use the half-density formalism, then the integral kernel is
different:
\beq\label{pass}
A_\frac12(x,y):=|g|^{\frac14}(x)A(x,y) |g|^{\frac14}(y).\eeq

For instance, the integral kernel of the identity is in local coordinates
\begin{align}
  \text{ in the scalar formalism }& \quad|g|^{-\frac12}(x)\delta(x-y),\\
  \text{ in the half-density formalism }&\quad  \delta(x-y).
\end{align}

The definition of a Green function (of the Klein-Gordon operator)
will have also two versions. It is a distribution on $M\times M$
satisfying
\begin{align}\label{green1}
  \text{ in the scalar formalism }&
                                \quad\big(-\Box+Y(x)\big)G_\bullet(x,y)=
                                |g|^{-\frac12}(x)\delta(x-y),\\
  \text{ in the half-density formalism }&\quad \quad\big(-\Box_{\frac12}+Y(x)\big)G_{\bullet,\frac12}(x,y)=
                               \delta(x-y),
                                     \label{green2} \end{align}
and analogous conditions with $x$ replaced by $y$. We also have
similar definitions of bisolutions  (of the Klein-Gordon operator),
where the right hand sides of \eqref{green1}
and \eqref{green2} are zero. One can pass from the scalar formalism
to the half-density formalism as in \eqref{pass}:
\beq\label{pass1}
G_{\bullet,\frac12}(x,y):=|g|^{\frac14}(x)G_\bullet(x,y) |g|^{\frac14}(y).\eeq

\subsection{Classical propagators}
Suppose that $M$ is
globally hyperbolic.
It is well-known \cite{bar} that there exist unique fundamental solutions
$G^{\lor}(x,y)$ and $G^{\land}(x,y)$ of the Klein-Gordon equation
which have future- respectively past-directed causal support:
\begin{align}
(x,y)\in\supp  G^{\lor} &\quad\Rightarrow\quad
\exists \textup{ future oriented causal curve from } y \textup{ to } x,
\\ \notag
(x,y)\in\supp  G^{\land} &\quad\Rightarrow\quad
\exists \textup{ future oriented causal curve from } x \textup{ to } y.
\end{align}

$G^{\lor}(x,y)$ is called the \emph{forward} (or \emph{retarded})
\emph{propagator}, $G^{\land}(x,y)$ is called the \emph{backward}
(or \emph{advanced}) \emph{propagator}. Their difference, which
obviously is a bisolution of the Klein-Gordon equation, is called
the \emph{Pauli-Jordan propagator}  (or {\em commutator function)}
\begin{align}\label{paulijordan1}
 G^\PJ (x,y):= G^\lor(x,y)-G^\land(x,y).
\end{align}
These three propagators are sometimes called jointly
\emph{classical propagators} \cite{DS18,DS22}.

Identify $M$ with $\mathbb{R}\times\Sigma$, where for $t\in\RR$
  the metric on
${t}\times\Sigma$ is Riemannian and $\partial_{t}$ is timelike.  Such
an identification is always possible for globally hyperbolic
manifolds. We will then say that we fixed a {\em time variable on $M$}.
(We will discuss this in more detail
in Subsection \ref{Classical propagators from evolution equations}).
Suppose that we can multiply the distribution
$ G^\PJ (x,y)$ by the discontinuous function $\theta(x^0-y^0)$. Again
in Subsection \ref{Classical propagators from evolution equations}, we
will see a rather general setting where this is rigorously allowed. Then
we can retrieve the advanced and retarded Green functions from the
Pauli-Jordan propagator:
\begin{align} G^\lor(x,y)&=
                           \theta(x^0-y^0)G^\PJ(x,y),
\\
  G^\land(x,y)&:=-\theta(y^0-x^0)G^\PJ(x,y). \label{paulijordan3}
                 \end{align}

\subsection{Quantum fields and non-classical propagators}
We still assume that $M$ is globally hyperbolic.
Consider a real scalar quantum field $\hat\phi(x)=\hat \phi(x)^*$ on $M$
satisfying
\begin{align}
 \big(-\Box+Y(x)\big)\hat\phi(x) &=0, \\ \notag
[\hat\phi(x),\hat\phi(y)]&=-\ii G^\PJ(x,y)\one.
\end{align}

More precisely, we assume some elements of the Wightman axioms:
We suppose that $\cD$ is a complex vector space equipped with
a scalar product $(\cdot|\cdot)$ such that
\beq C_\mathrm{c}^\infty(M)\ni f\mapsto \hat\phi[f]\eeq
is a  distribution with values in linear operators
from $\cD$ to $\cD$ satisfying
\begin{align}
  \hat \phi\big[\big(-\Box+Y(x)\big)f\big]&=0,\\
  \big[\hat \phi[ f_1],\hat \phi[
  f_2]\big]&=-\ii\int\int G^\PJ(x,y)f_1(x)f_2(y)\dd x\dd y,\\
(   \hat \phi[\bar f]\Phi|\Psi)&=(\Phi|\hat \phi[ f]\Psi),\\
  C_\mathrm{c}^\infty(M)\ni f\mapsto
                               (\Phi|\hat\phi[f]\Psi)&\text{
                               is continuous for }\Phi,\Psi\in\cD.
  \end{align}
(We do not require that $\cD$ is complete).

 Consider $\Omega_\alpha,\Omega_\beta\in\cD$ with
  $\|\Omega_\alpha\|=\|\Omega_\beta\|=1$ and
$(\Omega_\alpha|\Omega_\beta)\neq0$.
By the Schwartz Kernel Theorem
there exist distributions $G_{\alpha,\beta}^{(+)}(\cdot,\cdot)$ and
$G_{\alpha,\beta}^{(-)}(\cdot,\cdot)$   on
  $M\times M$ such that
  \begin{align}\label{2point1}
  \int\int f_1(x)G_{\alpha,\beta}^{(+)}(x,y)f_2(y)\dd x\dd
    y=\frac{\big(\Omega_\alpha|\hat\phi[f_1]\hat\phi[f_2]\Omega_\beta)}{(\Omega_\alpha|\Omega_\beta)},\\\label{2point2}
  \int\int f_1(x)G_{\alpha,\beta}^{(-)}(x,y)f_2(y)\dd x\dd y=\frac{\big(\Omega_\alpha|\hat\phi[f_2]\hat\phi[f_1]\Omega_\beta)}{(\Omega_\alpha|\Omega_\beta)}.\end{align}
 Note that both $ G_{\alpha,\beta}^{(+)}$ and $G_{\alpha,\beta}^{(-)}$
 are bisolutions and they satisfy
 \beq G_{\alpha,\beta}^{(+)}-G_{\alpha,\beta}^{(-)}=-\ii G^\PJ.
\label{paulijordan2}\eeq

 In the special case $\Omega_\alpha=\Omega_\beta$ we will write
\begin{align} G_\alpha^{(+)}&:=G_{\alpha,\alpha}^{(+)},\label{G1a.}\\\label{G2a.}
   G_\alpha^{(-)}&:=G_{\alpha,\alpha}^{(-)}.\end{align}

$\hat\phi[f]$ are called {\em smeared fields}. In the physics
literature one introduce $\hat\phi(x)$, the {\em field at point $x\in M$} and one writes
\beq\hat\phi[f]=\int\hat\phi(x)f(x)\dd x,\eeq
so that \eqref{G1a.}, \eqref{G2a.},
\eqref{2point1} and \eqref{2point2} are rigorous versions of
the heuristic definitions \eqref{G1}, \eqref{G2},
\eqref{eq:zero_divided_by_zero},
\eqref{eq:GF0} from the introduction.

$G_{\alpha}^{(+)}$ and $G_{\alpha}^{(-)}$ are often called
{\em Wightman}, resp. {\em anti-Wightman functions} (of the state
given by $\Omega_\alpha$).
$G_{\alpha,\beta}^{(+)}$ and $G_{\alpha,\beta}^{(-)}$ can be called
{\em 2-state  Wightman}, resp. {\em anti-Wightman functions}.

 Sometimes, it is also useful to consider the symmetric two-point function
\begin{align}
  G^\sym_\alpha(x,x')
  :=&\; G^{(+)}_\alpha(x,x') + G^{(-)}_\alpha(x,x') .
 \end{align}

 For $\Omega_\alpha,\Omega_\beta$ as above, we define
 the Feynman and anti-Feynman propagator assocated with
 $\Omega_\alpha$ and $\Omega_\beta$
as follows:
\begin{align}
  G_{\alpha,\beta}^\F &=  \ii G_{\alpha,\beta}^{(+)} + G^{\land}                                                                   \label{eq:relDcurved1}\\
&= \ii G_{\alpha,\beta}^{(-)} + G^{\lor} ,
                                                                     \label{eq:relDcurved2}
  \\
G_{\alpha,\beta}^{\overline{\F}} &= -\ii G_{\alpha,\beta}^{(+)} +
                                   G^{\lor}
  \label{eq:relEcurved1}\\
&= - \ii G_{\alpha,\beta}^{(-)} + G^{\land} .
\label{eq:relEcurved2}
\end{align}
Note that the equalities \eqref{eq:relDcurved1}$=$\eqref{eq:relDcurved2}
and
\eqref{eq:relEcurved1}$=$\eqref{eq:relEcurved2}
follow from the properties of the Pauli-Jordan propagator:
\eqref{paulijordan1} and \eqref{paulijordan2}.
Obviously,  $G_{\alpha,\beta}^{\F}$ and
$G_{\alpha,\beta}^{\overline{\F}}$ are Green functions, being sums of
bisolutions and Green functions.

 In the special case $\Omega_\alpha=\Omega_\beta$ we will write
\begin{align} G_\alpha^{\F}&:=G_{\alpha,\alpha}^{\F},\label{G1a}\\\label{G2a}
   G_\alpha^{\bar\F}&:=G_{\alpha,\alpha}^{\bar\F}.\end{align}

Suppose we fix a time variable $x^0$ on
$M$. Then, using
\eqref{paulijordan1}, \eqref{paulijordan2}, and \eqref{paulijordan3}
we  can rewrite the definitions of
the Feynman and anti-Feynman propagator in a way that is not manifestly
rigorous, is however more symmetric and closer to the usual
treatment in textbooks:
\begin{align}\label{feyny1.}
G_{\alpha,\beta}^\F(x,y) &:= \ii \Big( \theta(x^0-y^0) G_{{\alpha,\beta}}^{(+)}(x,y)
             +\theta(y^0-x^0)G_{{\alpha,\beta}}^{(-)}(x,y)\Big),
        \\ \label{feyny2.}
G_{\alpha,\beta}^{\overline{\F}}(x,y) &:= -\ii \Big(
          \theta(x^0-y^0) G_{{\alpha,\beta}}^{(-)}(x,y)
        +\theta(y^0-x^0) G_{{\alpha,\beta}}^{(+)}(x,y) \Big).
\end{align}

In the usuall textbook treatment one introduces the chronological and
antichronological time ordering by
\begin{align}\label{eq:Tdef}
  \T\big(\hat\phi(x)\hat\phi(y)\big)
        :=\begin{cases}
             \hat\phi(x) \hat\phi(y) ,\quad x^0>y^0, \\
               \hat\phi(y) \hat\phi(x),\quad y^0>x^0,
             \end{cases}\\ \label{eq:Tbardef}
 \overline    \T\big(\hat\phi(x)\hat\phi(y)\big)
        :=\begin{cases}
             \hat\phi(y) \hat\phi(x) ,\quad x^0>y^0, \\
               \hat\phi(x) \hat\phi(y),\quad y^0>x^0.
            \end{cases}
\end{align}
This definition extends uniquely to $M\times M\setminus\{x=y\}$
because of the Einstein causality of the field $\hat\phi(x)$.
Then one defines $G_\alpha^\F$, $G_\alpha^{ \overline\F}$,
 $G_{\alpha,\beta}^\F$ and
$G_{\alpha,\beta}^{ \overline\F}$ as   in
\eqref{eq:GFa}, \eqref{eq:GFb}, \eqref{eq:GFa.2} and \eqref{eq:GFb.2}
from the introduction. Note that this definition does not extend to
the diagonal, unlike the definition that we gave in \eqref{eq:relDcurved1}---\eqref{eq:relEcurved2}.

In order to express the dependence on   $\Omega_\alpha$
and $\Omega_\beta$ one may call $G_{\alpha,\beta}^{(+)}$ and
$G_{\alpha,\beta}^\F$, the $\beta$-$\alpha$-Wightman function
and the $\beta$-$\alpha$-Feynman propagator, respectively---
see the explanation of the inversion of symbols in the paragraph
following \eqref{eq:GFb.2}.

Let us summarize the identities involving the  propagators:
\begin{subequations}
 \label{eq:relations_props}
\begin{align}
G_{\alpha,\beta}^\F - G_{\alpha,\beta}^{\overline{\F}} &=
\ii\Big( G_{\alpha,\beta}^{(+)} + G_{\alpha,\beta}^{(-)}\Big) ,
\label{eq:relBcurved}\\
G^\PJ &= G^{\lor} - G^{\land}
= \ii\Big( G_{\alpha,\beta}^{(+)} - G_{\alpha,\beta}^{(-)}\Big) ,
\label{eq:relCcurved}\\
G_{\alpha,\beta}^\F &=  \ii G_{\alpha,\beta}^{(+)} + G^{\land}
= \ii G_{\alpha,\beta}^{(-)} + G^{\lor} ,
\label{eq:relDcurved}\\
G_{\alpha,\beta}^{\overline{\F}} &= -\ii G_{\alpha,\beta}^{(+)} + G^{\lor}
= - \ii G_{\alpha,\beta}^{(-)} + G^{\land} .
\label{eq:relEcurved}
\end{align}
\end{subequations}

Note that in this subsection  Wightman and anti-Wightman 2-point
functions,  as well as  Feynman and anti-Feynman
Green functions involved arbitrary vectors $\Omega_\alpha$,
$\Omega_\beta$ in a single represention of quantum fields. In most applications
in  these definitions one assumes that $\Omega_\alpha$,
$\Omega_\beta$ are Fock vacua, that is, vectors that yield
Fock representations.
In the following subsection we will give a different definition of
Feynman and anti-Feynman propagators. These definitions
will be restricted to Fock vacua. They will be purely operator-theoretic
and not make use of quantum fields.
These new definitions will have one important advantage: they will
work also if $\Omega_\alpha$ and $\Omega_\beta$ do not belong to the
same representation, and hence $(\Omega_\alpha|\Omega_\beta)=0$ (which
is actually quite common).

 \subsection{Klein-Gordon  charge form}
\label{ssc:2pf_KGkernel}

Let us now develop a mathematical formalism which will yield  an alternative,
more satisfactory definition of non-classical propagators. It will be based
entirely on operator theory, without going through quantum fields.

For $\zeta,\xi\in C^\infty(M)$, set
\begin{align}\label{charge_curved}
\overline{\zeta(x)} \lrnb_\mu \xi(x): =
 \big(\nabla_\mu \overline{\zeta(x)} \big) \xi(x) -\overline{\zeta(x)} \nabla_\mu \xi(x).
\end{align}

  Let $\sWsc$ denote the space of smooth, space-compact solutions to the
Klein-Gordon equation
\beq\big(-\Box+Y(x)\big)\zeta(x)=0.\eeq
Using \eqref{klein1} we see that  if $\zeta,\xi\in\sWsc$, then
\begin{align}\label{current0}
J_\mu[\zeta,\xi](x) := \overline{\zeta(x)} \lrnb_\mu \xi(x) =
 \big(\nabla_\mu \overline{\zeta(x)} \big) \xi(x) -\overline{\zeta(x)} \nabla_\mu \xi(x)
\end{align}
is a covariantly conserved current, which means
\begin{align} \label{current}
|g|^{-\frac12}\partial_\mu|g|^{\frac12} g^{\mu\nu} J_\nu
=\nabla_\mu J^\mu=0.
\end{align}
Therefore,
\begin{align}
\label{eq:KG_charge_curved}
 (\zeta|\xi)_\KG := \ii \int_\Sigma \overline{\zeta(x)} \lrnb_\mu \xi(x)
 \dd\Sigma^\mu(x),
\end{align}
does not depend on the choice of the Cauchy surface $\Sigma$, where
$\dd\Sigma^\mu(x)$ is the natural measure on $\Sigma$
times the future-directed normal vector. \eqref{eq:KG_charge_curved}
is called the Klein-Gordon
charge form.

The Klein-Gordon charge form is not positive definite. We can,
however, usually extend the space $\sWsc$ to a larger space, denoted $\cW_\KG$,
which admits a direct sum decomposition
\begin{align}
 \cW_\KG =\cZ_\alpha^{(+)}\oplus \cZ_\alpha^{(-)},
 \quad \cZ_\alpha^{(+)}= \overline{\cZ_\alpha^{(-)}},\label{decompo}
\end{align}
where the components are orthogonal with respect to $(\cdot|\cdot)_\KG$,
$\cZ_\alpha^{(+)}$ is positive and complete wrt $\sqrt{(\cdot|\cdot)_\KG}$, and
$\cZ_\alpha^{(-)} $ is negative and  complete wrt
$\sqrt{-(\cdot|\cdot)_\KG}$.

Every $\zeta\in\sWsc$, decomposed according to \eqref{decompo} as
\begin{align}
\zeta=\zeta_\alpha^{(+)}+\zeta_\alpha^{(-)},\quad
\zeta_\alpha^{(+)}= \overline{\zeta_\alpha^{(-)}},
\end{align}
satisfies
\begin{align}
\pm(\zeta_\alpha^{(\pm)}|\zeta_\alpha^{(\pm)})_\KG\geq0,\quad
(\zeta_\alpha^{(\pm)}|\zeta_\alpha^{(\mp)})_\KG=0.
\end{align}
Thus
\begin{align}
(\zeta|\xi)_\KG=(\zeta_\alpha^{(+)}|\xi_\alpha^{(+)})_\KG+
(\zeta_\alpha^{(-)}|\xi_\alpha^{(-)})_\KG.
\end{align}

The index $\alpha$ indicates the decomposition \eqref{decompo}.
We also have a positive definite scalar product
\begin{align} \label{alpha1}
(\zeta|\xi)_\alpha=(\zeta_\alpha^{(+)}|\xi_\alpha^{(+)})_\KG-
(\zeta_\alpha^{(-)}|\xi_\alpha^{(-)})_\KG,
\end{align}
which is however less canonical than the Klein-Gordon charge form
 because it depends on the decomposition \eqref{decompo}. Note that
 $\cW_\KG$ is complete in the topology given by \eqref{alpha1}, and
 that $\sWsc$ is dense in $\cW_\KG$. Clearly, not all elements of
 $\cW_\KG$ are space-compact, but they decay at an appropriate rate
 in spatial directions.

Mathematically, $\cW_\KG$ has the structure of a {\em Krein space}.
A decomposition \eqref{decompo} is an example of a {\em fundamental
  decomposition}, see Prop. \ref{kaku3}. We refer to Appendix
  \ref{app:Involutions_and_projections} for a discussion of Krein spaces.

Let $\Pi_\alpha^{(\pm)}$ be the orthogonal projections onto
$\cZ_\alpha^{(\pm)}$. Denoting by $\cN$ the nullspace and by $\cR$ the
range, we thus have
\begin{align}
  \cN(\Pi_{\alpha}^{(\pm)})=\cZ_\alpha^{(\mp)},&\quad
                                                  \cR(\Pi_{\alpha}^{(\pm)})=\cZ_\alpha^{(\pm)},
                                                  \\ \notag
\big(\Pi^{(\pm)}_\alpha\big)^2= \Pi^{(\pm)}_\alpha,
 &\quad\Pi^{(\pm)}_\alpha
   \Pi^{(\mp)}_\alpha=0,\\ \notag
   \big(\Pi^{(\pm)}_\alpha\zeta|
  \Pi^{(\pm)}_\alpha\zeta)_\KG& \gtrless0, \quad \zeta\in\cW_\KG \\ \notag
  \big( \Pi^{(\pm)}_\alpha \zeta |\xi \big)_\KG
 & = \big(  \zeta |\Pi^{(\pm)}_\alpha \xi \big)_\KG,\quad \zeta,\xi\in\cW_\KG .
 \end{align}

It is important to note that there are many decompositions of the form
\eqref{alpha1} with properties as above leading to the same space
$\cW_\KG$. Note that $\cW_\sc$ is uniquely defined and the Klein-Gordon
charge $(\cdot|\cdot)_\KG$ on $\cW_\sc$ is also unique.
  However,  there can be more then one Krein structure extending
  $\big(\cW_\sc,(\cdot|\cdot)_\KG\big)$. Clearly, specifying a
  fundamental decomposition fixes such a structure. We expect that
  in typical spacetimes all physically
reasonable decompositions lead to the same $\cW_\KG$.

\subsection{Klein-Gordon kernels}
 If $\zeta\in\cW_\sc$, and $\xi$ is a distributional solution of
 the Klein-Gordon equation which is not necessarily space compact,
 then the current $J_\mu[\zeta,\xi]$ defined by \eqref{current0} is a
 well-defined space-compact distribution that satisfies the relation
\eqref{current}. However, for such $\zeta,\xi$ the quantity
\eqref{eq:KG_charge_curved} is problematic, because it is not clear
what it means to integrate a distribution on a surface $\Sigma$.
Let us describe an appropriate replacement of \eqref{eq:KG_charge_curved}
in that case.

We first choose a time variable $x^0$. Let then $j\in C^\infty(\RR)$
be a real function such that $j(t)=0$ for $t<-\frac12$ and $j(t)=1$,
$t>\frac12$. We have
\begin{align} \label{current1}
\nabla_\mu \big(J^\mu(x) j(x^0)\big)= J^0(x)j'(x^0).
\end{align}
Therefore,
\begin{align}
\label{eq:KG_charge_curved-}
 (\zeta|\xi)_\KG := \ii \int \Big(\overline{\zeta(x)} \lrnb_{\!x^0} \xi(x)\Big)
 j'(x^0) |g(x)|^{\frac12}\dd x
\end{align}
is well-defined for distributional $\xi$ and does not depend
on the choice of the time variable $x^0$ and the function $j$.
If $\xi$ is regular enough, it coincides with
\eqref{eq:KG_charge_curved}.

Suppose now that $B(x,y)$ is a bisolution of the Klein-Gordon equation,
which is sufficiently regular, say $C^\infty$. Then it is easy to see
that for $\zeta_1,\zeta_2\in \cW_\sc$, the integral
\begin{align}\label{KGkernel}
\ii \int_{\Sigma_1}\int_{\Sigma_2}\zeta_1(x) \lrnb_{x^\mu } B(x,y) \lrnb_{y^\mu }\zeta_2(y)
  \dd\Sigma_1^\mu(x)   \dd\Sigma_2^\mu(y).
\end{align}
does not depend on the choice of Cauchy surfaces $\Sigma_1,\Sigma_2$
and defines a sesquilinear form on $\cW_\sc$. If this form is
bounded, then it defines a unique operator $B$ on $\cW_\KG$. We then
say that $B(\cdot,\cdot)$ is the {\em Klein-Gordon kernel} of the
operator $B$.

For distributional
Klein-Gordon kernels one needs a different definition:
\begin{definition}
  Let $B(\cdot,\cdot)$ be a (distributional) bisolution.
  Choose a time variable $x^0$ on $M$ (see
 Subsection \ref{Classical propagators from evolution equations}).
 Let $j_i$, $i=1,2$ be
two functions on $\RR$ such that $j_i(t)=0$ for $t<-\frac12$ and
$j_i(t)=1$, $t>\frac12$. Then for $\zeta_1,\zeta_2\in\cW_\sc$
\begin{align}\label{KGkernel.}
\ii \int\int j_1'(x^0)\Big(\zeta_1(x) \lrnb_{x^0 } B(x,y) \lrnb_{y^0 }\zeta_2(y)\Big)
 j_2'(y^0) |g(x)|^{\frac12}\dd x |g(y)|^{\frac12}  \dd y
\end{align}
does not depend on the choice of the time variable and the functions
$j_1,j_2$. It
defines a  bilinear form  on $\cW_\sc$. If this form corresponds
to a bounded operator $B$ on $\cW_\sc$,  then $B(\cdot,\cdot)$ will be
called the
{\em Klein-Gordon kernel of
the operator $B$. }\end{definition}

Note that if $B(\cdot,\cdot)$ is sufficiently regular, then
\eqref{KGkernel.}=\eqref{KGkernel}. An example of a
Klein-Gordon kernel that usually is of distributional nature
is the Pauli-Jordan bisolution $G^\PJ(x,y)$, which is the
Klein-Gordon kernel of the identity.

Let us describe the physical meaning of a fundamental decomposition
 \eqref{decompo}. Let $\pm G^{(\pm)}_\alpha(x,y)$ be the Klein-Gordon
 kernel of the projection $\Pi^{(\pm)}_\alpha$, so that the sum
$G^\sym_\alpha(x,y) := G^{(+)}_\alpha(x,y) + G^{(-)}_\alpha(x,y)$
is the Klein-Gordon kernel of the involution
$S_\alpha := \Pi^{(+)}_\alpha-\Pi^{(-)}_\alpha$.
Then there exists a Fock representation with the Fock vacuum
$\Omega_\alpha$ such that $G_\alpha^{(\pm)}(x,y)$ are the
corresponding  two-point functions
\begin{align}
 G_{\alpha}^{(+)}(x,y) &:= \big(\Omega_\alpha|\hat\phi(x)\hat\phi(y)\Omega_\alpha\big)
,\label{G1.}\\
G_{\alpha}^{(-)}(x,y) &:= \big(\Omega_\alpha|\hat\phi(y)\hat\phi(x)\Omega_\alpha\big).
\label{G2.}
 \end{align}

Let $\cW_\KG =\cZ_\beta^{(+)}\oplus \cZ_\beta^{(-)}$
be another orthogonal decomposition of the Krein space $\cW_\KG$ into
a maximal uniformly positive and maximal uniformly negative subspace,
defining the vacuum $\Omega_\beta$. One can show \cite{DS22}  (see also
Appendix \ref{app:Involutions_and_projections})  that the spaces
$\cZ_\beta^{(+)}$ and $\cZ_\alpha^{(-)}$ are complementary, so that
we have a (non-orthogonal) direct sum decomposition
\begin{align}
  \cW_\KG &=\cZ_\beta^{(+)}\oplus
           \cZ_\alpha^{(-)}.
\end{align}
Therefore, we can define projections $\Pi_{\alpha,\beta}^{(+)}$,
$\Pi_{\alpha,\beta}^{(-)}$ corresponding to this decomposition. They
satisfy
\begin{align}
  \cN(\Pi_{\alpha,\beta}^{(+)})
  =\cR(\Pi_{\alpha,\beta}^{(-)})
  &=\cZ_\beta^{(-)}, \\ \notag
 \cR(\Pi_{\alpha,\beta}^{(+)})
    =\cN(\Pi_{\alpha,\beta}^{(-)})
    &=\cZ_\alpha^{(+)},\\
    \Pi_{\alpha,\beta}^{(+)}+\Pi_{\alpha,\beta}^{(-)}&=\one.  \notag
\end{align}

We may also decompose $\cW_\KG$ the other way around,
\begin{align}
  \cW_\KG &=    \cZ_\alpha^{(+)}\oplus\cZ_\beta^{(-)}.
\end{align}
The corresponding projections are denoted $\Pi_{\beta,\alpha}^{(+)}$,
$\Pi_{\beta,\alpha}^{(-)}$. Then one finds
\begin{align}
 \big( \Pi^{(\pm)}_{\alpha,\beta} \zeta|\xi\big)_\KG &=
 \big(\zeta | \Pi^{(\pm)}_{\beta,\alpha} \xi\big)_\KG .
\end{align}
Thus, these projections are  orthogonal if and only if
$\cZ_\alpha^{(+)}=\cZ_\beta^{(+)}$.

Let $R$ be a bounded linear transformation on $\cW_\KG$. The
Klein-Gordon Hermitian conjugate $R^{*\KG}$ of $R$ is defined by
\begin{align}
(R^{*\KG}\zeta|\xi)_\KG := (\zeta|R\xi)_\KG,
\end{align}
and the complex conjugate $ \overline R$ by
\begin{align}
  \overline{R}\, \overline \zeta :=\overline{R\zeta}.
\end{align}
Linear  transformations that preserve the structure of $\cW_\KG$
are called {\em symplectic}, or (especially in the physics
literature) {\em Bogoliubov transformations}. Here, $R$
preserving the structure of $\cW_\KG$ means that $R$ is
{\em pseudounitary} and {\em real}, i.e.,
\begin{align}
(R\zeta|R\xi)_\KG=(\zeta|\xi)_\KG \word{and}
R\overline\zeta=\overline{R\zeta},
\end{align}
or in other words $R^{*\KG}=R^{-1}$ and $ \overline{R}=R$.

\subsection{Mode expansions}
\label{ssc:mode_expansions}
Many papers about QFT on curved spacetimes do not mention the word
``Krein space''.  Instead, they introduce a decomposition of solutions
to the Klein-Gordon equation into a ``positive frequency part''  and
a ``negative frequency part''. This is usually done through modes, by
assuming that the classical field can be written as
 \begin{align}\label{modes}
 \phi(x) &= \int \big(\overline{\varphi_{\alpha,k}(x)} a_{\alpha,k}
 + \varphi_{\alpha,k}(x)
 a^\ast_{\alpha,k}\big) \dd k,
 \end{align}
where the mode functions $\varphi_{\alpha,k}$ and $\overline{\varphi_{\alpha,k}}$
should satisfy
\begin{align}
\label{eq:modes_orthogonal}
\big(\varphi_{\alpha,k}|\varphi_{\alpha,k'}\big)_\KG
 = -\big(\overline{\varphi_{\alpha,k}}|\overline{\varphi_{\alpha,k'}}\big)_\KG
& = - \delta(k,k'),\quad \\ \notag
   \big(\varphi_{\alpha,k}|\overline{\varphi_{\alpha,k'}}\big)_\KG&=0, \\
   \notag
(-\square_g+Y(x))\varphi_{\alpha,k}(x)&=0.
 \end{align}
The variable $k$ (and the measure $\dd k$) may be continuous or discrete
(e.g. if the Cauchy surface is compact). In Minkowski space it usually
coincides with the $d-1$-momentum, which is not available on a generic
spacetime.

Let us try to interpret this in a more rigorous sense.
Let us assume that $K$ is  a measure set. For the sake of definiteness
we will assume that $K=\RR^n$ with the Lebesgue measure $\dd k$, but
this is not relevant. Suppose that for any ``wave packets'' $f\in
L^2(K)$, we can interpret
\beq \int \bar{\varphi_{\alpha,k} }f(k)\dd k
\word{and}
\int \varphi_{\alpha,k} f(k)\dd k
\eeq
in a rigorous sense as a solution to the Klein-Gordon equation.
The functions
\beq \int\bar{ \varphi_{\alpha,k}} f_1(k)\dd k+ \int
\varphi_{\alpha,k}\bar{ f_2(k)}\dd k\eeq with $f_1,f_2\in L^2(K)$
with the  sesquilinear form defined by \eqref{eq:modes_orthogonal}
form a Krein space.

Thus by introducing the modes satisfying
\eqref{eq:modes_orthogonal} we automatically fix a Krein space.
The positive/negative frequency modes span
maximal uniformly positive/negative subspaces of this Krein space in
the sense of "wave packets"--so we have a distinguished fundamental
decomposition (and the corresponding Fock representation).
Hence the idea of a Krein spaces is introduced in many papers
``through the back door''.

After quantization, we obtain
\begin{align} \label{eq:Qphi_mode}
\hat\phi(x) &= \int \big( \overline{\varphi_{\alpha,k}(x)}\hat  a_{\alpha,k}
 + \varphi_{\alpha,k}(x)
 \hat a^\ast_{\alpha,k} \big)\dd k ,\\ \notag
   [\hat a_{\alpha,k},\hat a^\ast_{\alpha,k'}]& = \delta(k,k'),\quad
  [\hat a_{\alpha,k},\hat a_{\alpha,k'}] = 0.  \end{align}
  Then
  \begin{align}
\label{eq:G_alpha_def}
 G^{(+)}_{\alpha}(x,y)
 =&\, \int
 \overline{\varphi_{\alpha,k}(x)} \varphi_{\alpha,k}(y)  \dd k ,
 \\ \notag
  G^{(-)}_{\alpha}(x,y)
 =&\,
 \int
\varphi_{\alpha,k}(x)  \overline{\varphi_{\alpha,k}(y)} \dd k .
\end{align}

Mode decompositions are convenient, because they allow us to represent
operators on the Krein space in terms of integral kernels on $K\times K$.
As an illustration, assume that we have  another state $\Omega_\beta$
with a decomposition analogous to \eqref{eq:Qphi_mode}, whose modes
generate the same Krein space $\cW_\KG$ and use the same measure space $K$:
 \begin{align}\label{modes.}
 \phi(x) &= \int \big(\overline{\varphi_{\beta,k}(x)} a_{\beta,k}
 + \varphi_{\beta,k}(x)
 a^\ast_{\beta,k}\big) \dd k.
 \end{align}
Assume further, that the two decompositions are related by a Bogoliubov
transformation
\begin{align}
\label{eq:Bogo}
 \varphi_{\beta,k}(x)
&= N(k)
\varphi_{\alpha,k}(x) + \int \Lambda(k,k')
                        \overline{\varphi_{\alpha,k'}(x)}\dd k'
                        .\end{align}
                                           The pseudounitarity of
                                           \eqref{eq:Bogo} is equivalent to
                      \begin{align}
\label{eq:assumptions_Lambda}
N(k') \Lambda(k,k')
&= \Lambda(k',k) N(k),\\ \notag
\int \overline{ \Lambda(k,p)}
\Lambda(k',p) \dd p
&=\big(|N(k)|^2-1\big) \delta(k-k').
\end{align}
Therefore, the transformation inverse to \eqref{eq:Bogo} is
\begin{align}
\label{eq:Bogo.}
 \varphi_{\alpha,k}(x)
&= \bar{N(k)}
\varphi_{\beta,k}(x) - \int \Lambda(k',k)
                        \overline{\varphi_{\beta,k'}(x)}\dd k'
                        .\end{align}
On the level of $a_{\alpha,k}$ and $a_{\beta,k}$, or their quantized
versions,  we have
\begin{align}
\hat a_{\beta,k}
&=
N (k)
 \hat  a_{\alpha,k} -
                   \int \Lambda(k',k) \hat  a_{\alpha,k'}^\ast \dd k',\\
  \hat a_{\alpha,k}
&=
\bar{N (k)}
 \hat  a_{\beta,k} +
\int \Lambda(k,k') \hat  a_{\beta,k'}^\ast \dd k'.
\end{align}

Thus, the fields can be written as
\begin{align}\hat\phi(x)
&=\int \Bigg( \varphi_{\alpha,k}(x)\hat a_{\alpha,k}^\ast
+ \frac{\bar{\varphi_{\alpha,k}(x)}}{N(k)} \int \Lambda(k',k) \hat a_{\alpha,k'}^\ast \dd k'
+\frac{\bar{\varphi_{\alpha,k}(x)}}{N(k)}\hat a_{\beta,k}
\Bigg) \dd k
 \label{kkl}\\
     &=\int \Bigg( \bar{\varphi_{\beta,k}(x)}\hat a_{\beta,k}
- \frac{\varphi_{\beta,k}(x)}{N(k)} \int \overline{\Lambda(k,k')}  \hat a_{\beta,k'} \dd k'
+\frac{\varphi_{\beta,k}(x)}{N(k)}\hat a_{\alpha,k} ^\ast
 \Bigg)\dd k.
 \label{kkr}
   \end{align}
We insert into \eqref{eq:zero_divided_by_zero} and \eqref{eq:GF0} the
expression  \eqref{kkl}  as the left field and  \eqref{kkr} as the right
field. Then, moving $\hat a_{\alpha,k}^*$ to the left and $\hat a_{\beta,k}$
to the right and using $[\hat a_{\beta,k},\hat a_{\alpha,k'}^\ast]=N(k)\delta(k-k')$,
we obtain
 \begin{align}
\label{eq:G_albe_def}
 G^{(+)}_{\alpha\beta}(x,y)
 =&\,  \int\frac{1}{N(k)}
 \overline{\varphi_{\alpha,k}(x)} \varphi_{\beta,k}(y)  \dd k ,
 \\ \notag
  G^{(-)}_{\alpha\beta}(x,y)
 =&\,
  \int \frac{1}{N(k)}
 \overline{\varphi_{\alpha,k}(y)} \varphi_{\beta,k}(x)  \dd k .
\end{align}

\subsection{Operator-theoretic (anti-)Feynman propagator}
\label{ssc:op_inv_special}
The d'Alembertian  $-\Box$ on a Lorentzian manifold
$M$ with pseudometric $g$, in the half-density formalism given by \eqref{half},
 is Hermitian (symmetric) on $C_\mathrm{c}^\infty(M)$ in the sense of
$L^2\big(M\big)$.
The same is true for the Klein-Gordon operator $-\Box+Y(x)$ with a
real  potential $Y$.
Assume that $-\Box+Y(x)$ is
 \emph{essentially
  self-adjoint} (if not, we may choose a self-adjoint extension).

Then its resolvent $G(z):=(-\square+Y(x) -z)^{-1}$ is well-defined for
$z\in\bC\setminus\bR$. It possesses an integral kernel $G(z;x,y)$.
Suppose that there exists
\begin{align}\label{point1}
  G^\F_{\rm op} (x,y):= \lim_{\epsilon\searrow0}G(+\ii\epsilon;x,y)
  ,\\\label{point2}
   G^{ \overline\F}_{\rm op} (x,y):=
  \lim_{\epsilon\searrow0}G(-\ii\epsilon;x,y) ,
\end{align}
where we use the distributional limit. The distributions
$ G^\F_{\rm op}(x,x')$ and $ G^{\overline{\F}}_{\rm op}(x,x')$ will be
called
the \emph{operator-theoretic Feynman and anti-Feynman propagator}.

We expect that the limits \eqref{point1} and \eqref{point2}
exist on most physically interesting globally hyperbolic manifolds
without boundaries. They will not exist at the point spectrum of
$-\Box+Y(x)$ (which  is probably quite rare).

We believe that the following argument justifies this
         definition. Here is an elementary fact about {\em Fresnel
           integrals}. Let $c$ be a real symmetric $n\times n$ matrix,  $u$ a
         variable in $\mathbb{R}^n$ and $J\in\mathbb{R}^n$. Then
\beq\frac{\int\e^{\pm\ii(u^{\TT }\frac{c}{2}u+J^\TT u)}\dd u}{\int\e^{\pm\ii u^\TT\frac{c}{2}u}\dd
    u}=\exp\Big(\mp \frac{\ii}{2} J^\TT (c\pm\ii0)^{-1}J\Big).\label{fresnel}\eeq

                  If we use   {  \em  path integrals} to construct
a                  quantum field theory,  we usually start from
                  defining formally the
                   generating function   as
                   \[Z(J):=\frac{\int\e^{\ii S(\phi)+\ii\int\phi(x)
                         J(x)\dd x}\cD\phi}
                     {\int\e^{\ii S(\phi)}\cD\phi}.
                   \]
                   If the action is {\em quadratic}
                   \begin{align*}S(\phi)=&-\frac12\int\big(g^{\mu\nu}(x)\partial_\mu\phi(x)\partial_\nu\phi(x)+m^2\phi(x)^2\big)\sqrt{|g|}(x)
                     \dd
                                                  x\\
                     =&-\frac12\big(\phi|(-\Box+m^2)\phi\big),\end{align*}
                   then  the path
                   integral by analogy to \eqref{fresnel} can be {\em rigorously defined} as
                   \begin{align*}Z(J)&=
                                       \exp{\frac{\ii}{2}\big(J|(-\Box+m^2-\ii0)^{-1}J\big)}
                                       \\&=\exp\Big(\frac{\ii}{2}\int\int
                     J(x)G_\op^\F(x,y)J(y)\sqrt{|g|}(x)\sqrt{|g|}(y)\dd
                     x\dd y\Big).
                     \end{align*}

Spacetimes where the d'Alembertian is essentially self-adjoint
include stationary spacetimes, Friedmann-Lemaître-Robertson-Walker
(FLRW) spacetimes, $1+0$-dimensional spacetimes, de Sitter and the
universal cover of the
anti-de Sitter space. Essential self-adjointness was also recently
proven on a class of asymptotically Minkowski spacetimes
\cite{V20,NT23}.  However, even on well-behaved
spacetimes, essential self-adjointness is not always true
\cite{kaminski}.

\begin{remark}
  Essential self-adjointness is typically destroyed if
  there are  boundaries. The problem with  boundaries with
  spacelike normal vectors can sometimes be cured by
  imposing boundary conditions---we will see this in Section
  \ref{sec:AdS} about the universal cover of the anti-de Sitter space.
  Boundaries with timelike normal vectors are
different. In particular, if the time is confined to an interval
$]a,b[$ instead of $\mathbb{R}$, then self-adjoint realizations
  of the Klein-Gordon operators  do not lead to physically justified
  Feynman propagators. Instead, one should consider other types of
non-self-adjoint boundary conditions, as explained in \cite{DS22}.
\end{remark}

\begin{remark} There exists a well-developed theory of
limits of the resolvents of the Schr\"odinger operators
$H :=-\Delta+V(x)$ on $\bR^d$. More precisely, if $V$
is a decaying potential on $\RR^d$ satisfying
    appropriate conditions, $E\neq0$ is away from the spectrum
of $H$
    and $s>\frac12$, then the
    following strong operator limit exists:
    \beq\label{LAP}
\lim_{\epsilon\searrow0}
(1+|x|)^{-s}\big(H-E\pm\ii\epsilon\big)^{-1}(1+|x|)^{-s}.\eeq
The existence of \eqref{LAP} goes under the name of Limiting
Absorption Principle. The most powerful method to prove this
is the
so-called Mourre theory, and is treated e.g. in
\cite{CFKS,ABG}. Obviously, the
Limiting
Absorption Principle implies the existence of  the integral kernel of
$\big(H-E\pm\ii0\big)^{-1}$.

One can try to apply similar methods to  Klein-Gordon operators, as
shown in \cite{V20,NT23}. \end{remark}

\subsection{Special Klein-Gordon equations}
\label{Special spacetimes}

\begin{defn}\label{def-special}
  Suppose that the Klein-Gordon operator $-\square+Y(x)$
on a Lorentzian manifold $M$ is essentially self-adjoint.
 We say that $-\square+Y(x)$ is \emph{special} if the sum
 \begin{align}
 \label{eq:sum_special}
  G^\F_{\rm op}(x,x')+G^{\overline{\F}}_{\rm op}(x,x')
 \end{align}
 has causal support.  \end{defn}

\begin{defn} Suppose that the Klein-Gordon operator $-\square+Y(x)$
is essentially self-adjoint and
 $M$ is globally hyperbolic. We say that it is {\em strongly special} if
 \begin{align}
  G^\F_{\rm op}(x,x')+G^{\overline{\F}}_{\rm op}(x,x')
  = G^\lor(x,x') + G^\land(x,x').
 \end{align}
\end{defn}

Clearly, strong specialty implies specialty. We expect that under
broad conditions the converse is also true.

\emph{Special} Klein-Gordon operators are interesting because the
associated propagators can be determined in an easy way.
Indeed, it is often not very difficult to  compute
$ G^\F_{\rm op}(x,x')$ and $G^{\overline{\F}}_{\rm op}(x,x')$. After
all, there are various techniques to compute the kernel of the
resolvent of a differential operator.
From these, one can determine the retarded and advanced
propagators  by
\begin{align}\label{bsik}
 G^{\lor/\land}(x,x') = \theta\big(\pm(x^0-{x'}^0)\big)
 \Big(  G^\F_{\rm op}(x,x')+G^{\overline{\F}}_{\rm op}(x,x') \Big)
\end{align}
as well as the Pauli-Jordan function $G^\PJ = G^{\lor}-
G^{\land}$.

Strictly speaking,  \eqref{bsik} is not fully legal,
because it involves multiplying a distribution
by a discontinuous function $\theta\big(\pm(x^0-{x'}^0)\big) $. In
practice, we expect that this obstacle can be overcome, see \cite{DeSi}.
 In particular, there is no problem with the
  multiplication with the theta function when we can
  apply the method of evolution equations, see Subsect.
\ref{Classical propagators from evolution equations}.

More interestingly, there is a natural candidate for
the Wightman and anti-Wightman two-point function
  of a distinguished state:
\begin{align}
\label{eq:Gpm_special}
 G^{(\pm)}:= -\ii \big( G^\F_{\rm op} - G^{\land/\lor}\big)
  = \ii \big( G^{\overline{\F}}_{\rm op} - G^{\lor/\land}\big) .
\end{align}

\begin{remark}
Actually, we do not know if $ G^{(\pm)}$
defined by \eqref{eq:Gpm_special} in the case when
$-\square+Y(x)$ is special always satisfy the positivity requirement --- in all cases that we worked out they do.
\end{remark}


\section{Stationary and asymptotically stationary spaces}
\label{sec:asymp_stat}

\subsection{Propagators on stationary spacetimes}

Assume that $M=\mathbb{R}\times\Sigma$, with the variables typically
denoted by $(t,{\bf x})$, and sometimes $(s,{\bf y})$. Suppose that
neither $g_{\alpha\beta}$ nor $Y$ depends on time $t$,  the time
slices $\{t\}\times\Sigma$   are spacelike and $\partial_t$ is
timelike. Such spacetimes are called {\em stationary}.

In addition,  we will assume that the spacetime is {\em static}, i.e.
there are no time-position cross-terms.
This is not a necessary condition for the present analysis, however
for static spacetimes many formulas are more explicit. In
other words, we assume that the metric is
\begin{align}
-\alpha^2(\mathbf{x})\dd t^2+h_{ij}(\mathbf{x}) \dd\mathbf{x}^i
\dd\mathbf{x}^j.
\end{align}

Here and in the following, Latin indices run
over the spatial directions. We write $|h|$ for $\det h$.
The Klein-Gordon operator  in the half-density formalism is
\begin{align}
-\Box+Y(\mathbf{x})= \frac{1}{\alpha^2}
\del_t^2 -\alpha^{-\frac12} \,| h|^{-\frac14}
\del_i \alpha \,| h|^{\frac12} h^{ij} \del_j \,\alpha^{-\frac12} \,| h|^{-\frac14}
+Y.
\label{kgor}
\end{align}
It is convenient to replace \eqref{kgor} by
\begin{align}
  -\tilde\Box+\tilde Y:=\alpha\big(-\Box
  + Y\big)
  \alpha =\partial_t^2+ L,
\end{align}
where
\begin{align}
  L&:=-\Delta_{\tilde h}+\tilde Y,
 \quad  \Delta_{\tilde h}
 :=\gamma^{-\frac12}\partial_i \gamma\tilde h^{ij}\partial_j \gamma^{-\frac12},\\ \notag
  \tilde h_{ij}&:= \frac{1}{\alpha^2} h_{ij},\quad [\tilde h^{ij}]=[\tilde h_{ij}]^{-1},\quad \gamma := \frac{|h|^{\frac12}}{\alpha},
  \quad
  \tilde Y = \alpha^2 Y.
  \end{align}

  Note that
  \begin{align}
\text{ if  $\tilde{u}$ solves }
    \label{tildkg}
(-\tilde \Box+\tilde Y) \tilde{u}=0,
\text{ then $ u := \alpha \tilde{u}$ solves }
(- \Box+ Y)u=0.
\end{align}

Let us first describe the approach based on the evolution of Cauchy
data, which is particularly simple for static spacetimes. The equation \eqref{tildkg}
for $
\tilde u(t)= \tilde u(t,\mathbf{x})$
can be rewritten as a 1st order equation for the Cauchy data
\begin{align}
  \bigl(\partial_t+\ii  B\bigr) w&=0,\\
  B
    :=
           \begin{bmatrix}
 0 & \one \\ L & 0
           \end{bmatrix}
  ,                         &\qquad  w=
  \begin{bmatrix}
   w_1 \\  w_2
  \end{bmatrix}
    :=
           \begin{bmatrix}
            \tilde u \\ \ii\partial_t\tilde  u
           \end{bmatrix}
  .
\end{align}
 Assume that $L$ is positive and self-adjoint in the sense of $L^2(\Sigma)$.
We assume that $0$ is not an eigenvalue of $L$ and
endow the space of Cauchy data with the (positive) scalar product
\begin{align}\label{scalar}
(w|v)_0 :=(w_1|\sqrt{ L}v_1)+\big(w_2\big|\tfrac{1}{\sqrt{L}}v_2\big).
\end{align}
The completion of $\cW_\sc$ with respect to this scalar product will be denoted $\cW_0$.
Note that $B$ can be interpreted as self-adjoint with respect to this scalar product:
\begin{align}
(Bw|v)_0 =(w|Bv)_0 =(w_2|\sqrt{ L}v_1)+(w_1|\sqrt{
  L}v_2).\end{align}
The space $\cW_0$ is also endowed with the (indefinite)
Klein-Gordon
charge form
\begin{align}
(w|v)_\KG = (w|Qv)_{0} :=(w_1|v_2)+(w_2|v_1),\quad
Q=\begin{bmatrix}
   0 &\one \\ \one & 0
  \end{bmatrix}
.\label{kgb}
\end{align}
Thus $\cW_0$ is a Krein space with a distinguished  Hilbert space structure.

We can compute the evolution operator and the spectral projection of
$B$ onto the positive and negative part of the spectrum:
\begin{align}
\e^{-\ii tB}:=&\begin{bmatrix}\cos t\sqrt{ L}&-\ii\frac{\sin
    t\sqrt{ L}}{\sqrt{ L}}\\-\ii\sqrt{ L}\sin
    t\sqrt{ L}&\cos t\sqrt{ L}\end{bmatrix},\\
  \Pi^{(\pm)}:=&\,\one_{\mathbb{R}_+}(\pm B)=\frac12\begin{bmatrix}\one&\pm\frac{\one}{\sqrt{ L}}\\\pm\sqrt{ L}&\one\end{bmatrix}.
\label{funda}\end{align}
Note that the evolution $\e^{-\ii tB}$ preserves the Klein-Gordon
charge form  \eqref{kgb}.  $\cR(\Pi^{(\pm)})$ are maximal uniformly
positive/negative subspaces with respect to the Klein-Gordon charge form.
Then we can define the propagators on the level of the Cauchy data as follows:
\begin{align*}
   E^\PJ(t,s)    & := \e^{-\ii(t-s) B}, \\
   E^{\vee/\wedge}(t,s)   & := \pm\theta\big(\pm(t-s)\big)
   \e^{-\ii(t-s) B}, \\
   E^{(\pm)}(t,s)  & := \e^{-\ii(t-s) B}\Pi^{(\pm)}, \\
   E^{\F/\overline{\F}}(t,s)  & :=  \e^{-\ii(t-s) B}\bigl(\theta(t-s)\Pi^{(\pm)} -\theta(s-t)\Pi^{(\mp)}\bigr).
\end{align*}
At least formally, $ E^\vee,  E^\wedge,  E^\F,
 E^{ \overline\F}$ are inverses and $ E^\PJ,  E^{(+)},
 E^{(-)}$ are bisolutions of $\partial_t+\ii  B$.
They are $2\times 2$ matrices:
\begin{align*}
   E^\bullet(t,s)=
          \begin{bmatrix}
             E^\bullet_{11}(t,s) &  E^\bullet_{12}(t,s) \\ E^\bullet_{21}(t,s) &  E^\bullet_{22}(t,s)
          \end{bmatrix}
  .
\end{align*}
We set
\begin{align}
  G^\bullet & := \ii \alpha E_{12}^\bullet\alpha,\quad \bullet=\PJ,\vee,\wedge, \F,
  \overline{\F}, \\ \notag
  G^{(\pm)} & := \pm \alpha E_{12}^{(\pm)}\alpha,
\end{align}
obtaining propagators for a general static stable case.
Thus
\begin{align}\label{pro1}
  G^\PJ(t,
  \mathbf{x};s,\mathbf{y})& =\alpha(\mathbf{x})\frac{\sin(t-s)\sqrt{
  L}}{\sqrt{ L}} (
                            \mathbf{x},\mathbf{y})\alpha(\mathbf{y}),\\\label{pro2}
    G^{\vee/\wedge}(t,
  \mathbf{x};s,\mathbf{y})&=\pm\theta\big(\pm(t-s)\big)\alpha(\mathbf{x})\frac{\sin(t-s)\sqrt{
  L}}{\sqrt{ L}} (
                            \mathbf{x},\mathbf{y})\alpha(\mathbf{y}),
\\
   G^{(\pm)}(t,
  \mathbf{x};s,\mathbf{y})&=\alpha(\mathbf{x})
  \frac{\e^{\mp\ii(t-s)\sqrt{
 L}}}{2\sqrt{ L}} (
                            \mathbf{x},\mathbf{y})\alpha(\mathbf{y}),
                            \\
   G^{\F/\overline{\F}}(t,   \mathbf{x};s,\mathbf{y})
  & =\pm\ii\alpha(\mathbf{x})
  \Bigg(\theta(t-s)\frac{\e^{\mp\ii(t-s)\sqrt{
 L}}}{2\sqrt{ L}} +
                            \theta(s-t)\frac{\e^{\pm\ii(t-s)\sqrt{
 L}}}{2\sqrt{ L}}\Bigg) (
                            \mathbf{x},\mathbf{y})\alpha(\mathbf{y})\label{wewe1}.
\end{align}

Note that all the identities~\eqref{eq:relations_props}  hold, where
  we tacitly assume that $\Omega_\alpha=\Omega_\beta$ corresponds to the natural
  state given by the fundamental decomposition
  \eqref{funda}.
In this setting, the Wightman function is often called the {\em
  positive frequency bisolution}, and the anti-Wightman function the
{\em negative frequency bisolution}, since they are obtained from the
spectral decomposition of the generator of the dynamics  into
positive and negative frequencies.

Note also that the specialty condition is true:
\begin{align}\label{specialty}
G^\F + G^{\overline{\F}} &= G^{\lor} + G^{\land}.
\end{align}

Let us describe now the approach based on the Hilbert space $L^2(M)$.
We assume that $ L$ is essentially self-adjoint on
$C_\mathrm{c}^\infty(\Sigma)$ in the sense of $L^2(\Sigma
)$.
Then it is easy to see
$\partial_t^2+ L$ is essentially self-adjoint on
$C_\mathrm{c}^\infty(M)$ in the sense of $L^2(M
)$.
Assume that for some $0<c,C$ we have $c\leq\alpha(\mathbf{x})\leq C$.
Then $\alpha(\mathbf{x})$ is  bounded invertible operator on $L^2(M)$, and
using this we can show that
$-\Box+Y(\mathbf{x})$ is essentially self-adjoint on
$\alpha(\mathbf{x})C_\mathrm{c}^\infty(M)$.
As proven in
\cite{DS18}, under some minor additional technical conditions
we can then define $G_\op^\F$ and $G_\op^{ \overline\F}$, and they coincide
with $G^\F$ and $G^{ \overline\F}$  computed from the evolution equation.

Note that the stability condition $ L\geq0$ was an important ingredient of the  analysis based on the evolution equation. Suppose
now
that $ L$  is
 not positive, but
only self-adjoint, which can be called the {\em tachyonic case}.
In the tachyonic case, we do not have the distinguished
  scalar product \eqref{scalar}. The
  formulas  \eqref{pro1} and
\eqref{pro2} for the classical propagators
$G^\PJ,G^\vee,G^\wedge$ are still true.
However, the evolution approach does not allow us to define
$G^{(\pm)}$,  $G^\F$ or $G^{ \overline\F}$.
The operator-theoretic $G_\op^\F$ or $G_\op^{ \overline\F}$, defined
as usual by \eqref{point1} and \eqref{point2}, will often exist.
However, they will not be given by the formula \eqref{wewe1}.
The specialty condition \eqref{specialty} is no longer true
in the tachyonic case.

For instance, in the  Minkowski space,  with $Y(x)=m^2<0$,
$G^\F_{\rm op}$ and $ G^{\overline{\F}}_{\rm op}$ are
well-behaved tempered distributions while the forward and backward
propagators have exponential growth as $t\to\pm\infty$ inside
  the forward, resp. backward cone.
A detailed discussion can for example be found in \cite{DeSi}.

\subsection{Classical propagators from evolution equations}
\label{Classical propagators from evolution equations}

Let us now consider a generic (not necessarily stationary) globally
hyperbolic spacetime $M$.
In order to compute non-classical (actually, also classical) propagators, it is useful to convert the Klein--Gordon equation into a 1st order evolution equation on the phase space describing Cauchy data.
To this end, we fix a decomposition
$M=]t_-,t_+[
\times\Sigma$, where $-\infty\leq t_-<t_+\leq+\infty$.
We assume that $\{t\}\times\Sigma$ is Riemannian for all
$t\in ]t_-,t_+[$
and $\partial_t$ is always timelike.
We will use Latin letters for spatial indices.
We introduce
\begin{align*}
  h=&\,[h_{ij}]=[g_{ij}],  \quad h^{-1}=[h^{ij}],
  \\ \beta^j :=&\,
  g_{0i}h^{ij},  \quad \alpha^2 := g_{0i}h^{ij}g_{j0}-g_{00},
  \\ |h|=&\,|\det h|=\det h, \quad |g|=|\det g|.
\end{align*}
We assume that $[h_{ij}]$, and hence also $[h^{ij}]$, are positive definite and
$\alpha^2>0$.
In coordinates, the metric can be written as
\begin{align}\label{ADM}
  g_{\mu\nu} \dd x^\mu \dd x^\nu & = -\alpha^2 \dd t^2 + h_{ij} (\dd x^i + \beta^i \dd t) (\dd x^j + \beta^j \dd t),
\end{align}
We have  $|g|=\alpha^2|h|$.  \eqref{ADM} is
  sometimes called  a {\em metric in the ADM form}, for Arnowitt, Deser and
  Misner \cite{BN}.

 $Y$ is a real valued function on $M$. To be on the safe side,
  let us assume
assume that $Y$ is smooth (which actually is not needed for the
existence of various propagators).
The Klein--Gordon operator in the half-density formalism  can now be written
\begin{align}
  \notag -\Box+Y(x)  & =
  |g|^{-\frac14} (\partial_t- \partial_i \beta^i)
  \frac{|g|^\frac12}{\alpha^2} (\partial_t-\beta^j
 \partial_j)|g|^{-\frac14} \\\notag & \quad - |g|^{-\frac14}
  \partial_i  |g|^\frac12 h^{ij} \partial_j |g|^{-\frac14}
  + Y.
\end{align}
Instead of the operator $\Box$ on $L^2(M)$, it is more convenient to work with the operator
\begin{equation*}
  \tilde\Box := \alpha \Box \alpha.
\end{equation*}

We have
\begin{align*}
  -\tilde\Box+ \alpha^2Y
  &= \gamma^{-\frac12} (\partial_t -\partial_i \beta^i) \gamma
    (\partial_t -\beta^j \partial_j ) \gamma^{-\frac12} \\&\quad -
  \gamma^{-\frac12} \partial_i\alpha^2 \gamma h^{ij} \partial_j \gamma^{-\frac12} + \alpha^2 Y \\
  &
= (\partial_t + \ii W^*) (\partial_t +\ii W) + L,
\end{align*}
where we introduced
\begin{align*}
  \gamma &:= \alpha^{-2} |g|^\frac12 = \alpha^{-1} |h|^\frac12, \\
  W &:=  \frac{\ii}{2} \gamma^{-1} \gamma\!,_{t} + \ii \gamma^{\frac12} \beta^i \del_i \gamma^{-\frac12}, \\
  L &:= -\partial_i^{\gamma\, *} \tilde{h}^{ij} \partial_j^{\gamma} + \tilde{Y},
\end{align*}
and we use the shorthands
\begin{align*}
  \tilde{h}^{ij} := \alpha^2 h^{ij}, \quad
  \tilde{Y} := \alpha^2 Y, \quad
  \partial_i^{\gamma} := \gamma^\frac12 \partial_i \gamma^{-\frac12},
  \quad \gamma\!,_{t} := \partial_t\gamma.
\end{align*}
Clearly, \eqref{tildkg} is still true, and hence propagators for $\tilde\Box$
induce corresponding propagators for $\Box$.

For each $t \in \mathbb{R}$, we  define
\begin{equation}
  B(t) =
        \begin{bmatrix}
          B_{11}(t) & B_{12}(t) \\B_{21}(t) & B_{22}(t)
        \end{bmatrix}
  := \begin{bmatrix} W(t) & \one \\ L(t) & W(t)^* \end{bmatrix}.\label{sca2}
\end{equation}
Setting $w_1(t) = \tilde u(t)$ and $w_2(t) = (\ii\partial_t - W(t))\tilde u(t)$,
we find that
\begin{equation}\label{sca3}
  \bigl( \partial_t + \ii B(t) \bigr) \begin{bmatrix} w_1(t) \\ w_2(t) \end{bmatrix} = 0
\end{equation}
if and only if $\tilde u$ is a solution of the Klein-Gordon equation
$\tilde\Box\tilde u = 0$. Therefore we occasionally call $\partial_t + \ii B(t)$
the \emph{first-order Klein--Gordon operator}. The half-densities $w_1(t)$ and
$w_2(t)$ may be called the \emph{Cauchy data} for $\tilde{u}$ at time $t$.

It is natural to introduce the \emph{classical Hamiltonian}
\begin{align*}
  H(t) & = QB(t) =
  \begin{bmatrix}
    L(t) & W^*(t) \\ W(t) & \one
  \end{bmatrix},\qquad Q=
  \begin{bmatrix}
    0 & \one \\\one & 0
  \end{bmatrix}
  .
\end{align*}
The operator $L(t)$ is a Hermitian operator on
$C_\mathrm{c}^\infty(\Sigma)$ in the sense of the Hilbert space
$L^2(\Sigma)$ and $H(t)$ is
 Hermitian  on
$C_\mathrm{c}^\infty(\Sigma)\oplus C_\mathrm{c}^\infty(\Sigma)$ in the sense of
$L^2(\Sigma)\oplus L^2(\Sigma)$.

We would like to apply the theory of non-autonomous evolution
equations due mostly to Kato and described in detail in Appendix C of
\cite{DS19}.
The space of Cauchy data
$C_\mathrm{c}^\infty(\Sigma)\oplus C_\mathrm{c}^\infty(\Sigma)$  has a
natural indefinite Klein-Gordon form
\begin{align}
(w|v)_\KG = (w|Qv) :=(w_1|v_2)+(w_2|v_1).
\label{kgb.}
\end{align}
However, we will need also a positive scalar product.
To this end we fix an appropriate  positive operator $L$ on $L^2(\Sigma)$ with $\cN(L)=0$  and
introduce the scalar product on  the Cauchy data
\begin{align}\label{sca1}
(w|v)_L :=(w_1|\sqrt{ L}v_1)+\big(w_2\big|\tfrac{1}{\sqrt{L}}v_2\big).
\end{align}
The Krein space given by the completion of
$C_\mathrm{c}^\infty(\Sigma)\oplus C_\mathrm{c}^\infty(\Sigma)$ in the
scalar product \eqref{sca1} will be denoted $\cW_\KG$.

The choice of the operator $L$ should be adapted to the family of
operators $B(t)$.
In typical situations, the operator $L(t)$ that appears in \eqref{sca2}
is positive at least  for some $t_0\in]t_-,t_+[$, and then one could to take
$L:=L(t_0)$. In any case, as we know, there exists a considerable freedom
of choosing $L$. A detailed discussion of conditions that one should
impose on $[g_{ij}]$ and $Y$ is contained in Section 2 and Appendix B
of \cite{DS19}, and also in \cite{DS22}. Under these conditions,
the evolution equation leads to a dynamics $R(t,s)$, which is
a two-parameter family of bounded operators on $\cW_\KG$ satisfying
\begin{align} R(t,t)&=\one, \notag \\
R(t,u)&=R(t,s)R(s,u), \notag\\
  \bigl(\partial_t+\ii B(t)\bigr)R(t,s)&=0,\notag\\
 \partial_s R(t,s)  - \ii R(t,s) B(t)&=0.
                                             \end{align}

The dynamics is a $2\times2$ matrix of operators acting on functions
on $\Sigma$:
\begin{align}
  R(t,s)=
        \begin{bmatrix}
          R_{11}(t,s) & R_{12}(t,s) \\R_{21}(t,s) & R_{22}(t,s)
        \end{bmatrix}
\end{align}
 with distributional kernels $R_{ij}(t,\mathbf{x};s,\mathbf{y})$.
 The classical propagators in the Cauchy data formalism are:
 \begin{align}
E^\PJ(t,
  \mathbf{x};s,\mathbf{y})&=\ii R(t,
                            \mathbf{x};s,\mathbf{y}),\\\label{unprob1}
 E^\vee(t,
  \mathbf{x};s,\mathbf{y})&=\ii\theta(t-s) R(t,
                            \mathbf{x};s,\mathbf{y}),\\\label{unprob2}
  E^\wedge(t,
  \mathbf{x};s,\mathbf{y})&=-\ii\theta(s-t) R(t,
                            \mathbf{x};s,\mathbf{y}).
                            \end{align}

  The usual classical propagators  well-known from the
literature, e.g. constructed in \cite{bar}, can be obtained by setting
\begin{align}
  G^\bullet(t,
  \mathbf{x};s,\mathbf{y})&=\ii \alpha(t,\mathbf{x})E_{12}^\bullet
        (t ,                   \mathbf{x};s,\mathbf{y})\alpha(s,\mathbf{y})
,\qquad \bullet=\PJ,\vee,\wedge.
                            \end{align}
 Note that they are usually
constructed by other mehods (e.g. by a local construction involving an
expansion in terms of Riesz potentials\cite{bar}). The
construction presented above has a considerable advantage over other
methods: it allows
for  low regularity of the metric and scalar potential. Its
apparent disadvantage is the need to impose assumptions global in the
slice $\Sigma$. However, using the finite speed of propagation, this
problem can be easily bypassed.

Note that the multiplication with step functions in the evolution
equation approach in \eqref{unprob1} and \eqref{unprob2}, and also
later in \eqref{proper3} and \eqref{proper4}, is unproblematic. In
fact, $R(t,s)$ is  an operator-valued function which is {\em strongly
continuous} in $t,s$. Therefore,  $R(t,\mathbf{x};s,\mathbf{y})$ is a
{\em continuous} function with values in distributions. Multiplication
of  $R(t,\mathbf{x};s,\mathbf{y})$ by $\theta(t-s)$ or $\theta(s-t)$
yields then a distribution.

\subsection{Non-classical propagators
on asymptotically stationary  spacetimes}
Assume now that $]t_-,t_+[=\RR$, so that the spacetime is
  $\RR\times\Sigma$ and the
  Klein-Gordon equation is
(a) asymptotically stationary and (b) asymptotically stable, i.e.,
\begin{align}
  &\text{(a) the strong resolvent limits
  } \lim_{t\pm\infty}B(t)=:B_\pm\quad\text{exist};
  \\
  &\text{(b) } H_\pm:=QB_\pm\geq0\quad \text{in the sense of  $L^2(\Sigma)\oplus L^2(\Sigma)$.}
\end{align}
  Assume that $0$ is not an eigenvalue of $B_+$ and $B_-$. Define the
  ``out/in particle/antiparticle projections'':
\begin{align}\label{sca4}
  \Pi_\pm^{(+)}:=&\one_{]0,\infty[}( B_\pm),\\\label{sca5}
  \Pi_\pm^{(-)}:=&\one_{]0,\infty[}(-B_\pm).\end{align}
As in Subsection \ref{Classical propagators from evolution equations},
we consider the Klein-Gordon form
\eqref{kgb.}. We choose the Krein structure so that both
$S_+:=\Pi_+^{(+)}-\Pi_+^{(-)}$ and $S_-:=\Pi_-^{(+)}-\Pi_-^{(-)}$ are
admissible involutions.

We can transport \eqref{sca4} and \eqref{sca5} by the evolution to any time $t$:
\begin{align}
  \Pi_\pm^{(+)}(t):=&\lim_{s\to\pm\infty}R(t,s)\Pi_\pm^{(+)}R(s,t),\\
  \Pi_\pm^{(-)}(t):=&\lim_{s\to\pm\infty}R(t,s)\Pi_\pm^{(-)}R(s,t).\end{align}
We can now define the ``out/in positive/negative frequency
bisolutions in the Cauchy data formalism'':
\begin{align}
  E_\pm^{(+)}(t,s)=  \Pi_\pm^{(+)}(t)R(t,s),\\
   E_\pm^{(-)}(t,s)=  \Pi_\pm^{(-)}(t)R(t,s).\end{align}
 Note that $\cR(\Pi^{(\pm)})$ and $\cR(\Pi^{(\pm)}(t))$ are maximal uniformly
positive/negative subspaces of the Krein space $\cW_\KG$.

We will need also projections $\Pi_{+-}^{(\pm)}(t)$ and
  $\Pi_{-+}^{(\pm)}(t)$ defined by specifying their range and
  nullspace:\footnote{Note that our notation is
 different from the convention in \cite{DS19,DS22}.}
    \begin{align}
\cR\big(\Pi_{+-}^{(+)}(t)\big)
&=\cN\big(\Pi_{+-}^{(-)}(t)\big)
=\cR\big(\Pi_+^{(+)}(t)\big),
\\ \notag
\cR\big(\Pi_{+-}^{(-)}(t)\big)
&=\cN\big(\Pi_{+-}^{(+)}(t)\big)
=\cR\big(\Pi_-^{(-)}(t)\big),
\\ \notag
\cR\big(\Pi_{-+}^{(+)}(t)\big)
&=\cN\big(\Pi_{-+}^{(-)}(t)\big)
=\cR\big(\Pi_-^{(+)}(t)\big),
\\ \notag
\cR\big(\Pi_{-+}^{(-)}(t)\big)
&=\cN\big(\Pi_{-+}^{(+)}(t)\big)
=\cR\big(\Pi_+^{(-)}(t)\big).
  \end{align}
    Note that
    \begin{align}
\Pi_{+-}^{(\pm)}+\Pi_{+-}^{(\mp)}&=\one.
      \end{align}
Now we can define the in-out Feynman and the out-in anti-Feynman Green functions in the
Cauchy data formalism:
\begin{align}\label{proper3}
  E_{+-}^\F(t,s)&=\theta(t-s) \Pi_{+-}^{(+)}(t)R(t,s)-\theta(s-t)
                  \Pi_{+-}^{(-)}(t)R(t,s),\\\label{proper4}
    E_{-+}^{ \overline\F}(t,s)&=\theta(t-s) \Pi_{-+}^{(-)}(t)R(t,s)-\theta(s-t)
                  \Pi_{-+}^{(+)}(t)R(t,s).
  \end{align}
Next we set
\begin{align}
  G_\pm^{(+)}(t,
  \mathbf{x};s,\mathbf{y})&=  \alpha(t,\mathbf{x})E_{\pm,12}^{(+)}(t,
                            \mathbf{x};s,\mathbf{y})  \alpha(s,\mathbf{y})
                            ,\\
    G_\pm^{(-)}(t,
  \mathbf{x};s,\mathbf{y})&=-  \alpha(t,\mathbf{x})E_{\pm,12}^{(-)}(t,
                            \mathbf{x};s,\mathbf{y})  \alpha(s,\mathbf{y})
                            ,\\
    G_{+-}^\F(t,
  \mathbf{x};s,\mathbf{y})&=\ii  \alpha(t,\mathbf{x}) E_{+-,12}^\F(t,
                            \mathbf{x};s,\mathbf{y})  \alpha(s,\mathbf{y})
                            ,\\
      G_{-+}^{ \overline\F}(t,
  \mathbf{x};s,\mathbf{y})&=\ii  \alpha(t,\mathbf{x})E_{-+,12}^{ \overline\F}(t,
                            \mathbf{x};s,\mathbf{y})  \alpha(s,\mathbf{y})
                            .
                            \end{align}

                            $G_-^{(\pm)}$ are  two-point functions of
                            the ``in-vacuum'' $\Omega_-$
                            and $G_+^{(\pm)}$ are  two-point functions
                            of the ``out-vacuum'' $\Omega_+$.
                            Both are Hadamard states \cite{GW17}.

The \emph{in-out Feynman propagator} $G_{+-}^\F(x,x')$ and the
\emph{out-in  anti-Feynman propagator} $G_{-+}^{\overline{\F}}(x,x')$
are the ``mixed Feynman propagators'' corresponding to those
states. In fact it is easy to see that if
$(\Omega_{+}|\Omega_{-})\neq0$ then
\begin{align}\label{they1}
 G_{+-}^\F(x,y)
 &= \ii \frac{\big(\Omega_{+}|\T\big(\hat\phi(x)\hat\phi(y)\big)\Omega_{-}\big)
 }{(\Omega_{+}|\Omega_{-})}, \\\label{they2}
  G_{-+}^{\overline{\F}}(x,y)
 &= -\ii \frac{\big(\Omega_{-}|\overline{\T}\big(\hat\phi(x)\hat\phi(y)\big)\Omega_{+}\big)
 }{(\Omega_{-}|\Omega_{+})}.
\end{align}

Assume in addition that $\alpha(x)$ and $\alpha^{-1}(x)$ are bounded on $M$.
One can then heuristically derive \cite{DS18,DS19}, and under some technical
assumptions rigorously prove \cite{V20,NT23}, that \eqref{they1} and
\eqref{they2}  coincide with the
operator-theoretic propagators:
\begin{align}
 G^\F_{\rm op} (x,y)= G_{+-}^\F(x,y),\\
 G^{\overline{\F}}_{\rm op}(x,y) =  G_{-+}^{\overline{\F}}(x,y).
\end{align}

\section{FLRW spacetimes}
\label{sec:FLRW}
\subsection{1+0-dimensional spacetimes}
\label{ssc:1dim}
$1+0$-dimensional spacetimes form an important class of
spacetimes for which we can understand various propagators
rather completely.

The  Klein-Gordon operator on $\RR^{1,0}$ can be
written as a {\em one-dimensional Schrödinger operator} (with the wrong sign
in front of the second derivative):
\begin{align}
 K:=-\Box+Y(t) =  \partial_t^2 + Y(t).
\end{align}
We will assume that
\begin{align}
  Y(t)=-V(t)+m^2,&\quad\lim_{t\to\pm\infty}V(t)=0,
\end{align}
so that we can write
\begin{align}
  H:=-\partial_t^2+V(t),\quad
  K= -H+m^2.\label{oned}
      \end{align}
Thus to discuss propagators on $1+0$-dimensional spacetimes one
needs to understand the theory of Green functions of
the one-dimensional Schrödinger operator $H$. A standard reference
for the subject is \cite{Y10}. In the following subsection, we present
this  well-known theory following  \cite{DeGeorgescu}
in a style adjusted to the QFT applications that we have in mind.

\subsection{Green functions of one-dimensional Schrödinger operators}
\label{sss:1dim_SchrOp}
Suppose that $k\in\mathbb{C}$ and  we are given two solutions $\psi_1,\psi_2$ of the
equation
\begin{align}
\label{eq:1d_eigeneq}
  (H+k^2)\psi(t)=0.
\end{align}
Their {\em Wronskian}
\begin{align}
 \sW(\psi_1,\psi_2) := \psi_1(t) \psi_2'(t)- \psi_1'(t) \psi_2(t)
\end{align}
does not depend on $t$. \eqref{eq:1d_eigeneq} possesses a dinstinguished
bisolution defined by
\begin{align}
 G^\leftrightarrow(-k^2;t,s) := \frac{\psi_1(t)\psi_2(s)-\psi_2(t)\psi_1(s)}{ \sW(\psi_1,\psi_2)}.
\end{align}
Note that
\begin{align}
 \label{eq:1d_in_cond}
 G^\leftrightarrow(-k^2;t,t)=0 \word{and}
 \partial_s G^\leftrightarrow(-k^2;t,s)\Big|_{s=t} = -\partial_t G^\leftrightarrow(-k^2;t,s)\Big|_{t=s}
=1.
\end{align}
It is easy to see that $G^\leftrightarrow(-k^2;t,s) $ is independent of the choice of
$\psi_1$ and $\psi_2$.
We call $G^\leftrightarrow(-k^2;t,s) $ the
\emph{canonical bisolution}. It is the analog of the Pauli-Jordan
propagator.

We then can define the \emph{forward and backward Green functions} via
\begin{align}
\label{eq:1d_forward}
 G^\rightarrow(-k^2;t,s) &:= \theta(t-s) G^\leftrightarrow(-k^2;t,s),
 \\ \label{eq:1d_backward}
 G^\leftarrow(-k^2;t,s) &:= -\theta(s-t) G^\leftrightarrow(-k^2;t,s).
\end{align}
Using \eqref{eq:1d_in_cond}, one readily verifies that
$G^\rightarrow$ and $G^\leftarrow$ are indeed Green functions. Needless
to say, they are the analogs of the retarded and advanced propagators.

Now let $\Re(k)>0$. The \emph{Jost solutions} $\psi_\pm(k,t)$
are the unique solutions of \eqref{eq:1d_eigeneq} with the
asymptotic behavior
\begin{align}
 \psi_\pm(k,t) \sim \ee^{\mp k t} \word{as} t\to\pm\infty.
\end{align}
The \emph{Jost function} is
\begin{align}
 \omega(k) := \sW\big( \psi_+(k,\cdot), \psi_-(k,\cdot)\big).
\end{align}
Then, it is well-known that the unique fundamental solution with appropriate decay behavior
as $|t|\to\infty$, that is, the integral kernel of the resolvent
$G(-k^2):=(H+k^2)^{-1}$, is
\begin{align}\label{resol}
 G(-k^2;t,s) := \frac{1}{\omega(k)}
 \Big( \theta(t-s) \psi_+(k,t) \psi_-(k,s)
 + \theta(s-t) \psi_-(k,t)\psi_+(k,s)\Big) .
\end{align}

Now let $m>0$. Setting $k=\pm\ii m$ in \eqref{resol} we see that the
distributional boundary values of the resolvent
on the spectrum are then given by
\begin{align}
 G(m^2\mp\ii0;t,s) = \frac{\theta(t-s) \psi_+(\pm\ii m,t) \psi_-(\pm\ii m,s)
 + \theta(s-t) \psi_-(\pm\ii m,t)\psi_+(\pm\ii m,s)}{\omega(\pm\ii m)}.
\end{align}
Thus we computed all four basic Green functions of the Klein-Gordon
equation given by \eqref{oned}:
\begin{align}
  \text{retarded propagator:}\qquad&G^\rightarrow(m^2;t,s),\\
  \text{advanced propagator:}\qquad&G^\leftarrow(m^2;t,s),\\
  \text{Feynman propagator:}\qquad& G(m^2-\ii0;t,s),\\
  \text{anti-Feynman propagator:}\qquad& G(m^2+\ii0;t,s).
                                        \end{align}
The Klein-Gordon scalar product essentially coincides with the
  Wronskian:
  \begin{align}
    (\psi_1|\psi_2)_\KG=\sW( \overline\psi_1,\psi_2).\end{align}

One can now ask when  the Klein-Gordon equation given
by the operator \eqref{oned} on a $1+0$-dimensional spacetime is special, i.e., when the following identity holds:
\begin{align}
\label{eq:1d_special}
 G(m^2-\ii0) + G(m^2+\ii0)
 =G^\rightarrow(m^2) + G^\leftarrow(m^2)?
\end{align}
To answer this question, it is useful to introduce the concept of
\emph{reflectionlessness}.
\begin{defn}
 Let $A(\pm\ii m)$ and $B(\pm\ii m)$ denote the coefficients of the
 scattering matrix, i.e.,
 \begin{align}
  \psi_+(\pm\ii m, t ) = A(\pm\ii m) \psi_-(\mp\ii m,t )
  + B(\pm\ii m) \psi_+(\mp\ii m,t ).
 \end{align}
The potential $Y(t)$ is called \emph{reflectionless at energy $m^2$} if
$B(\pm \ii m)=0$.
\end{defn}
We have the following theorem.
\begin{thm}
 The potential $Y(t)$ is reflectionless if and only if the
 Klein-Gordon equation given by \eqref{oned}
 is special, i.e., if and only if \eqref{eq:1d_special}
 is true.
\end{thm}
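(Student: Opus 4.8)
The plan is to evaluate both sides of \eqref{eq:1d_special} explicitly through the Jost solutions and reduce the specialty identity to a single statement about the reflection coefficient. First I would rewrite the right-hand side: combining \eqref{eq:1d_forward} and \eqref{eq:1d_backward} gives
\[
 G^\rightarrow(m^2)+G^\leftarrow(m^2)=\sgn(t-s)\,G^\leftrightarrow(m^2;t,s),
\]
and since $G^\leftrightarrow$ does not depend on the chosen pair of solutions, I would represent it through the Jost pair $\psi_\pm(\ii m,\cdot)$, whose Wronskian is $\omega(\ii m)$. The left-hand side is the sum of the two resolvent boundary values read off from \eqref{resol} at $k=\pm\ii m$.

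Next I would subtract the two sides and sort the resulting distribution by the half-spaces $\{t>s\}$ and $\{t<s\}$. Interchanging $t\leftrightarrow s$ maps the coefficient of $\theta(t-s)$ to that of $\theta(s-t)$, so it is enough to treat one of them. After cancelling the common term $\psi_+(\ii m,t)\psi_-(\ii m,s)/\omega(\ii m)$, the identity \eqref{eq:1d_special} becomes equivalent to
\[
 \frac{\psi_+(-\ii m,t)\,\psi_-(-\ii m,s)}{\omega(-\ii m)}
 +\frac{\psi_-(\ii m,t)\,\psi_+(\ii m,s)}{\omega(\ii m)}=0
 \quad\text{for all }t,s.
\]

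For the direction ``reflectionless $\Rightarrow$ special'' I would insert $B(\pm\ii m)=0$ into the scattering relation, so that $\psi_+(\pm\ii m,\cdot)=A(\pm\ii m)\psi_-(\mp\ii m,\cdot)$. Substituting this into the displayed identity factors out $\psi_-(\ii m,t)\psi_-(-\ii m,s)$ and leaves the scalar condition $A(\ii m)/\omega(\ii m)+A(-\ii m)/\omega(-\ii m)=0$. I would verify it by computing $\omega(\pm\ii m)=\pm 2\ii m\,A(\pm\ii m)$ from the constant Wronskian $\sW\big(\psi_-(-\ii m,\cdot),\psi_-(\ii m,\cdot)\big)=2\ii m$, evaluated from the asymptotics $\psi_-(\pm\ii m,t)\sim\ee^{\pm\ii m t}$ as $t\to-\infty$; the two terms then cancel.

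For the converse I would assume the displayed identity holds for all $t,s$ and exploit its rank-one structure: fixing $s_0$ with $\psi_+(\ii m,s_0)\neq0$ forces $\psi_+(-\ii m,\cdot)$ to be proportional to $\psi_-(\ii m,\cdot)$. Comparing with the scattering relation in the basis $\{\psi_-(\ii m,\cdot),\psi_+(\ii m,\cdot)\}$, which is linearly independent because $\omega(\ii m)\neq0$ at the positive energy $m^2$, yields $B(-\ii m)=0$, and hence $B(\ii m)=0$ by complex conjugation (using $\psi_\pm(-\ii m)=\overline{\psi_\pm(\ii m)}$ and the induced relations $A(-\ii m)=\overline{A(\ii m)}$, $B(-\ii m)=\overline{B(\ii m)}$). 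I expect the main obstacle to be bookkeeping rather than conceptual: correctly tracking the four Jost objects $\psi_\pm(\pm\ii m)$ and pinning down the sign in the Wronskian normalization that makes the scalar identity cancel, together with the mild standing assumption $\omega(\pm\ii m)\neq0$ (absence of a spectral singularity at $m^2$) needed for the linear independence above.
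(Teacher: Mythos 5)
Your proposal is correct and follows essentially the same route as the paper: both evaluate the sum of the resolvent boundary values through the Jost solutions, split the $\theta(t-s)$ and $\theta(s-t)$ parts, and reduce the specialty identity to the vanishing of the reflection coefficients $B(\pm\ii m)$ via the scattering relation and the Wronskian normalization $\sW\big(\psi_-(-\ii m),\psi_-(\ii m)\big)=2\ii m$. If anything, your converse direction (the rank-one/linear-independence argument forcing $B(\pm\ii m)=0$, with $B(\ii m)=\overline{B(-\ii m)}$ from reality of the potential) is spelled out more carefully than in the paper, which simply asserts the ``only if'' after noting $A(\pm\ii m)\neq0$.
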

\begin{proof}
 We have
 \begin{align}
 \notag
 &\quad G(m^2-\ii0)+G(m^2+\ii0)
  \\ \label{eq:1d_sum_forward}
  &= \theta(t-s) \Bigg(
  \frac{\psi_+(\ii m,t) \psi_-(\ii m,s) }{\omega(\ii m)}
  +\frac{\psi_+(-\ii m,t) \psi_-(-\ii m,s) }{\omega(-\ii m)}\Bigg)
  \\ \label{eq:1d_sum_backward}
  &\quad
  + \theta(s-t) \Bigg(
  \frac{\psi_-(\ii m,t) \psi_+(\ii m,s) }{\omega(\ii m)}
  +\frac{\psi_-(-\ii m,t) \psi_+(-\ii m,s) }{\omega(-\ii m)}\Bigg).
 \end{align}
Moreover,
\begin{align}
 \omega(\pm\ii m)
 &= \pm A(\pm \ii m) \sW( \psi_-(-\ii m), \psi_-( \ii m))
    + B(\pm\ii m) \sW( \psi_+(\mp\ii m), \psi_-( \pm\ii m)) .
\end{align}
Then the part \eqref{eq:1d_sum_forward} becomes
\begin{align}
 \theta(t-s)  &\Bigg(
 \frac{A(\ii m) \psi_-(-\ii m,t) \psi_-(\ii m,s)
 + B(\ii m) \psi_+(-\ii m,t)\psi_-(\ii m,s)}{
 A( \ii m) \sW( \psi_-(-\ii m), \psi_-( \ii m))
    + B(\ii m) \sW( \psi_+(-\ii m), \psi_-( \ii m)) }
    \\ \notag
&\quad - \frac{A(-\ii m) \psi_-(\ii m,t) \psi_-(-\ii m,s)
 + B(-\ii m) \psi_+(\ii m,t)\psi_-(-\ii m,s)}{
  A(- \ii m) \sW( \psi_-(-\ii m), \psi_-( \ii m))
    - B(-\ii m) \sW( \psi_+(\ii m), \psi_-( -\ii m)) }
 \Bigg)
\end{align}
Since $A(\pm \ii m)\neq0$, this is $G^\rightarrow$ if
and only if $B(\pm \ii m)=0$.  Similar for \eqref{eq:1d_sum_backward}.
\end{proof}

\subsection{Mode decomposition of FLRW spacetimes}
Consider a {\em Friedmann-Lemaître-Robertson-Walker (FLRW) spacetime},
that is, a spacetime $M= \bR\times\Sigma$ with the line element
\begin{align}
 \dd s^2 = - \dd t^2 + a(t)^2 \dd \Sigma^2,
\end{align}
 where $\dd\Sigma^2$ is the line element of a fixed $d-1$-dimensional
complete Riemannian manifold $\Sigma$. The Klein-Gordon operator is
\begin{align}\label{FLRW}
 -\square_g +m^2
 = \partial_t^2 + (d-1) \frac{\dot{a}(t)}{a(t)} \partial_t
 -  \frac{\Delta_\Sigma}{a(t)^2} + m^2,
\end{align}
where the dot indicates a derivative with respect to $t$.
Then
\begin{align}
\label{eq:KG_FLWR_gauged}
 a^{\tfrac{d-1}{2}} \big(-\square_g +m^2\big) a^{-\tfrac{d-1}{2}}
 = \partial_t^2 -\frac{d-1}{2} \Big(
 \frac{\ddot{a}}{a} + \frac{d-3}{2} \Big(\frac{\dot{a}}{a}\Big)^2
 \Big)  -  \frac{\Delta_\Sigma}{a(t)^2} + m^2.
\end{align}

It is well-known that $-\Delta_\Sigma$ is self-adjoint, and
  by the spectral theorem we can diagonalize $-\Delta_\Sigma$, and
then to
restrict \eqref{eq:KG_FLWR_gauged} to a (generalized) eigenfunction
(a ``mode'') of
$-\Delta_\Sigma$ with eigenvalue $\lambda$.
Thus, for each such mode, \eqref{eq:KG_FLWR_gauged} becomes
$-H_\lambda+m^2$, where
\beq H_\lambda:=-\partial_t^2+V_\lambda(t)\label{refle}
\eeq
is the one-dimensional
Schrödinger operator with potential
\begin{align}
 V_\lambda(t) = \frac{d-1}{2} \Big(
 \frac{\ddot{a}}{a} + \frac{d-3}{2} \Big(\frac{\dot{a}}{a}\Big)^2
 \Big)  - \frac{\lambda}{a(t)^2}.
\end{align}
Using Subsection \ref{sss:1dim_SchrOp}, we can then write all propagators as the integral over all modes.

As a consequence,  the Klein-Gordon equation given by
  \eqref{FLRW} is special if and only if
\eqref{refle}
is reflectionless
at energy $m^2$ for all $\lambda$ in the spectrum of $-\Delta_\Sigma$.



\section{De Sitter space}
\label{ssc:dS}

Our next example is the $d$-dimensional {\em de Sitter space} $\dS_d$.
De Sitter space is  an important example of a non-stationary spacetime
and one of the simplest examples to model a universe with an
accelerated expansion. It exhibits a particularly
rich structure and, being a symmetric space, all its
invariant propagators can be given explicitly in
terms of special functions.

We will describe four different approaches to investigate propagators on
$\dS_d$. The first  is based on Wick rotation (analytic
continuation) from the sphere $\mathbb{S}^d$. One obtains the
so-called Euclidean state, considered to be the most physical
invariant state on $\dS_d$. The second approach
is the off-shell approach based on the resolvent of the d'Alembertian
on $L^2(\dS_d)$. Somewhat surprisingly, it
leads to non-physical two-point functions. The third approach is
 the on-shell approach based on $\cW_\KG$. It leads to the well-known family of de Sitter invariant
two-point functions corresponding to the so-called $\alpha$-\emph{vacua}.
One can then compute invariant correlation functions
between \emph{two different} $\alpha$-vacua. Finally, we may
interpret $\dS_d$ as a special case of a FLRW spacetime and apply
the methods
of Section \ref{sec:FLRW}.

 Note that the first three approaches directly lead
to simple expressions for  invariant propagators. The last approach breaks manifest de Sitter
invariance, and to obtain  invariant expressions, one needs to sum
over all modes using rather complicated addition formulas for special
functions.

There is a very large literature about propagators on de Sitter space.
Particularly useful for our considerations were
\cite{ABDMPS19,Allen85,BD69,BMS02,BEM98,BGM94,BM96,CT68,CritchleyPhD,FH14,FSS13,HMM11,Hollands12,Hollands13,Lokas,Moschella06,SS76,SSV01,RumpfdS,AvisIS78}. In these references, one finds different
approaches to investigate propagators on de Sitter space.

Many of them use mode sums to construct propagators --
sometimes explicitly like in \cite{FSS13,ABDMPS19,M85,BMS02}, sometimes abstractly like in \cite{Allen85}. The papers
\cite{BEM98,BGM94,BM96}
have an axiomatic approach much in the spirit of G\r{a}rding
and Wightman. Only the reference \cite{RumpfdS} uses the
operator-theoretic approach to define the Feynman propagator
in  $d=4$ dimensions.

\subsection{Geometry of de Sitter space}
\label{sss:dS_geom}
The $d$-dimensional de Sitter space $\dS_d$ is defined by an embedding into
$d+1$-dimensional Minkowski space $\bR^{1,d}$. Let $[\cdot|\cdot]$
denote the pseudo-scalar product on $\bR^{1,d}$ defined by
\begin{align}\label{minko}
 [x|x'] = -x^0 {x'}^0 + \sum_{i=1}^d x^i {x'}^i.
\end{align}
 Then the $d$-dimensional
de Sitter space is the one-sheeted hyperboloid
\begin{align}
 \dS_d := \{ x\in\bR^{1,d}\;|\; [x|x]=1\}.
\end{align}

Let us introduce some notation that will frequently appear
throughout this section. For $x,x' \in\dS_d \hookrightarrow \bR^{1,d}$,
we define
\begin{align}
 &\textup{the invariant quantity}  &&Z \equiv Z(x,x') := [x|x'],
 \\ \notag
 &\textup{the antipodal point to } x{\rm :} && x^A:=-x,
 \\ \notag
 &\textup{the time variable} && t  \equiv t(x,x') := x^0-{x'}^0,
 \\ \notag
 &\textup{the ``antipodal time'' variable}  && t^A := t(x^A,x') := -(x^0+{x'}^0).
\end{align}
While $t$ and $t^A$ are two independent variables, we have $Z(x^A,x') = -Z(x,x')=-Z$.

De Sitter space has various regions:
\begin{align}
 &Z>1: &&\quad x \text{ and } x' \text{ are timelike separated,}
 \\ \notag
&Z=1: &&\quad x \text{ and } x'
    \text{ are separated by a null-geodesic,}
 \\ \notag
 &Z<1: &&\quad x \text{ and } x' \text{ are not
 connected by a causal curve.}
\end{align}
The last region includes the subregions
\begin{align}
 &Z=-1: &&\quad x^A \text{ and } x'
 \text{ are separated by a null-geodesic},
 \\ \notag
 &Z<-1: &&\quad x^A \text{ and } x' \text{ are timelike separated}.
\end{align}

One may further divide the regions $Z>1$ and $Z<-1$ into future and past
dependent on whether $t$, resp. $t^A$ are positive or negative.
Thus, if we fix a point $x'\in\dS_d$, then we can partition $\dS_d$
into 5 regions:
\begin{align}
 \dS_d=V^+\cup V^-\cup A^+\cup A^-\cup S
\end{align}
as depicted in Figure \ref{fig:dS}.

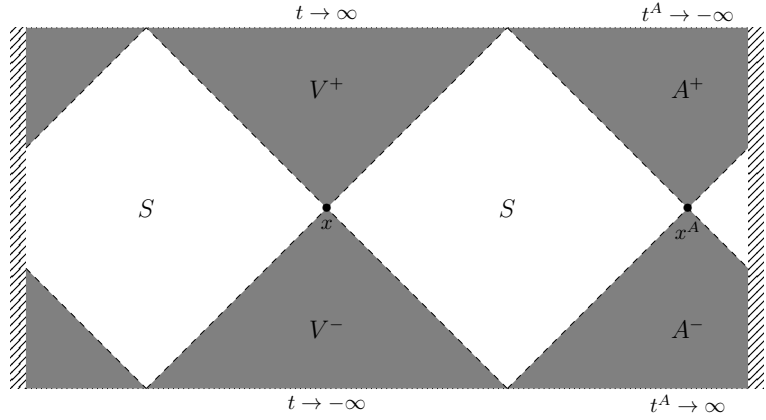
\begin{figure}[ht]
\caption{Conformal diagram of de Sitter space with the reference
point $x$ and the regions
$V^{\pm} := \{ Z(x,x')>1 \;|\; t(x,x')\gtrless0 \}$,
$A^{\pm} := \{ Z(x,x')<-1 \;|\; t(x^A,x')\lessgtr0 \}$
and $S :=\{|Z(x,x')|<1\}$. The left and right side of the diagram
are glued together and each point represents a $d-2$-sphere.}\label{fig:dS}
{\scalebox{0.8}{\begin{tikzpicture}[scale=0.5, domain=-3:5]

\begin{scope}
\path[fill=gray, opacity=0.2] (0,0) -- (6,-6) -- (-6,-6) -- cycle;
\end{scope}
\begin{scope}
\path[fill=gray, opacity=0.2] (0,0) -- (6,6) -- (-6,6) -- cycle;
\end{scope}
\draw[color=black, dashed] (-6,-6) -- (0,0) -- (-6,6);
\draw[color=black, dashed] (6,-6) -- (0,0) -- (6,6);

\begin{scope}
\path (8,-4) -- (8,4) -- (0,0) -- cycle;
\end{scope}
\begin{scope}
\path[fill=gray, opacity=0.1] (6,6) -- (12,0) -- (14,2) -- (14,6) -- cycle;
\end{scope}
\begin{scope}
\path[fill=gray, opacity=0.1] (6,-6) -- (12,0) -- (14,-2) -- (14,-6) -- cycle;
\end{scope}
\begin{scope}
\path[fill=gray, opacity=0.1] (-10,2) -- (-6,6) -- (-10,6)  -- cycle;
\end{scope}
\begin{scope}
\path[fill=gray, opacity=0.1] (-10,-2) -- (-6,-6) -- (-10,-6)  -- cycle;
\end{scope}

\begin{scope}
\path[fill=gray, opacity=0.2, pattern= north east lines] (-10,6) -- (-10,-6) -- (-10.5,-6) -- (-10.5,6) -- cycle;
\end{scope}
\begin{scope}
\path[fill=gray, opacity=0.2, pattern= north east lines] (14,6) -- (14,-6) -- (14.5,-6) -- (14.5,6) -- cycle;
\end{scope}

\draw[color=black, dashed] (6,6) -- (14,-2);
\draw[color=black, dashed] (6,-6) -- (14,2);
\draw[color=black, dashed] (-10,-2) -- (-6,-6);
\draw[color=black, dashed] (-10,2) -- (-6,6);

\draw[dotted,color=black] (-10,-6) -- (14,-6);
\draw[dotted,color=black] (-10,6) -- (14,6);

\draw[color=black] (0,0) node {\footnotesize $\bullet$};
\draw[color=black] (0,-0.1) node[below] {\footnotesize $x$};
\draw[color=black] (12,0) node {\footnotesize $\bullet$};
\draw[color=black] (12,-0.15) node[below] {\footnotesize $x^A$};

\node at (0,4) {$V^{+}$};
\node at (0,-4) {$V^{-}$};
\node at (12,-4) {$A^{-}$};
\node at (12,4) {$A^+$};
\node at (6,0) {$S$};
\node at (-6,0) {$S$};
\node at (0,6.5) {\footnotesize $t\to\infty$};
\node at (0,-6.5) {\footnotesize $t\to-\infty$};
\node at (12,-6.5) {\footnotesize $t^A\to\infty$};
\node at (12,6.5) {\footnotesize $t^A\to-\infty$};
\end{tikzpicture}}}
\end{figure}

The de Sitter space possesses a global system of coordinates
\begin{align}
\label{eq:dS_global_coords}
 x^0 =  \sinh\tau, \quad x^i=\cosh\tau\; \Omega^i,\;i=1,\dots,d,
 \word{where}  \tau\in\bR,\;\Omega\in\bS^{d-1}\hookrightarrow\bR^{d}.
\end{align}
In these coordinates we have
$\dd s^2=-\dd \tau^2+\cosh^2(\tau)\dd\Omega^2$ and
\begin{align}
\label{eq:Z_dS_global_coords}
 Z = -\sinh\tau\sinh\tau' +\cosh\tau\cosh\tau'\cos\theta,
\end{align}
where $\theta$ is the angle between $\Omega$ and $\Omega'$.
 If $x=(0,1,0\dots)$, then $Z=\cosh\tau'\cos\theta$.

  Both the {\em (full) de Sitter group} $O(1,d)$ and
  the  {\em restricted de Sitter group} $SO_0(1,d)$, that is,
  the connected component of the identity in $O(1,d)$, act on $\dS_d$.
The Klein-Gordon equation restricted to invariant solutions
and written in terms of $Z$ reduces to the Gegenbauer equation,
a form of the  hypergeometric equation  \cite{BirrellDavies,Allen85,SS76,GS68,M85,CT68} whose properties we
discuss in Appendix \ref{app:gegenbauer}.

In the literature one
often restricts analysis to subsets of $\dS_d$, such as the {\em
  Poincaré patch} or the
{\em static patch}, which allow for coordinate systems with special
properties.
In our paper we consider only the ``{\em global patch}'', that is the
full de Sitter space. Otherwise, we would have to consider
boundary conditions for the d'Alembertian at the boundary of our patch
(which would break the de Sitter invariance and presumably
be non-physical).

For more information about de Sitter space, consult the 
overviews \cite{Moschella06,SSV01} .

\subsection{The sphere}

The de Sitter space can be viewed as a Wick-rotated sphere. Therefore,
in this subsection we recall some facts about the sphere and
 the Green function of the spherical Laplacian.

Consider the $d+1$ dimensional Euclidean space equipped with the
scalar product
\beq(x|x')=\sum_{i=1}^{d+1}x^ix^{\prime i}.\eeq
The $d$-dimensional (unit) sphere is defined as
\begin{align}
 \mathbb{S}^d := \{ x\in\bR^{d+1}\;|\; (x|x)=1\}.
\end{align}
For $\Re(\nu)>0$ or $\nu\in\ii\bR_{\geq0}\setminus \ii\big(
\tfrac{d-1}{2}+\bN_0\big)$, let us consider the resolvent of the
spherical Laplacian
$G^\s(-\nu^2):=(-\Delta^\s+(\frac{d-1}{2})^2+\nu^2)^{-1}$.
Its integral kernel $G^\s(-\nu^2;x,x')$ can be expressed in
terms of the invariant
quantity $(x|x')$ (see e.g. \cite{DGR23a,DGR23b}, and
  \cite{CDT18,Szmytkowski07}, where Legendre functions are used) as:
 \begin{align}\label{eq:G_sphere}
  G^\s(-\nu^2;x,x')
   =C_{d,\nu}
     {\bf S}_{\frac{d-2}{2},\ii\nu }\big(-(x|x')\big),
 \end{align}
where
\begin{align}\label{cdnu}
C_{d,\nu}:= \frac{\Gamma\big(\tfrac{d-1}{2}+\ii\nu \big)
\Gamma\big(\tfrac{d-1}{2}-\ii\nu \big)
     }{(4\pi)^{\frac{d}{2}}},
\end{align}
and  ${\bf S}_{\alpha,\lambda }(z)$ is the Gegenbauer function
described in Appendix \ref{app:gegenbauer}.

\subsection{Propagators related to  the Euclidean state}

We now turn to
the $d$-dimensional de Sitter space for $d\geq2$.
We will analyze bi-
and fundamental solutions of the Klein-Gordon equation
\beq(-\Box+m^2)\phi(x)=0\label{kgo1}\eeq
in de Sitter space,
which are  invariant under the full or restricted de Sitter group.
Note that $m$ might contain a coupling to the
scalar curvature. Hence it is sometimes called {\em effective
  mass}. Anyway, we prefer to use the parameter $\nu$ defined by
\begin{align}
 \nu := \sqrt{m^2-\big(\tfrac{d-1}{2}\big)^2}\in\bC .
\end{align}
Thus \eqref{kgo1} is replaced with
\begin{align}
\Big(-\Box+\big(\tfrac{d-1}{2}\big)^2+\nu^2\Big)\phi(x)=0.
\label{kgo}
\end{align}
We will  allow for complex $\nu^2$,  choosing the principal sheet
of the square root, that is $\nu\in\{\Re(\nu)>0\}$. The case
of positive $\nu^2$ has analogous properties to that  of positive
$m^2$ in Minkowski space. In the case $\nu^2<0$ we assume that
$\nu\in\ii\bR_{\geq0}$. It is more intricate than the case $\Re(\nu)>0$
and contains various subcases with different exotic
properties. It is somewhat
analogous  to the tachyonic case in Minkowski space.

On a generic spacetime the concept of the Wick rotation is not
uniquely defined.
However, on the de Sitter space embedded in $\mathbb{R}^{1,d}$
there is a natural kind of a Wick rotation, which we will use: the
replacement of $x^{d+1}$ with $\pm\ii x^0$. We note first that
 \begin{align}
  (x|x') =  1-\frac{(x-x'|x-x')}{2} \word{for} x,x' \in \bS^d.
 \end{align}
The replacement of $x^{d+1}-{x'}^{d+1}$ with $ (x^0-{x'}^0)
\ee^{\pm\ii\phi}$, $\phi\in[0,\tfrac{\pi}{2}]$, yields
 \begin{align}
 &(x^{d+1}-x'^{d+1})^2 \to (x^0-x'^0 )^2 \ee^{\pm2\ii\phi}
 \stackrel{\phi\to\tfrac{\pi}{2}}{\longrightarrow}
 -(x^0-x'^0 )^2\pm\ii0
\\ \notag
  \Rightarrow\;&
  (x|x') \to [x|x']\mp\ii0.
 \end{align}
 Moreover, we need to insert a prefactor $\pm \ii$ coming from
 the change of the integral measure.

Let $\Re(\nu)>0$ or $\nu\in\ii\bR_{\geq0}\setminus \ii\big(
\tfrac{d-1}{2}+\bN_0\big)$.
The Feynman and anti-Feynman propagators in the
$d$-dimensional de Sitter space obtained by Wick rotation of the
Green function \eqref{eq:G_sphere} on the sphere are given by
 \begin{align}
 \label{eq:GF_GFbar_global_dS}
 G^{\F/\overline{\F}}_{0}(x,x')
&= \pm \ii C_{d,\nu}\;
     {\bf S}_{\frac{d}{2}-1,\ii\nu }\big(-Z\pm\ii0\big),
\end{align}
where $C_{d,\nu}$ is given by \eqref{cdnu} and $Z:= [x|x'].$
 We easily check that
\eqref{eq:GF_GFbar_global_dS} are Green functions of the Klein-Gordon
equation on $\dS_d$.

 The sum of the Euclidean Feynman and anti-Feynman propagator
 has a causal  support, for
  ${\bf S}_{\alpha,\lambda }(z)$ is
 holomorphic on $\bC \setminus ]-\infty,-1]$,  and therefore
 \begin{align}\label{vani}
   G^{\F}_{0} + G^{\overline{\F}}_{0}
   = \ii C_{d,\nu}\;\Big(
     {\bf S}_{\frac{d}{2}-1,\ii\nu }\big(-Z+\ii0\big)
     -{\bf S}_{\frac{d}{2}-1,\ii\nu }\big(-Z-\ii0\big)\Big)
 \end{align}
 vanishes for $Z<1$.

 As we will see later, $G^{\F}_{0}$ and $G^{\overline{\F}}_{0}$
 are not the operator-theoretic Feynman and anti-Feynman propagators.
 However, we can still apply to them the procedure described in
 Subsection \ref{Special spacetimes}. This leads to  the classical
 propagators
\begin{align}
   G^{\lor/\land} (x,x')
  &= \ii \theta\big(\pm(x^0-{x'}^0)\big)
 C_{d,\nu}\;
     \Big(
     {\bf S}_{\frac{d}{2}-1,\ii\nu }\big(-Z+\ii0 \big)
   -{\bf S}_{\frac{d}{2}-1,\ii\nu }\big(-Z-\ii0  \big)
   \Big)
     ,
     \label{eq:ret_adv_props_dS1}\\
   G^{\PJ} (x,x')
  \label{eq:PJ_prop_dS1}
  &=\ii \sgn\big(x^0-{x'}^0\big)
 C_{d,\nu}
     \Big(
     {\bf S}_{\frac{d}{2}-1,\ii\nu }\big(-Z+\ii0 \big)
   -{\bf S}_{\frac{d}{2}-1,\ii\nu }\big(-Z-\ii0  \big)
    \Big),\end{align}
as well as to the positive/negative frequency solutions
\begin{align}
   G^{(\pm)}_{0} (x,x')
   &= C_{d,\nu}\;
       {\bf S}_{\frac{d}{2}-1,\ii\nu }\Big(
     -Z\pm\ii0 \sgn\big(x^0-{x'}^0\big)     \Big).
  \label{eq:pos_neg_props_dS}
 \end{align}
$ G^{(\pm)}_{0} $ have the Hadamard property and are two-point
functions of a state  called the \emph{Euclidean state} $\Omega_0$
(sometimes also called the \emph{Bunch-Davies state})
\cite{Allen85,BunchDavies78,CT68,GS68,M85,SS76}.

Note that the propagators associated to the Euclidean vacuum satisfy
all relations \eqref{eq:relations_props} with $\alpha=\beta=0$.
  The classical propagators \eqref{eq:ret_adv_props_dS1} and
\eqref{eq:PJ_prop_dS1}
are universal: they do not depend on the Euclidean vacuum, therefore
we do not decorate them with the subscript $0$.

\subsection{Bisolutions and Green functions}

The family of invariant propagators on the de Sitter space is quite
rich and is not limited to those related to the Euclidean state,
discussed in the previous subsection. In order to prepare for their
analysis, in this subsection we will descibe invariant solutions of
the Klein-Gordon equation on de Sitter space.

From the analysis of previous subsection we easily see that the
following functions are bisolutions invariant with respect to the
full de Sitter group:
\begin{align}\label{symm}
  G^\sym_{0}(x,x')
  :=&\;  G^{(+)}_{0}(x,x') + G^{(-)}_{0}(x,x') \\
  \notag
  =&\; C_{d,\nu} \Big(
     {\bf S}_{\frac{d}{2}-1,\ii\nu }\big(-Z+\ii0 \big)
   +{\bf S}_{\frac{d}{2}-1,\ii\nu }\big(-Z-\ii0  \big)
     \Big),\\\label{symm-A}
  G^{\sym,A}_{0}(x,x') := & G^\sym_{0}(x^A,x')=G^\sym_{0}(x,x^{\prime A})\\
    =&\; C_{d,\nu} \Big(
     {\bf S}_{\frac{d}{2}-1,\ii\nu }\big(Z+\ii0 \big)
   +{\bf S}_{\frac{d}{2}-1,\ii\nu }\big(Z-\ii0  \big)
       \Big),\notag
\end{align}
The following functions are bisolutions invariant with respect to the
 restricted de Sitter group:
\begin{align}
     G^{\PJ} (x,x'):=&\ii\big(  G^{(+)}_{0}(x,x') - G^{(-)}_{0}(x,x')  \big)
  \label{eq:PJ_prop_dS1-}\\
  =&\ii \sgn\big(t\big)
 C_{d,\nu}
     \Big(
     {\bf S}_{\frac{d}{2}-1,\ii\nu }\big(-Z+\ii0 \big)
   -{\bf S}_{\frac{d}{2}-1,\ii\nu }\big(-Z-\ii0  \big)
      \Big),\notag
  \\
         G^{\PJ,A} (x,x') := &      G^{\PJ} (x^A,x') =-  G^{\PJ}
                               (x,x^{\prime A})
  \label{eq:PJ_prop_dS1-A}\\\notag
  =&\ii \sgn\big(t^A\big)
 C_{d,\nu}
     \Big(
     {\bf S}_{\frac{d}{2}-1,\ii\nu }\big(Z+\ii0 \big)
   -{\bf S}_{\frac{d}{2}-1,\ii\nu }\big(Z-\ii0  \big)
    \Big).
\end{align}
 Indeed, we already know that $G_0^{(\pm)}$ are bisolutions, hence so
are \eqref{symm} and
\eqref{eq:PJ_prop_dS1-}. It is also clear that replacing $x$ with
$x^A$, used in \eqref{symm-A} and \eqref{eq:PJ_prop_dS1-A} leads to
invariant bisolutions. We expect that the following is true:
\begin{con}\label{thm:dS_bisol}
 For any $\nu\in\bC$ such that
$\tfrac{d-1}{2}\pm\ii\nu \notin\{0,-1,-2,\dots\}$,
$\{G_0^{\sym},G_0^{\sym,A}\}$ is a basis of
the space of fully de Sitter invariant bisolutions, and
  $\{G_0^{\sym},G_0^{\sym,A},G^\PJ,G^{\PJ,A}\}$ is a basis of the
  space of bisolutions invariant under the restricted de Sitter group.
\end{con}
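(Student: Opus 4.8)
The plan is to reduce the classification to a second-order ordinary differential equation in the single invariant $Z=[x|x']$ and then to determine the dimension of the solution space by a careful analysis at the two exceptional loci $Z=\pm1$. First I would use that $SO_0(1,d)$ acts transitively on each of the open regions $V^+,V^-,S,A^+,A^-$ (and $O(1,d)$ on $\{Z>1\}$, $\{|Z|<1\}$, $\{Z<-1\}$): on each such region an invariant distribution is a function of $Z$ alone, and, as already recorded above, the Klein-Gordon operator acting on functions of $Z$ reduces to the Gegenbauer operator, a second-order equation with regular singular points at $Z=1,-1,\infty$. By analyticity of solutions of an ODE with analytic coefficients away from its singular points, the restriction of any invariant bisolution to each region is a real-analytic Gegenbauer solution, hence lies in a two-dimensional space.

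The linear independence of the proposed bases is the easy half. Under the time reversal $T\colon x^0\mapsto-x^0$, which lies in $O(1,d)\setminus SO_0(1,d)$, the invariant $Z$ is fixed while $\sgn t$ and $\sgn t^A$ change sign; hence $G_0^{\sym},G_0^{\sym,A}$ are $T$-even and $G^\PJ,G^{\PJ,A}$ are $T$-odd. Moreover ${\bf S}_{\frac d2-1,\ii\nu}$ is holomorphic off $(-\infty,-1]$, so $G_0^{\sym}$ and $G^\PJ$ are singular only where $Z\ge1$ whereas $G_0^{\sym,A}$ and $G^{\PJ,A}$ are singular only where $Z\le-1$. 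Disjoint singular supports together with opposite $T$-parities immediately give that all four are independent, and that the two even ones span a space meeting the odd ones trivially.

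The substantial part is spanning. For an arbitrary invariant bisolution $B$ I would exploit the connection theory of the Gegenbauer equation at the regular singular point $Z=1$ (and, antipodally, at $Z=-1$): the passage from $S$ into $V^+$, respectively $V^-$, is the boundary value of the analytic continuation from the upper, respectively lower, half $Z$-plane, which is exactly the kind of $\pm\ii0$ prescription built into $G_0^{(\pm)}$. The requirement that $B$ be a \emph{genuine} bisolution, i.e.\ that $(-\Box+m^2)B=0$ hold across the cones $Z=\pm1$ with no delta-type contribution supported there, then dictates how the branches on neighbouring regions must be assembled. The two boundary values ${\bf S}(-Z+\ii0)$ and ${\bf S}(-Z-\ii0)$ at $Z=1$ --- equivalently the combinations $G_0^{\sym}$ and $G^\PJ$ --- supply two admissible bisolutions, the antipodal pair $G_0^{\sym,A},G^{\PJ,A}$ supplies two more, and one checks these exhaust the gluing freedom, giving dimension $4$ under $SO_0(1,d)$. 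For the full group the $T$-odd combinations $G^\PJ,G^{\PJ,A}$ are forbidden, leaving the two even solutions and dimension $2$. The hypothesis $\tfrac{d-1}2\pm\ii\nu\notin\{0,-1,-2,\dots\}$ enters precisely here, keeping $C_{d,\nu}$ finite and the two Gegenbauer exponents non-resonant so that this boundary-value basis is non-degenerate.

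The hard part will be the distributional matching at $Z=\pm1$, where the leading coefficient $Z^2-1$ of the Gegenbauer operator vanishes and where, for the restricted group, the cone is two-sheeted (future and past). The Frobenius exponents there are $0$ and $1-\tfrac d2$, which differ by an integer for even $d$ (the resonant, potentially logarithmic case) and by a half-integer for odd $d$ (clean power behaviour); the local analysis, and in particular the verification that no extra invariant bisolution hides in an anomalous singular term at the cones, must therefore be carried out with attention to the parity of $d$. Showing that these matching conditions cut the naive ten (respectively six) regional parameters down to exactly four (respectively two), uniformly for all admissible $\nu$, is the delicate point --- and the reason the assertion is stated here only as a conjecture.
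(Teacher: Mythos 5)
The first thing to say is that the paper does not prove this statement at all: it is deliberately labelled a \emph{Conjecture}, and the authors state explicitly in Subsection \ref{sss:dS_op_inv} that it ``does not have a complete proof in our paper.'' So there is no proof of record to compare yours against; the only question is whether your proposal closes the gap the authors left open. It does not, and to your credit you say so yourself in the last sentence.

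Concretely, three steps of your outline are genuine gaps rather than routine verifications. First, the reduction ``an invariant distribution on each open region is a function of $Z$ alone'' is not automatic for \emph{distributions}: you need an argument (elliptic regularity transverse to the orbits, or a Bruhat/Meth\'ee-type theorem on invariant distributions for the $O(1,d)$-action on $\dS_d\times\dS_d$) to show that an invariant distributional bisolution restricts on each open orbit stratum to a smooth, hence Gegenbauer, solution of the reduced ODE. Second, and more seriously, your classification only sees the open regions $V^\pm, S, A^\pm$; it does not exclude invariant bisolutions whose support lies entirely in the lower-dimensional strata $\{Z=1\}$, $\{Z=-1\}$ (including the diagonal and antidiagonal, where the group action degenerates). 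Ruling these out is part of what must be proven, and it is exactly where the hypothesis $\tfrac{d-1}{2}\pm\ii\nu\notin\{0,-1,-2,\dots\}$ and the parity of $d$ bite. Third, the sentence ``one checks these exhaust the gluing freedom'' is the entire content of the conjecture: the distributional matching across the characteristic surfaces $Z=\pm1$, where the leading coefficient $1-Z^2$ of the Gegenbauer operator vanishes and the Frobenius exponents $0$ and $1-\tfrac d2$ are resonant for even $d$, is the ``delicate point'' you name but do not carry out. As it stands, your text is a plausible and well-organized research plan — essentially the route the authors hint at when they mention the alternative via mode sums and addition formulas \`a la \cite{FSS13} — but it is not a proof, and it does not upgrade the conjecture to a theorem.
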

Note that the Gegenbauer function ${\bf S}_{\frac{d}{2}-1,\ii\nu}(w)$
is an entire function of $\nu$. If we were only interested in bisolutions,
we could drop the restriction $\tfrac{d-1}{2}\pm\ii\nu\notin\{0,-1,-2,\dots\}$
in Thm. \ref{thm:dS_bisol}, which is only necessary due to the poles of the
prefactor $C_{d,\nu}$ at such $\nu$. However, we eventually want to relate
bisolutions to Green functions by time-ordering, and therefore we normalize
them properly.

Functions invariant with respect to the full
de Sitter group can always be written in terms of the invariant quantity
$Z$ alone.
The Klein-Gordon equation restricted to invariant
solutions and written in terms of $Z$ reduces to the Gegenbauer equation
(cf. e.g. \cite{GS68,Allen85,BMS02})
\begin{align}
\Big((1-Z^2)\del_Z^2 - d Z \del_Z - \nu^2 - \big(\tfrac{d-1}{2}\big)^2\Big) f(Z)=0.
\end{align}
Therefore, all bisolutions and Green functions invariant wrt the full
de Sitter group can
be expressed in terms of Gegenbauer functions.

If we only demand invariance under the restricted de Sitter
group, the regions $V^+$ and $V^-$ as well as $A^+$ and $A^-$ need to
be treated as independent. Hence for $|Z|>1$, propagators  invariant under
the restricted de Sitter group may depend on $\sgn(t)$ resp. $\sgn(t^A)$.

Assuming the validity of Conjecture \ref{thm:dS_bisol}, the
general bisolution is
\begin{align}
\label{eq:gen_bisol}
  G^{{\rm bisol}}_{\underline{a}} =
 :=&\, \ii a_1 G_0^\sym + a_2 G^\PJ
 + \ii a_3 G_0^{\sym,A} + a_4 G^{\PJ A}\\\notag
 =&\, \ii C_{d,\nu}\Big(
 \big(a_1 + a_2\sgn(t)\big) {\bf S}_{\tfrac{d-2}{2},\ii\nu}(-Z+\ii0)
  \\ \notag
  &\qquad +\big(a_1 - a_2\sgn(t)\big) {\bf S}_{\tfrac{d-2}{2},\ii\nu}(-Z-\ii0)
  \\ \notag &\qquad
+ \big(a_3 - a_4\sgn(t^A) \big) {\bf S}_{\tfrac{d-2}{2},\ii\nu}(Z+\ii0)
\\ \notag
  &\qquad
  + \big(a_3 + a_4\sgn(t^A) \big) {\bf S}_{\tfrac{d-2}{2},\ii\nu}(Z-\ii0)
 \Big)
\end{align}
and the general fundamental solution is
\begin{align}
\label{eq:gen_fund_sol}
  G_{\underline{a}}
 :=&\,  G^{\F}_{0} + G^{{\rm bisol}}_{\underline{a}}
 = \ii C_{d,\nu} {\bf S}_{\tfrac{d-2}{2},\ii\nu}(-Z+\ii0)
 + G^{{\rm bisol}}_{\underline{a}}  .
\end{align}

\subsection{Resolvent of the d'Alembertian and
operator-theoretic propagators}
\label{sss:dS_op_inv}

The d'Alembertian $-\Box$ is essentially self-adjoint  on
$C_\mathrm{c}^\infty(\dS_d)$ in the sense of $L^2(\dS_d)$.
This follows from a general theory of invariant differential operators on symmetric spaces  \cite{Rossmann78,vdBan84} and the fact that de Sitter
space can be seen as the quotient of Lie groups $O(1,d)/O(1,d-1)$. In this subsection we will compute its resolvent and operator-theoretic
Feynman and anti-Feynman propagators.
In the four-dimensional case, this has been studied \cite{RumpfdS}.

Outside of the spectrum of $-\Box+\big(\tfrac{d-1}{2}\big)^2$, its
resolvent (Green operator) will be denoted by
\begin{align}
G(-\nu^2):=\Big(-\Box+\big(\tfrac{d-1}{2}\big)^2+\nu^2\Big)^{-1}.
\end{align}
We will write   $G(-\nu^2;x,x')$ for its integral kernel.

In the following statement we will compute $G(-\nu^2;x,x')$. This
computation, short and, we believe, quite elegant, is based on Conjecture
\ref{thm:dS_bisol}, which does not have a complete proof in our paper.
 Therefore, strictly speaking, all statements in this subsection are
 not fully proven in our paper, even if we call them ``theorems''.

One can justify Thm. \ref{resofeyn} independently, following the (rather
complicated) arguments of \cite{FSS13} involving global coordinates and
summation formulas for Gegenbauer functions.
We will not discuss these arguments in this paper.

\begin{thm} Let $\Re\nu>0$.\\
{\bf Odd $d$.}
    The resolvent is given by
    \begin{align}
& \quad  G(-\nu^2;x,x')\label{eq:dS_op_inv_odd1} \\ \notag
 &=  \frac{\pm\Gamma\big(\tfrac{d-1}{2}\pm\ii\nu \big)}{2^{2\pm\ii\nu}
 (2\pi)^{\frac{d-1}{2}}\sinh\pi\nu}
 \Big({\bf Z}_{\tfrac{d-2}{2},\pm\ii\nu}(-Z-\ii 0)
-{\bf Z}_{\tfrac{d-2}{2},\pm\ii\nu}(-Z+\ii 0) \Big), \quad\Im\nu\lessgtr0.
\end{align}
Therefore, for  $\nu>0$, the Feynman and anti-Feynman propagators are
\begin{align}
\label{eq:dS_op_inv_odd1.}
 G^{\F/\overline{\F}}_{{\rm op}}(x,x')
 &=  \frac{\pm\Gamma\big(\tfrac{d-1}{2}\pm\ii\nu \big)}{2^{2\pm\ii\nu}
 (2\pi)^{\frac{d-1}{2}}\sinh\pi\nu}
 \Big({\bf Z}_{\tfrac{d-2}{2},\pm\ii\nu}(-Z-\ii 0)
-{\bf Z}_{\tfrac{d-2}{2},\pm\ii\nu}(-Z+\ii 0) \Big).
\end{align}

{\bf Even $d$.} The resolvent is given by
\begin{align} & \quad G (-\nu^2;x,x') \label{eq:dS_op_inv_even1}\\ \notag
 &=  -\frac{\Gamma\big( \tfrac{d-1}{2}\pm\ii\nu \big)
 }{2^{2\pm\ii\nu}(2\pi)^{\tfrac{d-1}{2}} \cosh\pi\nu}
\Big({\bf Z}_{\tfrac{d-2}{2},\pm\ii\nu}(-Z+\ii 0)
+{\bf Z}_{\tfrac{d-2}{2},\pm\ii\nu}(-Z-\ii 0) \Big), \quad\Im\nu\lessgtr0.
\end{align}
Therefore, for  $\nu>0$, the operator-theoretic Feynman and anti-Feynman propagators are
\begin{align}
\label{eq:dS_op_inv_even1.}
 &\quad  G^{\F/\overline{\F}}_{{\rm op}} (x,x') \\ \notag
 &=   -\frac{\Gamma\big( \tfrac{d-1}{2}\pm\ii\nu \big)
 }{2^{2\pm\ii\nu}(2\pi)^{\tfrac{d-1}{2}} \cosh\pi\nu}
\Big({\bf Z}_{\tfrac{d-2}{2},\pm\ii\nu}(-Z+\ii 0)
+{\bf Z}_{\tfrac{d-2}{2},\pm\ii\nu}(-Z-\ii 0) \Big).
\end{align}
\label{resofeyn}
\end{thm}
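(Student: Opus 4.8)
The plan is to let symmetry collapse the problem onto the finite-dimensional space of invariant Green functions already classified in \eqref{eq:gen_fund_sol}, and then to determine the free coefficients from the analytic behaviour that distinguishes a resolvent from every other invariant Green function.

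First I would argue that for $\nu$ in the resolvent set the kernel $G(-\nu^2;x,x')$ is fully de Sitter invariant: $-\Box$ commutes with the $O(1,d)$-action and is essentially self-adjoint, so its resolvent commutes with the action and the kernel depends only on $Z=[x|x']$. It is a Green function of the Klein-Gordon operator, so, assuming Conjecture \ref{thm:dS_bisol}, it must differ from the Euclidean Feynman propagator $G_0^\F$ by a fully invariant bisolution. I would therefore write
\begin{align}
G(-\nu^2) = G_0^\F + \ii a_1 G_0^\sym + \ii a_3 G_0^{\sym,A},
\end{align}
with $a_1=a_1(\nu)$, $a_3=a_3(\nu)$ still to be found. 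The delta-source normalisation at the diagonal $Z=1$ holds automatically, since $G_0^\F$ already carries the correct singularity and the two added bisolutions contribute no delta.

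Next I would fix $a_1$ and $a_3$. For $\Re\nu>0$ the resolvent is bounded on $L^2(\dS_d)$, which forces the kernel $g(Z)$ to decay at both noncompact ends of de Sitter space, namely as $Z\to+\infty$ (timelike infinity) and as $Z\to-\infty$ (antipodal timelike infinity). These are two linear conditions on $(a_1,a_3)$. Concretely I would rewrite the boundary values ${\bf S}_{\frac{d-2}{2},\ii\nu}(\mp Z\pm\ii0)$ appearing in $G_0^\F,G_0^\sym,G_0^{\sym,A}$ in terms of the recessive Gegenbauer function ${\bf Z}_{\frac{d-2}{2},\pm\ii\nu}$ via the connection formulas of Appendix \ref{app:gegenbauer}, and then choose $a_1,a_3$ so as to annihilate every dominant (non-decaying) contribution. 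What remains is built purely from ${\bf Z}_{\frac{d-2}{2},\pm\ii\nu}(-Z\pm\ii0)$, exactly as in \eqref{eq:dS_op_inv_odd1} and \eqref{eq:dS_op_inv_even1}.

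Finally, the split between $\sinh\pi\nu$ (odd $d$) and $\cosh\pi\nu$ (even $d$) is the monodromy factor that ${\bf Z}_{\frac{d-2}{2},\pm\ii\nu}$ acquires around the branch point crossed by the $-Z\pm\ii0$ prescription, an odd resp. even combination according to whether the Gegenbauer index $\frac{d-2}{2}$ is half-integer ($d$ odd) or integer ($d$ even); this is the same parity that renders the mode potential \eqref{scarfi} reflectionless only in odd dimension. Taking the distributional limits $\nu^2\to\RR$ with $\Im\nu\to0^\mp$ and invoking \eqref{point1}--\eqref{point2} then turns the resolvent into the operator-theoretic propagators \eqref{eq:dS_op_inv_odd1.} and \eqref{eq:dS_op_inv_even1.}. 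The hard part is the selection step: justifying rigorously that $L^2$-boundedness forces precisely the decay picking out ${\bf Z}$ rather than ${\bf S}$. Since $-\Box$ is hyperbolic, the meaning of ``decay at infinity'' and the off-diagonal regularity are genuinely delicate, and the whole argument leans on Conjecture \ref{thm:dS_bisol}; this is why the derivation is convincing but, as the authors stress, not fully rigorous. The contrast with the sphere makes the subtlety concrete: the naive Wick rotation of the compact sphere resolvent, regular on the bounded range $(x|x')\in[-1,1]$, lands on the ${\bf S}$-based Euclidean propagators \eqref{eq:GF_GFbar_global_dS}, whereas the genuine $L^2(\dS_d)$ resolvent must decay at the noncompact infinities and is hence ${\bf Z}$-based, so the two answers differ, which is the phenomenon the theorem records.
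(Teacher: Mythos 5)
Your proposal is correct and follows essentially the same route as the paper: you invoke Conjecture \ref{thm:dS_bisol} to reduce to the two-parameter family $G_0^\F + \ii a_1 G_0^\sym + \ii a_3 G_0^{\sym,A}$, rewrite it via the connection formulas in terms of the recessive functions ${\bf Z}_{\frac{d-2}{2},\pm\ii\nu}$, and fix $(a_1,a_3)$ by demanding the decay at $Z\to\pm\infty$ compatible with $L^2$-boundedness, exactly as in the paper's proof (including the comparison of the $Z^{-\frac{d-1}{2}\mp\ii\nu}$ falloff against the $Z^{d-2}$ growth of the measure and the sign of $\Im\nu$ selecting which branch survives).
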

\begin{proof}[Proof (assuming the validity of Conj. \ref{thm:dS_bisol}).]
Let us first compute the Green operator $G(-\nu^2)$ for
$\nu^2\in\mathbb{C}\setminus\mathbb{R}$.
Clearly, its integral kernel is a Green function invariant
under the \emph{full} de Sitter group.
Its integral kernel (as the integral
kernel of a bounded operator) must not grow too fast as $Z\to \pm
\infty$. By Conjecture  \ref{thm:dS_bisol}, the formula
\eqref{eq:gen_fund_sol} describes the family of all fully de Sitter
invariant Green functions.

To start, we thus use the connection formula
\eqref{formu3} to write the
general fundamental solution \eqref{eq:gen_fund_sol} in terms of
the Gegenbauer functions ${\bf Z}_{\alpha,\pm\lambda}(-Z\pm\ii0)$,
which have a determined behavior as $|Z|\to\infty$. Since we
require invariance under the full de Sitter group, we must have
$a_2=a_4=0$. This yields
\begin{align}
\label{eq:dS_inv_gen_fundsol}
  \frac{\sinh\pi\nu}{2^{\tfrac{d-3}{2}}\sqrt{\pi} C_{d,\nu}}
 G_{\underline{a}}
 &=
  \frac{2^{\ii\nu}}{\Gamma\big(\tfrac{d-1}{2}+\ii\nu\big)}
{\bf Z}_{\tfrac{d-2}{2},-\ii\nu}(-Z+\ii 0)
\Big( 1+a_1  +a_3\ee^{\ii\pi\big(\tfrac{d-1}{2}-\ii\nu\big)} \Big)
\\ \notag
&\quad  + \frac{2^{\ii\nu}}{\Gamma\big(\tfrac{d-1}{2}+\ii\nu\big)}
{\bf Z}_{\tfrac{d-2}{2},-\ii\nu}(-Z-\ii 0)
\Big(a_1 + a_3 \ee^{-\ii\pi\big(\tfrac{d-1}{2}-\ii\nu\big)}\Big)
-(\nu\leftrightarrow-\nu).
\end{align}
We have
${\bf Z}_{\tfrac{d-2}{2},\pm\ii\nu}(Z) \sim c Z^{-\frac{d-1}{2}\mp\ii\nu}$
as $|Z|\to\infty$, while the measure on $L^2(\dS_d,\sqrt{|g|})$ behaves
as $cZ^{d-2}$ as $|Z|\to\infty$.\footnote{This can be verified using the
global coordinates \eqref{eq:dS_global_coords}, in which $Z$ is given by
\eqref{eq:Z_dS_global_coords}.} Thus, the resolvent should, for $|Z|>1$, only contain
\begin{align}
\label{eq:conditions_resolvent}
{\bf Z}_{\tfrac{d-2}{2},\ii\nu}(|Z|) \word{if} \Im(\nu)<0 \word{and}
{\bf Z}_{\tfrac{d-2}{2},-\ii\nu}(|Z|) \word{if} \Im(\nu)>0,
\end{align}
for otherwise it could not be the integral kernel of a bounded operator
on $L^2(\dS_d,\sqrt{|g|})$.
The parameters that correspond to such a decay behavior are
different in  even and odd dimensions:

\paragraph{Odd dimensions.}
In odd dimensions, $\tfrac{d-1}{2}$ is an integer, and we obtain
\begin{align}
\label{eq:a+_odd}
 \textup{Solution } \sim {\bf Z}_{\tfrac{d-2}{2},\pm\ii\nu}(Z)
 \textup{ for } |Z|>1: &\quad
a_1 = \pm \frac{\ee^{\mp\pi\nu}}{2\sinh\pi\nu},
 \quad
 a_3= \pm \frac{(-1)^{\tfrac{d+1}{2}}}{2\sinh\pi\nu}.
\end{align}
\paragraph{Even dimensions.} In even dimensions,
$\tfrac{d-1}{2}$ is a half-integer but not an integer. We obtain
\begin{align}
\label{eq:even_inu_plusinfty}
 \textup{Solution } \sim {\bf Z}_{\tfrac{d-2}{2},\pm\ii\nu}(Z)
 \textup{ for } |Z|>1: &\quad
a_1= -\frac{\ee^{\mp\pi\nu}}{2\cosh\pi\nu}, \quad
 a_3 = - \ii \frac{(-1)^{\tfrac{d}{2}}}{2\cosh\pi\nu}.
\end{align}
These values of $a_1$ and $a_3$ yield the formulas for the resolvents.
The operator-theoretic Feynman and anti-Feynman propagators are the limits
of the resolvents on the spectrum from below resp. above.
\end{proof}

We will give an interpretation of the operator-theoretic (anti-)Feynman
propagators in terms of time-ordered two-point
functions between two states in Section \ref{sss:dS_inout}.
However, from their formulas, we can already see the surprising fact that
they are different from the propagators in the Euclidean state $\Omega_0$,
which is the only de Sitter-invariant Hadamard state.

One can ask when the Klein-Gordon operator on de Sitter space is special.
The situation is quite remarkable:

\begin{thm}
Let $\nu>0$. Then
 \begin{align}
G^{\F}_{{\rm op}} + G^{\overline{\F}}_{{\rm op}}
 &= G^{\lor} + G^{\land}, \quad\text{\bf for odd }d;\\
  \word{but}
 G^{\F}_{{\rm op}} + G^{\overline{\F}}_{{\rm op}}
  &\neq  G^{\lor} + G^{\land}, \quad\text{\bf for even }d.
\end{align}
\end{thm}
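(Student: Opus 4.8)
The plan is to reduce both claims to a single computation of an invariant bisolution and then to exploit the parity of $\tfrac{d-1}{2}$. First I would record that, by \eqref{vani} together with \eqref{eq:ret_adv_props_dS1} (the $\theta$-functions there summing to $1$), the right-hand side is already known in closed form,
\[
  G^\lor + G^\land = G^\F_0 + G^{\overline{\F}}_0
  = \ii C_{d,\nu}\Big( {\bf S}_{\frac{d-2}{2},\ii\nu}(-Z+\ii0) - {\bf S}_{\frac{d-2}{2},\ii\nu}(-Z-\ii0)\Big),
\]
i.e. the jump of $\ii C_{d,\nu}{\bf S}_{\frac{d-2}{2},\ii\nu}$ across its only cut $]-\infty,-1]$ in the variable $-Z$, hence supported on $Z\geq1$ (causal). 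Since both $G^\F_{\rm op}+G^{\overline{\F}}_{\rm op}$ and $G^\lor+G^\land$ are Green functions with the same source $2\delta$, the difference
\[
  D := \big(G^\F_{\rm op}+G^{\overline{\F}}_{\rm op}\big) - \big(G^\lor+G^\land\big)
\]
is a bisolution; moreover $D$ is invariant under the \emph{full} de Sitter group, because $G^\F_{\rm op}+G^{\overline{\F}}_{\rm op}$ depends on $Z$ alone by Thm.~\ref{resofeyn}. By Conjecture~\ref{thm:dS_bisol}, $D=\mu\,G^\sym_0+\rho\,G^{\sym,A}_0$ for scalars $\mu,\rho$, so the theorem is equivalent to $\mu=\rho=0$ for odd $d$ and $(\mu,\rho)\neq(0,0)$ for even $d$.

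Next I would compute $G^\F_{\rm op}+G^{\overline{\F}}_{\rm op}$ from Thm.~\ref{resofeyn} and convert it to the ${\bf S}$-basis via the connection formula \eqref{formu3}. The structural dichotomy is already visible there: in odd $d$ each operator-theoretic propagator is proportional to the \emph{jump} ${\bf Z}_{\frac{d-2}{2},\pm\ii\nu}(-Z-\ii0)-{\bf Z}_{\frac{d-2}{2},\pm\ii\nu}(-Z+\ii0)$ (denominator $\sinh\pi\nu$), whereas in even $d$ it is the boundary-value \emph{average} ${\bf Z}_{\frac{d-2}{2},\pm\ii\nu}(-Z+\ii0)+{\bf Z}_{\frac{d-2}{2},\pm\ii\nu}(-Z-\ii0)$ (denominator $\cosh\pi\nu$). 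For the odd case I would insert the expression of ${\bf S}_{\frac{d-2}{2},\ii\nu}$ as the combination of ${\bf Z}_{\frac{d-2}{2},\pm\ii\nu}$ read off from \eqref{eq:dS_inv_gen_fundsol} and take the jump. The functions ${\bf Z}_{\frac{d-2}{2},\pm\ii\nu}$ branch not only at $-Z=-1$ but also at the antipodal point $-Z=1$ (that is $Z=-1$); the key point is a \emph{monodromy cancellation}: for \emph{integer} $\tfrac{d-1}{2}$ the particular combination entering $G^\F_{\rm op}+G^{\overline{\F}}_{\rm op}$ is single-valued around $-Z=1$, so its jump across the spacelike interval $|Z|<1$ vanishes and only the jump across $]-\infty,-1]$ survives, reassembling exactly into $\ii C_{d,\nu}\big({\bf S}_{\frac{d-2}{2},\ii\nu}(-Z+\ii0)-{\bf S}_{\frac{d-2}{2},\ii\nu}(-Z-\ii0)\big)$. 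Hence $\mu=\rho=0$ and strong specialty follows.

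For even $d$ no such cancellation is available. Since the operator-theoretic propagators are now boundary-value averages rather than jumps, the monodromy of ${\bf Z}_{\frac{d-2}{2},\pm\ii\nu}$ around $-Z=1$ does not drop out of $G^\F_{\rm op}+G^{\overline{\F}}_{\rm op}$, and $D$ acquires a nonzero multiple of the antipodal bisolution $G^{\sym,A}_0$ (and possibly $G^\sym_0$), which is singular on the antipodal null cone $Z=-1$ and nonzero in the spacelike region $|Z|<1$ — precisely where $G^\lor+G^\land\equiv0$. Evaluating $D$ at one spacelike configuration then forces $(\mu,\rho)\neq(0,0)$, so $G^\F_{\rm op}+G^{\overline{\F}}_{\rm op}\neq G^\lor+G^\land$. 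Technically, the phase factors $\ee^{\pm\ii\pi(\frac{d-1}{2}\mp\ii\nu)}$ in \eqref{formu3} now carry an extra $\pm\ii$ because $\tfrac{d-1}{2}$ is a half-integer, and this is exactly what obstructs the cancellation that succeeded for odd $d$.

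\textbf{Main obstacle.} The delicate part is the consistent treatment of the three branch points $\pm1,\infty$ of the Gegenbauer functions and of the $\pm\ii0$ prescriptions when passing through \eqref{formu3}: one must verify that the combination of ${\bf Z}$-boundary values in Thm.~\ref{resofeyn} is monodromy-invariant around $-Z=1$ \emph{iff} $\tfrac{d-1}{2}\in\bZ$, and confirm that the residual average is genuinely nonvanishing in $|Z|<1$ in the even case. I expect the odd-case cancellation to fall out of the integer-index reflection/connection identities of Appendix~\ref{app:gegenbauer}, while the even-case non-vanishing needs only a short explicit check at a single spacelike point.
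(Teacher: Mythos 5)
Your plan is correct in substance and follows essentially the same route as the paper: starting from Theorem \ref{resofeyn}, one uses the connection formula \eqref{formu3} to recognize the odd-$d$ combination of jumps of ${\bf Z}_{\frac{d-2}{2},\pm\ii\nu}$ as exactly $\ii C_{d,\nu}$ times the jump of ${\bf S}_{\frac{d-2}{2},\ii\nu}$, i.e.\ as $G^\lor+G^\land$, and for even $d$ one checks that the sum of boundary-value averages does not vanish outside the causal region. Two remarks. First, your attribution of the dichotomy to a ``monodromy cancellation valid iff $\tfrac{d-1}{2}\in\bZ$'' is misplaced: the combination $-\Gamma(\tfrac{d-1}{2}+\ii\nu)2^{-\ii\nu}\,[{\bf Z}_{\frac{d-2}{2},\ii\nu}]+2^{\ii\nu}\Gamma(\tfrac{d-1}{2}-\ii\nu)\,[{\bf Z}_{\frac{d-2}{2},-\ii\nu}]$ (brackets denoting the jump across the cut) reassembles into the jump of ${\bf S}_{\frac{d-2}{2},\ii\nu}$ via \eqref{formu3} for \emph{every} $d$, since ${\bf S}_{\alpha,\lambda}$ is holomorphic off $]-\infty,-1]$ regardless of parity; the entire parity dependence sits in Theorem \ref{resofeyn}, namely whether $G^\F_{\rm op}+G^{\overline\F}_{\rm op}$ equals that jump combination (odd $d$) or a sum of averages (even $d$). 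You do identify this dichotomy correctly earlier in your text, so the argument goes through, but the ``Main obstacle'' paragraph overstates what must be verified. Second, the detour through the difference $D$ and Conjecture \ref{thm:dS_bisol} is dispensable: for the odd case the identity is immediate from \eqref{formu3}, and for the even case one only needs a single point where the two sides differ. For that check the cleanest region is $Z<-1$ rather than $|Z|<1$: there $-Z>1$ lies off the cut, so $G^\F_{\rm op}+G^{\overline\F}_{\rm op}$ reduces to a combination of the univalent, linearly independent functions ${\bf Z}_{\frac{d-2}{2},\pm\ii\nu}(|Z|)$ with nonzero coefficients, which cannot vanish identically, while $G^\lor+G^\land$ is supported in $Z\geq1$ — this avoids evaluating boundary-value averages on the cut (where, incidentally, \eqref{formu1} degenerates for integer $\alpha=\tfrac{d-2}{2}$, i.e.\ precisely for even $d$).
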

\begin{proof}
 We use the connection formula \eqref{formu1}  to rewrite
 $G^{\F}_{{\rm op}}$ and  $G^{\overline{\F}}_{{\rm op}}$
 in terms of ${\bf S}_{\frac{d-2}{2},\pm\ii\nu}(\cdot)$
 and compare to the formulas \eqref{eq:ret_adv_props_dS1}.
 Actually, in odd dimensions, the result follows immediately if one
 uses \eqref{formu3} instead of \eqref{formu1}.
\end{proof}

Let us finally consider the ``tachyonic'' region of parameters in the
de Sitter space. Instead of the parameter $\nu$, it will be
  convenient to use $\mu:=-\ii\nu$.

\begin{thm}
\begin{enumerate}
 \item {\bf Odd $d$}. The spectrum of $-\Box+\big(\tfrac{d-1}{2}\big)^2$
  equals
\begin{align}
 ]-\infty,0]\cup\big\{\mu^2\ |\ \mu\in\bN_0\},
\end{align}
and for $\mu\in[0,\infty[\setminus \bN_0$, the resolvent is given by
 \begin{align}
 \label{eq:resolvent_tachyonic_odd}
  &\quad G (\mu^2;x,x') \\ \notag
&=  -\ii \frac{\Gamma\big(\tfrac{d-1}{2}+\mu \big)}{2^{2+\mu}
 (2\pi)^{\frac{d-1}{2}}\sin\pi\mu}
 \Big({\bf Z}_{\tfrac{d-2}{2},\mu}(-Z+\ii 0)
-{\bf Z}_{\tfrac{d-2}{2},\mu}(-Z-\ii 0) \Big).
 \end{align}

 \item {\bf Even $d$}. The spectrum of $-\Box+\big(\tfrac{d-1}{2}\big)^2$
  equals
\begin{align}
 ]-\infty,0]\cup\big\{\mu^2\ |\ \mu\in\bN_0+\tfrac12\},
\end{align}
and for $\mu\in[0,\infty[\setminus \big(\bN_0+\tfrac12\big)$, the
resolvent is given by
 \begin{align}
 \label{eq:resolvent_tachyonic_even}
  &\quad G (\mu^2;x,x') \\ \notag
&=   -\frac{\Gamma\big( \tfrac{d-1}{2}+\mu \big)
 }{2^{2+\mu}(2\pi)^{\tfrac{d-1}{2}} \cos\pi\mu}
\Big({\bf Z}_{\tfrac{d-2}{2},\mu}(-Z+\ii 0)
+{\bf Z}_{\tfrac{d-2}{2},\mu}(-Z-\ii 0) \Big).
 \end{align}
\end{enumerate}
  \end{thm}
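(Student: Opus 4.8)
The plan is to obtain both the resolvent formula and the spectrum by analytically continuing Theorem \ref{resofeyn} from the region $\Re\nu>0$ to the positive imaginary axis $\nu=\ii\mu$, $\mu>0$, which is precisely the substitution $\mu=-\ii\nu$ turning $-\nu^2$ into $\mu^2$. Concretely, I would take the representation of \eqref{eq:dS_op_inv_odd1} resp.\ \eqref{eq:dS_op_inv_even1} valid for $\Im\nu>0$ (the lower-sign branch, carrying the index $-\ii\nu$, since that is the representation manifestly regular in the upper half $\nu$-plane), set $\nu=\ii\mu$, and use
\begin{align}
-\ii\nu=\mu,\qquad \sinh\pi\nu=\ii\sin\pi\mu,\qquad \cosh\pi\nu=\cos\pi\mu,\qquad 2^{2-\ii\nu}=2^{2+\mu}.
\end{align}
In odd dimensions the factor $1/(\ii\sin\pi\mu)$ together with the overall sign converts \eqref{eq:dS_op_inv_odd1} into \eqref{eq:resolvent_tachyonic_odd}, the reordering of the two boundary-value terms absorbing the remaining sign; in even dimensions $\cosh\pi\nu\to\cos\pi\mu$ turns \eqref{eq:dS_op_inv_even1} directly into \eqref{eq:resolvent_tachyonic_even}, with no sign juggling. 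Thus the formulas are essentially bookkeeping once the continuation is justified.

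Justifying the continuation is the first substantive step. Since $-\Box+(\tfrac{d-1}{2})^2$ is essentially self-adjoint, its resolvent $z\mapsto G(z)$ is holomorphic on $\CC\setminus\mathrm{spec}$, and with $z=-\nu^2$ one checks that moving $\nu$ through the open first quadrant keeps $z$ in the open lower half-plane, hence off the (real) spectrum; along this path the explicit right-hand side of Theorem \ref{resofeyn} is holomorphic because ${\bf Z}_{\alpha,\lambda}$ is jointly analytic in $\lambda$ away from its indicial poles and the $\pm\ii0$ prescriptions vary continuously. Letting $\nu\to\ii\mu$ we reach $z=\mu^2$ from $\Im z<0$; whenever $\mu^2$ lies in the resolvent set the boundary values from above and below coincide and equal the genuine resolvent kernel, which is what the stated formulas record.

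Reading off the spectrum is the real content. The continuous spectrum is $]-\infty,0]$: these are the values $z=-\nu^2$ with $\nu>0$ real, where by Theorem \ref{resofeyn} the limits from above and below are the distinct kernels $G^{\overline{\F}}_{\rm op}\neq G^{\F}_{\rm op}$, so the resolvent has a genuine jump across $]-\infty,0[$, with $0$ the threshold. For $z=\mu^2>0$ the continued kernel is meromorphic in $\mu$ with poles coming only from $1/\sin\pi\mu$ (odd $d$) resp.\ $1/\cos\pi\mu$ (even $d$), i.e.\ at $\mu\in\bN_0$ resp.\ $\mu\in\bN_0+\tfrac12$ (the $\Gamma$-factor contributes no poles for $\mu>0$); between consecutive poles the kernel is that of a bounded self-adjoint operator, so those $\mu^2$ lie in the resolvent set, and the poles are the only candidates for discrete spectrum.

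The main obstacle is to confirm that these poles are genuine eigenvalues rather than spurious singularities, and that no other spectrum is hidden. I would compute the residue of $G(\mu^2;x,x')$ at $\mu=n$ (resp.\ $n+\tfrac12$) and show it is a nonzero finite-rank operator whose range consists of the de Sitter solutions decaying as $Z\to\pm\infty$ in \emph{both} asymptotic regions --- the quantization of $\mu$ being exactly the condition that a solution decaying (hence square-integrable for the invariant measure, since ${\bf Z}_{\frac{d-2}{2},\mu}(Z)\sim Z^{-\frac{d-1}{2}-\mu}$ makes $|{\bf Z}|^2$ integrable against $Z^{d-2}\,\dd Z$ precisely when $\mu>0$) at one end also decays at the other. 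Independently, this spectral picture is the Plancherel decomposition of the pseudo-Riemannian symmetric space $O(1,d)/O(1,d-1)$, the continuous part being the principal series and the discrete points the discrete series, which cross-checks the list. As with Theorem \ref{resofeyn}, the whole argument inherits the non-rigorous status of Conjecture \ref{thm:dS_bisol}, on which the resolvent formula ultimately rests.
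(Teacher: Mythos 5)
Your proposal follows essentially the same route as the paper: both obtain \eqref{eq:resolvent_tachyonic_odd} and \eqref{eq:resolvent_tachyonic_even} by continuing the resolvent of Theorem \ref{resofeyn} to the imaginary $\nu$-axis (the two one-sided limits coinciding off the point spectrum), and both identify the discrete spectrum with the non-removable singularities of the continued kernel, inheriting the conditional status of Conjecture \ref{thm:dS_bisol}.

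One caveat on your residue step for odd $d$: the statement that the poles ``come only from $1/\sin\pi\mu$'' hides the delicate point. At $\mu\in\bN_0$ the numerator ${\bf Z}_{\frac{d-2}{2},\mu}(-Z+\ii0)-{\bf Z}_{\frac{d-2}{2},\mu}(-Z-\ii0)$ \emph{also} vanishes for $|Z|>1$ by \eqref{eq:Z+-}, so on that region the singularity looks removable; to see that the residue is genuinely nonzero one must pass to $|Z|<1$, where the connection formula \eqref{formu1} shows the difference is proportional to ${\bf S}_{\frac{d-2}{2},\mu}(-Z)$ and does not vanish identically. Your planned residue computation would surface this, but as written the proposal does not flag it, and a reader checking only the timelike region would wrongly conclude the integer points are regular. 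The even-dimensional case is unproblematic, as the symmetric combination manifestly survives for $Z<-1$.
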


\begin{proof}
 Let $\mu>0$. If the limits of \eqref{eq:dS_op_inv_odd1} as $\nu$ approaches the imaginary line
 exist, they coincide:
 \begin{align}
  \lim_{\epsilon\to0} G((-(\ii\mu+\epsilon)^2;x,x')
  =  \lim_{\epsilon\to0} G((-(-\ii\mu+\epsilon)^2;x,x').
 \end{align}
 The results of these limits are the integral kernels of the resolvents
 in the  ``tachyonic'' case \eqref{eq:resolvent_tachyonic_odd}. Similar
 for \eqref{eq:dS_op_inv_even1} and \eqref{eq:resolvent_tachyonic_even}.

 For even $d$, the limit diverges for $\mu\in\bN_0+\tfrac12$ due to the presence of
 $\cos\pi\mu$ in the denominator of \eqref{eq:resolvent_tachyonic_even}. This is
 not a removable singularity. For $Z<-1$, we have
 \begin{align}
 {\bf Z}_{\tfrac{d-2}{2},\mu}(-Z+\ii 0)
={\bf Z}_{\tfrac{d-2}{2},\mu}(-Z-\ii 0)= {\bf Z}_{\tfrac{d-2}{2},\mu}(|Z|),
\end{align}
and this does not vanish identically.

  For odd $d$, the limit diverges for $\mu\in\bN_0$ due to the presence of
 $\sin\pi\mu$ in the denominator of \eqref{eq:resolvent_tachyonic_odd}.
 Although less obvious than in the even-dimensional case, this is also not a
 removable singularity. Due to \eqref{eq:Z+-}, we have
 \begin{align}
  {\bf Z}_{\tfrac{d-2}{2},\mu}(-Z+\ii 0)
-{\bf Z}_{\tfrac{d-2}{2},\mu}(-Z-\ii 0) =0,\quad |Z|>1,\;\mu\in\bN_0.
 \end{align}
 But using the connection formula \eqref{formu1}, we find for $|Z|<1$ and
 $\mu\in\bN_0$,
 \begin{align}
   {\bf Z}_{\tfrac{d-2}{2},\mu}(-Z+\ii 0)
-{\bf Z}_{\tfrac{d-2}{2},\mu}(-Z-\ii 0)
&= \frac{ -\ii\sgn(Z) 2^{\mu+\frac{d+1}{2}}}{\Gamma\big(\tfrac{d-1}{2}+\mu\big)
(1-Z^2)^{\frac{d-2}{2}}} {\bf S}_{\frac{d-2}{2},\mu}(-Z).
 \end{align}
This does not vanish identically.
\end{proof}

\subsection{Alpha vacua}
\label{sss:dS_alpha}

For the rest of the section on de Sitter space, we
restrict ourselves to the case of real and positive $\nu >0$.

The Euclidean vacuum is not the only de Sitter invariant state on
de Sitter space. There exists a whole family of such states, called
\emph{alpha vacua} \cite{Allen85,BMS02,M85}. We describe these
states using the Krein space language introduced in Section
\ref{ssc:props_curved} and then explain the relation to the approach
based on mode expansions, which is commonly used in the physics
literature \cite{Allen85,BMS02}.

\subsubsection{Alpha vacua in the Krein space picture}
Let $\cW_\KG$ be the Krein space of solutions of the
Klein-Gordon equation, which has a fundamental decomposition
corresponding to positive and negative frequencies with respect
to the Euclidean vacuum.
That is,
\begin{align} \label{eq:dS_Eucl_fund_deco}
\cW_\KG=\cZ_0^{(+)}\oplus \cZ_0^{(-)}, \quad
\cZ_0^{(-)}=\overline{\cZ_0^{(+)}},
\end{align}
where $\cZ_0^{(\pm)}:=\cR(\Pi_0^{(\pm)})$ are the ranges of
the orthogonal projections $\Pi_0^{(\pm)}$, whose Klein-Gordon
kernels are the bisolutions $\pm G_0^{(\pm)}$. The fundamental
decomposition \eqref{eq:dS_Eucl_fund_deco} will serve as a reference
decomposition of $\cW_\KG$.

Using the explicit representations \eqref{eq:pos_neg_props_dS}, it is
easy to see that
\begin{align}
\label{oute}
G_0^{(+)}(x^A,x^{\prime  A})=\overline{G_0^{(+)}(x,x')}=G_0^{(-)}(x,x').
  \end{align}
Introducing the map $( J^A\varphi)(x) :=\varphi(x^A)$, \eqref{oute}
implies
\begin{align}
  J^A\Pi_0^{(+)} J^A=\Pi_0^{(-)},\quad J^A(\cZ_0^{(\pm)})=\cZ_0^{(\mp)}.
\end{align}

Now let $\alpha\in\bC$ with $|\alpha|<1$. We define a Bogoliubov
transformation $R_\alpha$ on $\cW_\KG$ (i.e., a real pseudounitary map) via
\begin{align}\label{bogoliu}
(  R_\alpha\varphi)(x)=&\, \frac{1}{\sqrt{1-|\alpha|^2}} \varphi(x)
                      + \frac{\overline{\alpha}}{\sqrt{1-|\alpha|^2}}
                      \varphi(x^A),\quad \varphi\in\cZ_0^{(+)};\\
  (  R_\alpha \overline\varphi)(x)=&\, \frac{1}{\sqrt{1-|\alpha|^2}} \overline{\varphi(x) }
                      + \frac{\alpha}{\sqrt{1-|\alpha|^2}}
                        \overline{\varphi(x^A)},\quad
\overline\varphi\in\cZ_0^{(-)}.\label{bogoliu2}
\end{align}
In other words, as a $2\times2$ matrix on
$\cZ^{(+)}_0\oplus\cZ_0^{(-)}$,
\begin{align}\label{bogoliu.}
  R_\alpha=& \begin{bmatrix} \one
                      & \frac{\overline{\alpha}}{\sqrt{1-|\alpha|^2}}J^A
                      \\
 \frac{\alpha}{\sqrt{1-|\alpha|^2}}                       J^A&\frac{1}{\sqrt{1-|\alpha|^2}}
  \end{bmatrix}
\end{align}
The projections $R_\alpha \Pi_0^{(\pm)} R_\alpha^{-1}$ define another
fundamental decomposition of $\cW_\KG$, hence another Fock vacuum,
called the $\alpha$-vacuum. Their two-point functions are given by the
Klein-Gordon kernels of $\pm R_\alpha \Pi_0^{(\pm)} R_\alpha^{-1}$.
Using \eqref{bogoliu.} and
$  G_0^{(\pm)}(x^A,x')=  G_0^{(\mp)}(x,x^{\prime A})$ we obtain
\begin{align}
\label{eq:dS_Gpm_albe0}
&\quad  G^{(\pm)}_{\alpha}(x,x') \\ \notag
 &= \frac{1}{1-|\alpha|^2}
   \Big(  \tfrac{1+|\alpha|^2}{2} G_0^{\rm sym}(x,x')
   \mp \ii \tfrac{1-|\alpha|^2}{2} G^\PJ(x,x')
  +\tfrac{\alpha+\overline{\alpha}}{2} G^{\sym,A}_{0}(x,x')
   - \ii \tfrac{\alpha-\overline{\alpha}}{2}   G^{\PJ,A}_{0}(x,{x'})
  \Big).
   \end{align}
From \eqref{eq:dS_Gpm_albe0}, we obtain the well-known expressions
for the Feynman and anti-Feynman propagator
\cite{Allen85,BMS02}:\footnote{Note that the
two references have different conventions for the parameter
$\alpha$, and in addition, both conventions are different from ours.
In particular, \cite{Allen85} uses two real labels $\alpha$, $\beta$
that are both described by a single $\alpha\in\bC$ in our notation.}
\begin{align}
\label{eq:dS_GF_al...}
 &\quad G^{\F/\overline{\F}}_{\alpha}(x,x')
  \\ \notag &= G_0^{\F/\overline{\F}}(x,x')
  \pm\frac{\ii }{1-|\alpha|^2}
  \Big(
   |\alpha|^2G_0^{\rm sym}(x,x')
  +\ii \tfrac{\alpha+\overline{\alpha}}{2} G^{\sym,A}_{0}(x,x')
  -\ii \tfrac{\alpha-\overline{\alpha}}{2}   G^{\PJ,A}_{0}(x,{x'})
  \Big).
\end{align}
It is known that only the $\alpha$-vacuum satisfying
the Hadamard condition is the Euclidean vacuum, that is, corresponding
to $\alpha=0$ (see \cite{Allen85} and references therein).
This can also be read off the expansion of the Gegenbauer function
around the singularity.

From the point of view of perturbative QFT, the usefulness of alpha
vacua for $\alpha\neq0$ is therefore questionable. It is not clear how
one can renormalize quantities that are local and
non-linear in the fields
\cite{BFH05}. However, they are reasonable objects in \emph{linear}
QFT and possibly also in an effective field-theory. We shall see that
the operator-theoretic propagators correspond
to field expectation values in specific alpha vacua.

\subsubsection{Alpha vacua and mode expansions}
In the literature $\alpha$-vacua are often introduced as follows
\cite{Allen85,BMS02}. First one expands the real scalar Klein-Gordon
field $\hat{\phi}(x)$ into modes with respect to the Euclidean vacuum,
\begin{align}
\label{eq:dS_mode_sum}
\hat \phi(x) &= \sum_n \varphi_n(x)\hat a_n^\ast +
\overline{\varphi_n(x)} \hat a_n.
\end{align}
 Here, $\hat a_n$ and $\hat a_n^\ast$ are  annihilation and creation operators and $\varphi_n(x)$ are mode functions  that satisfy the
 orthogonality relations \eqref{eq:modes_orthogonal} with the Dirac
 delta replaced by the Kronecker delta. This is essentially a choice
 of an orthonormal basis of the space $\cZ_0^{(+)}$. The positive
 frequency solution can then be written as a mode sum,
 \begin{align}
 G^{(+)}_0(x,x')
 &=\sum_{n} \overline{\varphi_n(x)} \varphi_n(x' ) .
\end{align}

Next, using the explicit form of the modes, one shows \cite{Allen85,BMS02}
that the modes associated to the Euclidean vacuum can be chosen to satisfy
\begin{align}
\label{eq:convention_Eucl_modes}
 \varphi_n(x) = \overline{\varphi_n(x^A)}.
\end{align}

Then one defines the Bogoliubov transformation \eqref{bogoliu}
by its action on the modes,
\begin{align}
\label{eq:MA_trf}
 \varphi_{\alpha,n}(x) :=&\, \frac{1}{\sqrt{1-|\alpha|^2}} \varphi_n(x)
 + \frac{\overline{\alpha}}{\sqrt{1-|\alpha|^2}} \overline{\varphi_n(x)},
\end{align}
and the positive frequency solution associated to the alpha
vacuum with parameter $\alpha$ is given by
 \begin{align}
 G^{(+)}_\alpha(x,x')
 &=\sum_{n} \overline{\varphi_{\alpha,n}(x)} \varphi_{\alpha,n}(x' ) .
 \end{align}

 Needless to say, the construction using the mode expansion and the
 construction based on \eqref{bogoliu.} are equivalent.
 In particular, $\varphi_{\alpha,n}=R_\alpha\varphi_n$.

\subsubsection{Correlation functions between two different alpha vacua}
Suppose now that $\alpha,\beta$ be two complex parameter with
$|\alpha|,|\beta|<1$ and consider a pair of Bogoliubov
transformations $R_\alpha$, $R_\beta$ and a pair of Fock vacua
$\Omega_\alpha$, $\Omega_\beta$. Using modes, we can write
\begin{align}
\label{eq:MA_trf-}
 \varphi_{\beta,n}(x) :=&\,
 N_{\alpha,\beta} \varphi_{\alpha,n}(x)
 + M_{\alpha,\beta}
 \overline{\varphi_{\alpha,n}(x)},
 \\ \notag
 N_{\alpha,\beta} =&\,
                    \frac{1-\overline{\beta}\alpha}{\sqrt{(1-|\alpha|^2)(1-|\beta|^2)}},
\\ \notag
M_{\alpha,\beta} =&\,\frac{\overline{\beta}-\overline{\alpha}}{\sqrt{(1-|\alpha|^2)(1-|\beta|^2)}} .
\end{align}
Note that this definition is a special case of the more general form
\eqref{eq:Bogo}. It relates to the latter equation via
\begin{align}
 N_{\alpha,\beta}=N_{\alpha,\beta}(n),\quad
 M_{\alpha,\beta}\delta_{n,m}=\Lambda_{\alpha,\beta}(n,m).
\end{align}
Therefore, we may use \eqref{eq:G_albe_def} to obtain the mixed
two-point functions
\begin{align}
  &\quad  G^{(\pm)}_{\alpha,\beta}(x,x')
 \label{eq:dS_Gpm_albe}
\\ \notag
 &= \frac{1}{1-\overline{\beta}\alpha}
   \Big(  \tfrac{1+\alpha\overline{\beta}}{2} G_0^{\rm sym}(x,x')
   \mp \ii \tfrac{1-\alpha\overline{\beta}}{2} G^\PJ(x,x')
  +\tfrac{\alpha+\overline{\beta}}{2} G^{\sym,A}_{0}(x,x')
   - \ii \tfrac{\alpha-\overline{\beta}}{2}   G^{\PJ,A}_{0}(x,{x'})
   \Big).
   \end{align}
The corresponding Feynman and anti-Feynman propagator are
\begin{align}
\label{eq:dS_GF_albe}
 &\quad G^{\F/\overline{\F}}_{\alpha,\beta}(x,x')
  \\ \notag &= G_0^{\F/\overline{\F}}(x,x')
  \pm\frac{\ii }{1-\overline{\beta}\alpha}
  \Big(
   \alpha\overline{\beta} G_0^{\rm sym}(x,x')
  +\ii \tfrac{\alpha+\overline{\beta}}{2} G^{\sym,A}_{0}(x,x')
  -\ii \tfrac{\alpha-\overline{\beta}}{2}   G^{\PJ,A}_{0}(x,{x'})
  \Big).
\end{align}

\subsection{``In'' and ``out'' vacua}
\label{sss:dS_inout}

The de Sitter space is not asymptotically stationary. Therefore, the
usual definition of ``in'' and ``out'' vacua is not
applicable. Nevertheless, one can define a pair of de Sitter invariant states that deserve
to be called the ``in'' and ``out'' vacuum. In this subsection we will
 compute the corresponding propagators.

Every bisolution of the Klein-Gordon equation is  a linear
combination of appropriately regularized functions ${\bf
  Z}_{\frac{d-2}{2},\ii\nu}(Z)$ and
${\bf
  Z}_{\frac{d-2}{2},-\ii\nu}(Z)$. They behave for large $Z$
proportionally to $Z^{-\frac{d-1}{2}-\ii\nu}$,
resp. $Z^{-\frac{d-1}{2}+\ii\nu}$. We are looking for two-point
functions, which in the ``causal asymptotic region'', that is for $Z\to\infty$ and $t\to\pm\infty$, have a {\em
  definite behavior}, that is, they behave either as
$cZ^{-\frac{d-1}{2}-\ii\nu}$,
or as $cZ^{-\frac{d-1}{2}+\ii\nu}$.

Note that the propagators have also the ``antipodal asymptotic region'':
$Z\to-\infty$, $t^A\to\pm\infty$. It will be interesting to determine
their behavior in that region as well.

The following theorem describes all de Sitter invariant two-point
functions with a definite behavior in the causal asymptotic region.

\begin{thm}
\label{thm:dSinout}
  \begin{enumerate}
    \item {\bf Odd dimensions.}  There exists a unique $\alpha$-vacuum with the propagators behaving as
      \begin{align}
      \label{eq:dS_timelike_behav}
        G_\alpha^{(\pm)}\sim cZ^{-\frac{d-1}{2}\pm\ii\nu},&\quad
                                                             Z\to+\infty,\quad
                                                   t\to -\infty;
\\\text{ and }\quad
G_\alpha^{(\pm)}\sim cZ^{-\frac{d-1}{2}\mp\ii\nu},&\quad
                                                             Z\to+\infty,\quad
                                                             t\to+\infty. \label{eq:dS_timelike_behav2}
      \end{align}
      These functions vanish for $Z<-1$ and their parameter $\alpha$ is
\begin{align}
\label{eq:alpha_inout_odd}
\alpha_-=\alpha_+= \alpha_\as := (-1)^{\tfrac{d+1}{2}} \ee^{-\pi\nu}
 = \ee^{-\pi\nu \pm \ii\pi\tfrac{d+1}{2}}.
\end{align}
                 This vacuum could be called the ``in'' vacuum or the
                 ``out'' vacuum. We will call it the {\em asymptotic vacuum}.       We will write $\as$ instead of
                                    $\alpha_\as$ in the subscripts of
                                    propagators and two-point
                                    functions. The two point functions
                                    of these states are
\begin{align}
 &\quad
 \frac{\ii \sinh\pi\nu}{2^{\tfrac{d-3}{2}}\sqrt{\pi} C_{d,\nu}}
 G^{(\pm)}_{\as}(x,x')
 \\ \notag  &= \frac{2^{-\ii\nu} \theta(\pm t)
 }{\Gamma\big( \tfrac{d-1}{2}-\ii\nu \big)}
\Big( {\bf Z}_{\tfrac{d-2}{2},\ii\nu}(-Z-\ii 0)
- {\bf Z}_{\tfrac{d-2}{2},\ii\nu}(-Z+\ii 0) \Big)
\\ \notag
&\quad  + \frac{2^{\ii\nu}\theta(\mp t)
}{\Gamma\big(\tfrac{d-1}{2}+\ii\nu\big)}
\Big( {\bf Z}_{\tfrac{d-2}{2},-\ii\nu}(-Z-\ii 0)
- {\bf Z}_{\tfrac{d-2}{2},-\ii\nu}(-Z+\ii 0)
\Big),
\end{align}
and their Feynman and anti-Feynman propagators coincide
with the operator-theoretic ones from \eqref{eq:dS_op_inv_odd1}:
\begin{align}
 G^\F_\as(x,x') = G^\F_{{\rm op}}(x,x'),\quad
 G^{\overline{\F}}_\as(x,x')
 = G^{\overline{\F}}_{{\rm op}}(x,x').
\end{align}
\item{\bf Even dimensions.}
There exist two $\alpha$-vacua that satisfy
\eqref{eq:dS_timelike_behav} and \eqref{eq:dS_timelike_behav2}.
One of the two values is
\begin{align}
 \label{eq:alpha_inout_even}
 \alpha_- =\ii\ee^{-\pi\nu}(-1)^{\frac{d}{2}}= \ee^{-\pi\nu +\ii\pi \frac{d+1}{2}}
\end{align}
and its positive/negative frequency solutions vanish for $Z<-1$, $t^A<0$.
It will be called the ``in'' vacuum.

 The other value is
 \begin{align}
 \alpha_+ =-\ii\ee^{-\pi\nu}(-1)^{\frac{d}{2}}= \ee^{-\pi\nu -\ii\pi \frac{d+1}{2}}
 = - \alpha_-
 \end{align}
 and its positive/negative frequency solutions vanish for $Z<-1$, $t^A>0$.
It will be called the ``out'' vacuum.

 We will write $-$,
resp. $+$
instead of $\alpha_-$ and $\alpha_+$ in subscripts.
The two-point functions of these states are
\begin{align}
&\quad \frac{\ii\sinh\pi\nu}{2^{\tfrac{d-3}{2}}\sqrt{\pi} C_{d,\nu}}
 G^{(\pm)}_-(x,x') \\ \notag
&=
 \frac{2^{-\ii\nu}\theta(\pm t)
 }{\Gamma\big( \tfrac{d-1}{2}-\ii\nu \big)}
\Big({\bf Z}_{\tfrac{d-2}{2},\ii\nu}(-Z-\ii 0)
-{\bf Z}_{\tfrac{d-2}{2},\ii\nu}(-Z+\ii 0)   \Big)
\\ \notag
&\quad + \frac{2^{\ii\nu}\theta(\mp t)
}{\Gamma\big(\tfrac{d-1}{2}+\ii\nu\big)}
\Big( {\bf Z}_{\tfrac{d-2}{2},-\ii\nu}(-Z-\ii 0)
- {\bf Z}_{\tfrac{d-2}{2},-\ii\nu}(-Z+\ii 0) \Big)
\\ \notag
&\quad +
\frac{ (-1)^{\tfrac{d}{2}}\theta(t^A) }{2^{\tfrac{d-3}{2}} \sqrt{\pi}}
\Big(   {\bf S}_{\tfrac{d-2}{2},\ii\nu}(Z+\ii 0)
-{\bf S}_{\tfrac{d-2}{2},\ii\nu}(Z-\ii 0)\Big),
\end{align}
and
\begin{align}
&\quad \frac{\ii\sinh\pi\nu}{2^{\tfrac{d-3}{2}}\sqrt{\pi} C_{d,\nu}}
 G^{(\pm)}_+(x,x') \\ \notag
&=
 \frac{2^{-\ii\nu}\theta(\pm t)
 }{\Gamma\big( \tfrac{d-1}{2}-\ii\nu \big)}
\Big({\bf Z}_{\tfrac{d-2}{2},\ii\nu}(-Z-\ii 0)
-{\bf Z}_{\tfrac{d-2}{2},\ii\nu}(-Z+\ii 0)   \Big)
\\ \notag
&\quad + \frac{2^{\ii\nu}\theta(\mp t)
}{\Gamma\big(\tfrac{d-1}{2}+\ii\nu\big)}
\Big( {\bf Z}_{\tfrac{d-2}{2},-\ii\nu}(-Z-\ii 0)
- {\bf Z}_{\tfrac{d-2}{2},-\ii\nu}(-Z+\ii 0) \Big)
\\ \notag
&\quad +
\frac{ (-1)^{\tfrac{d}{2}}\theta(-t^A) }{2^{\tfrac{d-3}{2}} \sqrt{\pi}}
\Big(   {\bf S}_{\tfrac{d-2}{2},\ii\nu}(Z+\ii 0)
-{\bf S}_{\tfrac{d-2}{2},\ii\nu}(Z-\ii 0)\Big).
\end{align}
The in-out Feynman  and the out-in anti-Feynman propagator coincide
with the operator-theoretic Feynman and anti-Feynman  propagator
\eqref{eq:dS_op_inv_even1.}, resp.:
\begin{align}
\label{eq:outin_F_even}
 G^{\F}_{+-}
 &=  G^{\F}_\op ;\quad
   G^{ \overline\F}_{-+}
 = G^{ \overline\F}_\op .
\end{align}
\end{enumerate}
\end{thm}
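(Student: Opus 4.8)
The plan is to read off the two distinguished states directly from the explicit $\alpha$-vacuum two-point functions \eqref{eq:dS_Gpm_albe0} by turning the definite-behavior requirement into an algebraic condition on $\alpha$. First I would restrict $G^{(\pm)}_\alpha$ to the causal region $Z>1$. There the antipodal Pauli-Jordan term $G^{\PJ,A}_0$ drops out, since its defining discontinuity ${\bf S}_{\frac{d-2}{2},\ii\nu}(Z+\ii0)-{\bf S}_{\frac{d-2}{2},\ii\nu}(Z-\ii0)$ is supported on $Z<-1$, whereas $G_0^{\rm sym}$, $G^\PJ$ and $G^{\sym,A}_0$ all survive. Using the connection formulas \eqref{formu1} and \eqref{formu3} I would rewrite the surviving combination in terms of the two ``pure'' Gegenbauer functions ${\bf Z}_{\frac{d-2}{2},\pm\ii\nu}(-Z\pm\ii0)$, whose asymptotics $\sim Z^{-\frac{d-1}{2}\mp\ii\nu}$ as $Z\to+\infty$ single out the two behaviors in \eqref{eq:dS_timelike_behav}--\eqref{eq:dS_timelike_behav2}. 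Because $G^\PJ$ carries the factor $\sgn(t)$ while the symmetric pieces do not, the coefficients of ${\bf Z}_{\frac{d-2}{2},\ii\nu}$ and ${\bf Z}_{\frac{d-2}{2},-\ii\nu}$ differ between the future cone $t>0$ and the past cone $t<0$, and each is an explicit function of $\alpha$, $\overline\alpha$ and $|\alpha|^2$.

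The definite-behavior requirement then becomes: the coefficient of ${\bf Z}_{\frac{d-2}{2},-\ii\nu}$ must vanish for $t>0$ and that of ${\bf Z}_{\frac{d-2}{2},\ii\nu}$ must vanish for $t<0$. This is a pair of equations for $\alpha$, and the parity of $d$ enters only through the monodromy phase $\ee^{\ii\pi(\frac{d-1}{2}-\ii\nu)}$ already seen in \eqref{eq:dS_inv_gen_fundsol}: for odd $d$ one has $\ee^{\ii\pi\frac{d-1}{2}}=\pm1$ and the two equations are solved by the single value $\alpha_\as=(-1)^{(d+1)/2}\ee^{-\pi\nu}$, while for even $d$ one has $\ee^{\ii\pi\frac{d-1}{2}}=\pm\ii$ and the solution set consists of the two values $\alpha_\pm=\mp\ii(-1)^{d/2}\ee^{-\pi\nu}$. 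In all cases $|\alpha|=\ee^{-\pi\nu}<1$ for $\nu>0$, so these are admissible Bogoliubov parameters in the sense of \eqref{bogoliu}. Substituting each value back into \eqref{eq:dS_Gpm_albe0} and simplifying with the same connection formulas reproduces the claimed explicit two-point functions, the $\theta(\pm t)$ structure emerging automatically from the $\sgn(t)$ terms.

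It remains to check the vanishing statements for $Z<-1$. Here the roles reverse: $G^\PJ$ drops out while $G^{\sym,A}_0$ and $G^{\PJ,A}_0$ survive, the latter carrying the factor $\sgn(t^A)$. I would again pass to ${\bf Z}_{\frac{d-2}{2},\pm\ii\nu}$ and examine the behavior as $Z\to-\infty$. For odd $d$ the single value $\alpha_\as$ makes the whole antipodal combination cancel for both signs of $t^A$, confirming that $G^{(\pm)}_\as$ vanishes on $Z<-1$. For even $d$ the cancellation is only partial, and the surviving $\sgn(t^A)$-odd remainder shows that $G^{(\pm)}_{\alpha_-}$ vanishes exactly for $t^A<0$ and $G^{(\pm)}_{\alpha_+}$ exactly for $t^A>0$; this is what distinguishes the two even-dimensional vacua and justifies the ``in''/``out'' labels.

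Finally, the coincidence with the operator-theoretic propagators is a direct comparison. Forming the time-ordered combination $G^\F=\ii\bigl(\theta(t)G^{(+)}+\theta(-t)G^{(-)}\bigr)$ as in \eqref{feyny1.}, the products $\theta(t)\theta(-t)$ drop out and the remaining $\theta$'s collapse via $\theta(t)+\theta(-t)=1$, leaving for odd $d$ only the combination $\bigl({\bf Z}_{\frac{d-2}{2},\ii\nu}(-Z-\ii0)-{\bf Z}_{\frac{d-2}{2},\ii\nu}(-Z+\ii0)\bigr)$; rewriting the prefactor with $C_{d,\nu}$ from \eqref{cdnu} turns $\frac{2^{(d-3)/2}\sqrt\pi\,C_{d,\nu}\,2^{-\ii\nu}}{\sinh\pi\nu\,\Gamma(\frac{d-1}{2}-\ii\nu)}$ into exactly the coefficient $\frac{\Gamma(\frac{d-1}{2}+\ii\nu)}{2^{2+\ii\nu}(2\pi)^{(d-1)/2}\sinh\pi\nu}$ of \eqref{eq:dS_op_inv_odd1.}, and likewise for $G^{\overline\F}_\as$. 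For even $d$ the same computation applied to the \emph{mixed} two-point functions \eqref{eq:dS_Gpm_albe} with $\alpha=\alpha_+$, $\beta=\alpha_-$ produces the $\cosh\pi\nu$ expression \eqref{eq:dS_op_inv_even1.}, giving $G^\F_{+-}=G^\F_\op$. The main obstacle throughout is bookkeeping: tracking the $\pm\ii0$ prescriptions, the $\sgn(t)$ and $\sgn(t^A)$ factors, and the parity-dependent connection coefficients consistently across the causal and antipodal regions at once, since a single sign slip in a connection formula would corrupt both the determination of $\alpha$ and the final matching.
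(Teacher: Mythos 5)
Your proposal is correct and follows essentially the same route as the paper: express the general $\alpha$-vacuum two-point function \eqref{eq:dS_Gpm_albe0} in terms of ${\bf Z}_{\tfrac{d-2}{2},\pm\ii\nu}$ via the connection formulas, impose the definite asymptotic behavior to force $\Re(\alpha)=0$ (resp.\ $\Im(\alpha)=0$ for odd $d$) and $|\alpha|=\ee^{-\pi\nu}$, use the antipodal region to separate the in- and out-values, and then match prefactors against \eqref{eq:dS_op_inv_odd1.} and, via the mixed propagator \eqref{eq:dS_GF_albe}, against \eqref{eq:dS_op_inv_even1.}. Your region-by-region bookkeeping ($G^{\PJ,A}_0$ dropping for $Z>1$, $G^\PJ$ dropping for $Z<-1$) is only a cosmetic reorganization of the paper's single global formula \eqref{eq:gen_bisol_Z}.
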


\begin{remark} The concrete values for $\alpha$ corresponding to ``in'' and ``out''
states are well-known \cite{M85,BMS02} but typically derived by asymptotic
properties of the modes. We derive them in the following using a ``global
picture''.
\end{remark}
\begin{proof}[Proof of Thm \ref{thm:dSinout}]
 We use \eqref{eq:dS_Gpm_albe0} to express a
 generic $G^{(\pm)}_{\alpha}$ in terms of ${\bf Z}_{\frac{d-2}{2},\ii\nu}$
 and ${\bf Z}_{\frac{d-2}{2},-\ii\nu}$:
\begin{align}
\label{eq:gen_bisol_Z}
 &\quad 2\ii(1-|\alpha|^2) \frac{\sinh\pi\nu}{2^{\tfrac{d-3}{2}}\sqrt{\pi} C_{d,\nu}}
 G^{(\pm)}_{\alpha}
\\ \notag
 &=
 -\frac{2^{-\ii\nu}}{\Gamma\big( \tfrac{d-1}{2}-\ii\nu \big)}
{\bf Z}_{\tfrac{d-2}{2},\ii\nu}(-Z+\ii 0)
\Big( \big(1+|\alpha|^2 \pm (1-|\alpha|^2)\sgn(t)\big)
\\ \notag & \hspace{32ex}
+ \big(2 \Re(\alpha) + 2\ii \Im(\alpha)\sgn(t^A) \big) \ee^{\ii\pi\big(\tfrac{d-1}{2}+\ii\nu\big)} \Big)
\\ \notag &\quad -\frac{2^{-\ii\nu}
}{\Gamma\big( \tfrac{d-1}{2}-\ii\nu \big)}
{\bf Z}_{\tfrac{d-2}{2},\ii\nu}(-Z-\ii 0)
\Big(\big(1+|\alpha|^2 \mp (1-|\alpha|^2)\sgn(t)\big)
\\ \notag & \hspace{32ex}
+\big(2 \Re(\alpha) - 2\ii \Im(\alpha)\sgn(t^A) \big)\ee^{-\ii\pi\big(\tfrac{d-1}{2}+\ii\nu\big)}\Big)
\\ \notag &\quad
- (\nu\leftrightarrow-\nu).
\end{align}
The analysis of the asymptotic behavior of the latter function differs
in odd and even dimensions. We only display the derivation of the more
complicated even-dimensional case. The odd-dimensional case can be worked out
analogously:

\paragraph{Even dimensions}
The conditions on the asymptotic behavior read:
\begin{subequations}
 \label{eq:alpha_conditions_even}
\begin{align}
\label{eq:alpha_behav1+even}
 \textup{Solution }  &G^{(\pm)}_\alpha\sim {\bf Z}_{\tfrac{d-2}{2},\ii\nu}(Z)
 \textup{ for } Z>1: \\ \notag
 &\mp (-1)^{\tfrac{d-2}{2}} 2 \ii \sgn(t) \Re(\alpha)
=  \ee^{\mp \sgn(t)\pi\nu} - |\alpha|^2 \ee^{\pm \sgn(t) \pi\nu} ,
 \\ \label{eq:alpha_behav1-even}
 \textup{Solution }  &G^{(\pm)}_\alpha\sim {\bf Z}_{\tfrac{d-2}{2},-\ii\nu}(Z)
 \textup{ for } Z>1: \\ \notag
 &\mp (-1)^{\tfrac{d-2}{2}} 2 \ii \sgn(t) \Re(\alpha)
=  \ee^{\pm \sgn(t)\pi\nu} - |\alpha|^2 \ee^{\mp \sgn(t) \pi\nu} ,
 \\ \label{eq:alpha_behav2+even}
 \textup{Solution }  &G^{(\pm)}_\alpha\sim {\bf Z}_{\tfrac{d-2}{2},\ii\nu}(-Z)
 \textup{ for } Z<-1: \\ \notag
 & (-1)^{\tfrac{d-2}{2}} (1+|\alpha|^2) =
- 2\ii \Re(\alpha) \sinh\pi\nu + 2 \Im(\alpha) \sgn(t^A) \cosh\pi\nu,
 \\ \label{eq:alpha_behav2-even}
 \textup{Solution }  &G^{(\pm)}_\alpha\sim{\bf Z}_{\tfrac{d-2}{2},-\ii\nu}(-Z)
 \textup{ for } Z<-1: \\ \notag
 & (-1)^{\tfrac{d-2}{2}} (1+|\alpha|^2)
= 2\ii \Re(\alpha) \sinh\pi\nu + 2 \Im(\alpha) \sgn(t^A) \cosh\pi\nu.
\end{align}
\end{subequations}
We immediately read off $\Re(\alpha)=0$. Then, by \eqref{eq:alpha_behav1+even}
and \eqref{eq:alpha_behav1-even}, the existence of a definite behavior in
the region $Z>1$ implies $|\alpha|=\ee^{-\pi\nu}$. Hence
$\alpha=\ee^{\ii\pi(n+\tfrac12)-\pi\nu}$ with $n\in \bZ$.
Then  \eqref{eq:alpha_behav2+even} and \eqref{eq:alpha_behav2-even} simplify
to
\begin{align}
 (-1)^{\tfrac{d-2}{2}-n}
=   \sgn(t^A) .
\end{align}
$n=\frac{d-2}{2}$ yields a solution that vanishes for $Z<-1$ and $t^A>0$ but has
indeterminate behavior as $Z<-1$ and $t^A\to-\infty $, while $n=\frac{d}{2}$
yields a solution that vanishes for $Z<-1$ and $t^A<0$ but has
indeterminate behavior as $Z<-1$ and $t^A\to+\infty $. We obtain the values for
$\alpha_+$ and $\alpha_-$.

Inserting the obtained values for $\alpha$ into \eqref{eq:gen_bisol_Z} yields the explicit formulas for
$G^{(\pm)}_{\rm \pm}$:
this rather cumbersome computation involves the
connection formula \eqref{formu3}, the identity
\eqref{eq:Z+-} and repeated use of identities of the type
$1\pm\sgn(\cdot)=2\theta(\pm \cdot)$.
The (anti-)Feynman
propagators are obtained from \eqref{eq:dS_GF_albe} and also using the connection
formulas.
\end{proof}

\subsection{Symmetric Scarf Hamiltonian}
We will discuss in the next subsection another approach to
the Klein-Gordon equation on the de Sitter space.
In this approach we will use the one-dimensional Schrödinger Hamiltonian  on
$L^2(\mathbb{R})$ of the form
\begin{align}
\label{eq:H_scarf}
 H_\alpha^\mathrm{S} := - \partial_\tau^2 - \frac{\alpha^2-\tfrac14}{\cosh(\tau)^2}.
\end{align}
It is sometimes called
\emph{symmetric Scarf  Hamiltonian} \cite{DeWroch}.
It is well-known that this Hamiltonian for some values of
  parameters is reflectionless. For completeness, let us verify this.

First  we check that $H_\alpha^\mathrm{S} + \lambda^2$ is
equivalent to the Gegenbauer equation after the consecutive change of
variables $\sinh \tau = w$, $\ii w= v$:
\begin{align}
 &\quad \cosh(\tau)^{-\alpha-\tfrac12} \big(H_\alpha^\mathrm{S} + \lambda^2\big)
 \cosh(\tau)^{\alpha+\tfrac12}
 \\ \notag
 &= - \partial_\tau^2  - (2\alpha+1) \tanh(\tau) \partial_\tau
        - \big(\alpha+\tfrac12\big)^2 + \lambda^2
 \\ \notag
 &= -(1+w^2) \partial_w^2 - 2(\alpha+1) w \partial_w
 - \big(\alpha+\tfrac12\big)^2 + \lambda^2
 \\ \notag
 &= (1-v^2) \partial_v^2 - 2(\alpha+1) v \partial_v
 - \big(\alpha+\tfrac12\big)^2 + \lambda^2.
\end{align}
For $\Re(\lambda)>0$, the Jost solutions can thus be expressed
in terms of the Gegenbauer $Z$-function:
\begin{align}
\label{eq:Scarf_Jost}
 \psi_\pm(\lambda,\tau)
 &= 2^{\mp\lambda}\Gamma(1\pm\lambda)
 \ee^{\ii\tfrac{\pi}{2}(\tfrac12+\alpha\pm\lambda)} \cosh(\tau)^{\alpha+\tfrac12} Z_{\alpha,\pm\lambda}
 (\pm \ii \sinh\tau),
\end{align}
such that
\begin{align}
 \psi_\pm(\lambda,\tau)\sim \ee^{\mp\lambda \tau},
 \quad \pm \tau \to \infty.
\end{align}

The Gegenbauer functions on the righthand-side of
\eqref{eq:Scarf_Jost} have purely imaginary arguments. They are
to be interpreted as living on the cut plane $\bC\setminus\big(
]-\infty,-1] \cup [1,\infty[\big)$ instead of the usual
$\bC\setminus]-\infty,1]$. $\psi_+(\lambda,\cdot)$ is expressed in
terms of the analytic continuation of $Z_{\alpha,\lambda}(w)$
defined on the standard sheet $\bC\setminus]-\infty,1]$ \emph{to the
upper half-plane}, while $\psi_-(\lambda,\cdot)$ is expressed in terms
of the analytic continuation of $Z_{\alpha,-\lambda}(w)$
defined on the standard sheet $\bC\setminus]-\infty,1]$ \emph{to the
lower half plane}. Using the connection formulas \eqref{formu1} and
\eqref{formu3}, and the fact that ${\bf S}_{\alpha,\lambda}$ is
holomorphic on $]-1,1[$, one can derive a connection formula for the
two holomorphic continuations:
\begin{align}
 &\quad{\bf Z}_{\alpha,\lambda}(w+\ii0)
 \\ \notag
 &= \frac{\ii\cos\pi\alpha \,\ee^{-\ii\pi(\alpha+\lambda)}
 {\bf Z}_{\alpha,\lambda}(w-\ii0) }{\sin\pi\lambda}
 -\frac{\ii 2^{2\lambda} \ee^{-\ii\pi\alpha} \pi
 {\bf Z}_{\alpha,-\lambda}(w-\ii0)
 }{\Gamma\big(\tfrac12+\alpha+\lambda\big)
 \Gamma\big(\tfrac12-\alpha+\lambda\big) \sin\pi\lambda},
 \quad w\in]-1,1[.
\end{align}
In particular, ${\bf Z}_{\alpha,\lambda}(w+\ii0)$ is proportional
to ${\bf Z}_{\alpha,-\lambda}(w-\ii0)$ if and only if $\cos\pi\alpha=0$,
i.e., if and only if $\alpha\in\bZ+\tfrac12$.

Consequently, the symmetric Scarf Hamiltonian is reflectionless for
all energies $\nu^2$ iff
$\alpha\in\bZ+\tfrac12$.

\subsection{Partial wave decomposition}
\label{sss:dS_Scarf}

Using the global system of coordinates
\eqref{eq:dS_global_coords}, the de Sitter space can be viewed as a
FLRW space, and can be identified with $\mathbb{R}\times\mathbb{S}^{d-1}$.
In these coordinates,
the (gauged) Klein-Gordon operator takes the form
\begin{align}
\label{eq:KG_gauged_scarf}
&\quad \cosh(\tau)^{\tfrac{d-1}{2}}
 \big(-\square_g + m^2\big) \cosh(\tau)^{-\tfrac{d-1}{2}}
 \\ \notag
&= \partial_\tau^2 -\frac{d-1}{2} \Big(
 1 +  \frac{(d-3)\sinh(\tau)^2}{2\cosh(\tau)^2}
 \Big)  -  \frac{\Delta_{\bS^{d-1}}}{\cosh(\tau)^2} + m^2
  \\ \notag
&= \partial_\tau^2  +
\frac{\big(\tfrac{d-2}{2}\big)^2-\tfrac14
-\Delta_{\bS^{d-1}}}{\cosh(\tau)^2} +\nu^2
\end{align}
The spectrum of $-\Delta_{\bS^{d-1}}$ is
$\{ l(l+d-2)\;|\;l\in\bN_0\}$. Hence, restricted to eigenfunctions
with eigenvalue $l(l+d-2)$, the above operator becomes
$-H_\alpha^\mathrm{S}+\nu^2$, where $H_\alpha^\mathrm{S}$ is the
\emph{symmetric Scarf Hamiltonian}
with $\alpha=l+\tfrac{d-2}{2}$.
The symmetric Scarf potential is reflectionless for
all energies $\nu^2\in\bR$ and
$\alpha\in\frac12+\mathbb{Z}$. This corresponds to odd
dimensions. Thus  for each mode the in-state coincides with the out-state.
In even dimensions $\alpha\in\mathbb{Z}$, and then for each mode the in-state
is different from the out-state.


\section{Anti-de Sitter space and its universal cover}
\label{sec:AdS}
Our final examples of Lorentzian manifolds are the $d$-dimensional anti-de Sitter space
$\AdS_d$ and its universal covering  $\tAdS_d$.

$\AdS_d$ is pathological from several points of view.
First of all, it has time loops, which makes it unsuitable as a model
of a spacetime. It does not make much sense to speak about propagators
on $\AdS_d$.

The cyclicity of time can be cured by replacing the proper
anti-de Sitter space by its universal cover $\tAdS_d$. It is still
not globally hyperbolic, because of a
boundary with a spacelike normal at spacelike
infinity. However the latter problem is not very serious, and various
propagators can be defined on $\tAdS_d$.

Therefore, most of this section will be devoted to $\tAdS_d$. We will
apply two methods to define propagators: through the resolvent of the
d'Alembertian on $L^2(\tAdS_d)$, and by considering the evolution of
the Cauchy data. The latter approach is facilitated by the fact that
$\tAdS_d$ is static. The absence of global hyperbolicity is not a
problem for the first approach. For the second approach it manifests
itself by the need to set boundary conditions at the spatial infinity
for  $m^2$ below a certain value.


Various propagators of massive scalar fields on
$\tAdS_d$ have been intensively studied. Among the vast literature,
we mention the references \cite{DullemontvB85, AvisIS78, RumpfdS,
IW04, BurgessL85,AMP18, AAK21, BEGMP12, BEM02,DF16,DFJ18}, which
are particularly useful for understanding the analytic structure.
Similar to the de Sitter example, the only of these references using
the operator-theoretic view on the Feynman propagator is \cite{RumpfdS}
(here in two dimensions). The references \cite{BEGMP12,BEM02} have an
axiomatic approach. Appendix A of \cite{AMP18} is particularly helpful
to understand the analytic structure of propagators on the universal cover.
 Subsection \ref{sss:AdS_bc}, where we present the approach
based on the evolution of Cauchy data, is based on the seminal work
\cite{IW04}.

\subsection{Geometry of anti-de Sitter space}
\label{ssc:AdS_geometry}
 The $d$-dimensional anti-de Sitter space $\AdS_d$ can be
defined as an embedded submanifold of $\bR^{2,d-1}$:
\begin{align}
\label{eq:AdS_def}
 \AdS_d = \{ x \in \bR^{2,d-1}\;|\; \langle x | x\rangle =-1 \},
\end{align}
where
\begin{align}
 \langle x|x' \rangle &:=  -x^0 {x'}^0  - x^d {x'}^d
+ \sum_{i=1}^{d-1} x^i {x'}^i =: Z(x,x') \equiv Z.
\end{align}

A coordinate system covering all of $\AdS_d$ is given by
\begin{align}
\label{eq:AdS_global_coords.}
 &x^0 =  \cosh\rho\cos\tau, \quad x^i=\sinh\rho\; \Omega^i,\;
 \quad  x^d =\cosh\rho\sin\tau,
 \\ \notag
 \textup{where}\quad  &\tau\in[-\pi,\pi[,\;\rho\in\bR_{\geq0},
 \;\Omega\in\bS_{d-2}\hookrightarrow\bR^{d-1}
 \word{and} i=1,\dots,d-1.
\end{align}
In these coordinates, the line element reads
\begin{align}
\label{eq:metric_AdS_global}
 \dd s^2 = -\cosh(\rho)^2\dd\tau^2 + \dd \rho^2
 + \sinh(\rho)^2 \dd \Omega^2.
\end{align}
Note the famous cyclicity of time,
$x(\tau+2\pi k,\rho,\Omega)=x(\tau,\rho,\Omega)$ for all $k\in\bZ$.
Therefore, $\AdS_d$ has
closed timelike curves and is not globally hyperbolic.

$\AdS_d$ is equipped with an involution $x\mapsto -x$.
This involution maps the coordinates $(\tau,\rho,\Omega)$ to
$(\tau+\pi,\rho,-\Omega)$.

Another system of coordinates is obtained by replacing $\rho$ with
$u\in[0,\tfrac{\pi}{2}[$, where
$\sinh\rho=\tan u$. In these coordinates, the line element
\eqref{eq:metric_AdS_global} becomes
\begin{align}\label{orthis}
 \dd s^2
 =& \frac{ -\dd \tau^2 + \dd u^2 + \sin(u)^2
  \,\dd\Omega^2}{\cos(u)^2}.
\end{align}

In the coordinates \eqref{eq:AdS_global_coords.} and \eqref{orthis}, we find
\begin{align}\label{lineel}
 Z
&   = -\cosh\rho\cosh\rho' \cos(\tau-\tau')
                         + \sinh\rho\sinh\rho' \cos\theta\\
& = -\frac{ \cos(\tau-\tau') }{\cos u\cos u'}
                         + \frac{\sin u\sin u' \cos\theta}{\cos u\cos u'}.
\end{align}
where $\theta$ is the angle between $\Omega$ and $\Omega'$.

Let us fix the vector $x'=(1,0\dots,0)$. Then $-\langle
x|x'\rangle=\frac{\cos\tau}{\cos u}$ and we can partition $\AdS_d$
into the following regions:
\begin{subequations}
\begin{align}
  V_0&:=\{|\tau|<u\},\\
V_2&:=\{\pi-\tau<u\}\cup\{\pi+\tau<u\},\\
  V_1&:=\{\min(\tau,\pi-\tau)>u\},\\
  V_{-1}&:=\{\min(-\tau,\pi+\tau)>u\}.
\end{align}
\end{subequations}
Note that
\begin{subequations}
\begin{align}
  -Z >1, &\quad\tau\in[-\tfrac\pi2,\tfrac\pi2]
  &&\text{ on }V_0,\\
  Z > 1, &\quad \tau\in[-\pi,-\tfrac\pi2]\cup
  [\tfrac\pi2,\pi]
              &&\text{ on }V_2,\\
  |Z| <1, &\quad \tau\in[0,\pi] &&\text{ on }V_1,\\
  |Z| <1, &\quad \tau\in[-\pi,0] &&\text{ on }V_{-1}.
  \end{align}
\end{subequations}

The Klein-Gordon equation on anti de Sitter space reads
  \beq(-\Box+m^2)\phi(x)=0.\label{kgo1a}\eeq
Instead of $m$ we will use the parameter $\nu$
\begin{align}
\label{eq:AdS_spectral_parameter}
 \nu := \sqrt{m^2+\big(\tfrac{d-1}{2}\big)^2},
\end{align}
where as usual we use the principal branch of the square root.
Thus \eqref{kgo1a} is replaced with
\beq\Big(-\Box-\big(\tfrac{d-1}{2}\big)^2+\nu^2\Big)\phi(x)=0.\label{kgoa}\eeq

The Klein-Gordon equation on anti de Sitter space
restricted to invariant
solutions and written in terms of $Z$ reduces to the Gegenbauer
equation, where the sign in front of $\nu^2$ is opposite from de
Sitter space:
\begin{align}
\Big((1-Z^2)\del_Z^2 - d Z \del_Z +\nu^2 - \big(\tfrac{d-1}{2}\big)^2\Big) f(Z)=0.
\end{align}

\subsection{Universal cover of anti de Sitter space}

$\AdS_d$  has
the topology of $\bS_1\times\bR^{d-1}$. Therefore it has a universal covering
space
\begin{align}
\label{quo}
\tAdS_d\to\AdS_d.
\end{align}
 In the literature, this universal cover is sometimes called
anti-de Sitter space instead \cite{HawkingEllis}. We will, however,
use the name  anti-de Sitter space for the embedded
submanifold \eqref{eq:AdS_def}, adding the adjective
``proper'' whenever we think it is necessary to avoid confusion.

It is easy to describe  $\tAdS_d$ in
coordinates: we just assume that $\tau\in\mathbb{R}$, and keep the
line element \eqref{eq:metric_AdS_global} or \eqref{orthis}.
 $\tAdS_d$ is a static Lorentzian manifold. It is still not
globally hyperbolic, since there are geodesics, which in finite time
escape to its boundary.

Let us fix the vector $x'=(1,0\dots,0)$. Then we can partition $\tAdS_d$
into the following regions:
\begin{align}
  V_{2n}&:=\{|\tau-n\pi|<u\},\\
  V_{2n+1}&:=\{\min(\tau-n\pi,(n+1)\pi-\tau)>u\}.
\end{align}
Note that
\begin{align}
  -(-1)^n Z &>1,\quad\tau\in[(n-\tfrac12)\pi,(n+\tfrac12)\pi]
  &&\text{ on }V_{2n}\\
    |Z|&<1,\quad\tau\in[n\pi,(n+1)\pi]
    &&\text{ on }V_{2n+1}.
  \end{align}

The spaces $\AdS_d$ and $\tAdS_d$ with their various regions are
depicted in Figure \ref{fig:AdS}.

\begin{figure}[ht]
\floatbox[{\capbeside\thisfloatsetup{capbesideposition={right,center}}}]{figure}[\FBwidth]
{\caption{(a) Anti-de Sitter space in the coordinates $u\in[0,\tfrac{\pi}{2}[$ and $\tau\in]-\pi,\pi]$ from \eqref{orthis} and its partition into the regions $V_0$, $V_2$, $V_1$ and $V_{-1}$.
Each point represents a $d-2$-sphere of the coordinates
$\Omega$. The lines $\tau=\pi$ and $\tau=-\pi$ are glued together,
reflecting the cyclicity of time. An
observer can reach spatial infinity ($u=\tfrac{\pi}{2}$, indicated by the dashed line) in finite
time, which makes it necessary to impose boundary conditions
when solving the Cauchy problem for certain masses, see Section
\ref{sss:AdS_bc}. (b) The universal cover of
anti-de Sitter space in the same coordinates, where however $\tau$
ranges over all of $\bR$, removing the cylicity of time.
The  boundary at $u=\tfrac{\pi}{2}$ is still present.}\label{fig:AdS}}
{\scalebox{0.8}{\begin{tikzpicture}[scale=0.5, domain=-3:5]
\begin{scope}
\path[fill=gray, opacity=0.2] (0,-8) -- (4,-4) -- (0,0) -- cycle;
\end{scope}
\begin{scope}
\path[fill=gray, opacity=0.2] (0,0) -- (4,4) -- (0,8) -- cycle;
\end{scope}

\begin{scope}
\path[fill=gray, opacity=0.2, pattern= north east lines] (0,8) -- (4,8) -- (4,8.5) -- (0,8.5) -- cycle;
\end{scope}
\begin{scope}
\path[fill=gray, opacity=0.2, pattern= north east lines] (0,-8) -- (4,-8) -- (4,-8.5) -- (0,-8.5) -- cycle;
\end{scope}

\draw[->,color=black] (-0.5,0) -- (5,0);
\draw[shift={(4.5,0)},color=black] (0pt,2pt) -- (0pt,-2pt) node[below] {\footnotesize $\frac{\pi}{2}$};
\draw[->,color=black] (0,-9) -- (0,9);
\draw (5,0.2) node[anchor=north west] {$u$};
\draw (-0.3,10) node[anchor=north west] {$\tau$};
\draw[shift={(0,-8)},color=black] (2pt,0pt) -- (-2pt,0pt) node[left] {\footnotesize $ - \pi$};
\draw[shift={(0,-4)},color=black] (2pt,0pt) -- (-2pt,0pt) node[left] {\footnotesize $ - \frac{\pi}{2}$};
\draw[shift={(0,4)},color=black] (2pt,0pt) -- (-2pt,0pt) node[left] {\footnotesize $  \frac{\pi}{2}$};
\draw[shift={(0,8)},color=black] (2pt,0pt) -- (-2pt,0pt) node[left] {\footnotesize $  \pi$};

\draw[color=black] (0pt,-10pt) node[left] {\footnotesize $0$};

\draw[color=black, dotted] (0,-8) -- (4,-4) -- (0,0) -- (4,4) -- (0,8);
\draw[color=black, dotted] (0,-8) -- (4,-8);
\draw[color=black, dotted] (0,8) -- (4,8);
\draw[dashed] (4,8) -- (4,-8);
\node at (1.5,-4) {$V_{-1}$};
\node at (1.5,4) {$V_{1}$};
\node at (2.5,0.5) {$V_{0}$};
\node at (2.5,-7) {$V_{2}$};
\node at (2.5,7) {$V_{2}$};
\node at (2,-10) {(a)};

\begin{scope}
\path[fill=gray, opacity=0.2] (10,-8) -- (14,-4) -- (10,0) -- cycle;
\end{scope}
\begin{scope}
\path[fill=gray, opacity=0.2] (10,0) -- (14,4) -- (10,8) -- cycle;
\end{scope}

\begin{scope}
\path[fill=gray, opacity=0.2] (10,8) -- (14,12) -- (12,14) -- (10,14) -- cycle;
\end{scope}
\begin{scope}
\path[fill=gray, opacity=0.2] (10,-8) -- (11,-9) -- (10,-9) -- (10,-8) -- cycle;
\end{scope}

\draw[->,color=black] (9.5,0) -- (15,0);
\draw[shift={(14.5,0)},color=black] (0pt,2pt) -- (0pt,-2pt) node[below] {\footnotesize $\frac{\pi}{2}$};
\draw[->,color=black] (10,-9.5) -- (10,14.5);
\draw (15,0.2) node[anchor=north west] {$u$};
\draw (9.7,15.5) node[anchor=north west] {$\tau$};
\draw[shift={(10,-8)},color=black] (2pt,0pt) -- (-2pt,0pt) node[left] {\footnotesize $ - \pi$};
\draw[shift={(10,-4)},color=black] (2pt,0pt) -- (-2pt,0pt) node[left] {\footnotesize $ - \frac{\pi}{2}$};
\draw[shift={(10,4)},color=black] (2pt,0pt) -- (-2pt,0pt) node[left] {\footnotesize $  \frac{\pi}{2}$};
\draw[shift={(10,8)},color=black] (2pt,0pt) -- (-2pt,0pt) node[left] {\footnotesize $  \pi$};
\draw[shift={(10,12)},color=black] (2pt,0pt) -- (-2pt,0pt) node[left] {\footnotesize $  \frac{3\pi}{2}$};

\draw[color=black] (10,-10pt) node[left] {\footnotesize $0$};

\draw[color=black, dotted] (11,-9) -- (10,-8) -- (14,-4) -- (10,0) -- (14,4) -- (10,8) -- (14,12) -- (12,14);
\draw[dashed] (14,-9) -- (14,14);
\node at (11.5,-4) {$V_{-1}$};
\node at (11.5,4) {$V_{1}$};
\node at (12.5,0.5) {$V_{0}$};
\node at (12.5,-8) {$V_{-2}$};
\node at (12.5,8) {$V_{2}$};
\node at (13.25,13.5) {$V_{4}$};
\node at (11.5,12) {$V_{3}$};
\node at (12,-10) {(b)};
\end{tikzpicture}}
}
\end{figure}
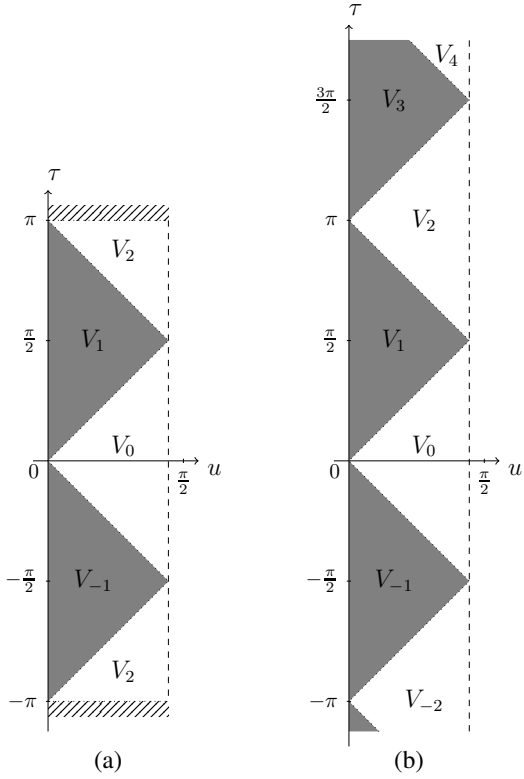

\subsection{Wick rotation}
\label{ssc:AdS_Wick}

Anti-de Sitter space is closely related to the hyperbolic space
\begin{align}
 \bH^d := \{ x \in \bR^{1,d} \;|\; [x|x]=-1 \},
\end{align}
where
\begin{align}
 [x|x'] = -x^0 {x'}^0 + \sum_{i=1}^d x^i {x'}^i,
\end{align}
as in \eqref{minko}.
Let $\Delta^\h$ be the Laplace-Beltrami operator on $\mathbb{H}^d$.
Set
\begin{align}
G^\h(-\nu^2):=\Big(-\Delta^\h-(\tfrac{d-1}{2})^2+\nu^2\Big)^{-1}.
\end{align}

For $\Re(\nu)>0$, the integral kernel of $G^\h(-\nu^2)$ can be expressed
in terms of the invariant quantity $[x|x']$ and the Gegenbauer function
${\bf Z}_{\alpha,\lambda}(w)$  as (see e.g.
\cite{DGR23a,DGR23b}, and for an equivalent expression in
terms of associated Legendre functions \cite{CDT18})
 \begin{align}
 G^\h\big(-\nu^2;x,x'\big)
     =\frac{\sqrt\pi\Gamma(\frac{d-1}{2}+\nu)
     }{\sqrt2(2\pi)^{\frac{d}{2}}2^{\nu}}
     {\bf Z}_{\frac{d}{2}-1,\nu }\big(-[x|x']\big).
     \label{eq:G_hyp}
 \end{align}

Let us try to introduce a kind of a Wick rotation from $\bH^d$ to
anti-de Sitter space by replacing $x^d$ with $\pm \ii x^d$.
We have\begin{align}
  [x|x'] &=-1-\frac{[x-x'|x-x'] }{2},\quad x,x'\in\bH^d,\\ \notag
  Z := \langle x|x'\rangle &=-1-\frac{\langle x-x'|x-x'\rangle }{2},
  \quad x,x'\in\AdS_d.
 \end{align}
 Thus, similar to the case of de Sitter space, we have to replace
 $-[x|x']$ in the argument of the Gegenbauer function
 in \eqref{eq:G_hyp} by $-\big(\langle x|x' \rangle\mp\ii0\big)
 =-\langle x|x' \rangle\pm\ii0$
 and insert a prefactor $\pm \ii$ coming from the change of the
 integral measure. In this way, we obtain
\begin{align}
\pm \ii \frac{\sqrt\pi\Gamma(\frac{d-1}{2}+\nu)
     }{\sqrt2(2\pi)^{\frac{d}{2}}2^{\nu} }
     {\bf Z}_{\frac{d}{2}-1,\nu }
         \big(-Z\pm\ii0\big)
         \label{eq:Wick_G_AdS}
\end{align}
as candidates for  (anti-)Feynman propagators on $\AdS_d$.

On the proper anti-de Sitter space $\AdS_d$ the latter expression
cannot be a Green function. In fact, due to
the identity \eqref{eq:Z+-}, the application of the Klein-Gordon
operator to \eqref{eq:Wick_G_AdS} yields a nonzero distribution supported
at $\{Z=-1\}\cup\{Z=1\}$ (the diagonal and the antipode of the diagonal).

This problem dissapears on the universal cover $\tAdS_d$ of
anti-de Sitter space. The expression \eqref{eq:Wick_G_AdS}, properly
continued to further regions, yields a Green function of the
Klein-Gordon operator, as we shall see in the next subsection.

The following four functions are bisolutions of the Klein-Gordon equation:
 \begin{align} \label{eq:tAdS_bisol_W0}
  \sim
     {\bf Z}_{\frac{d}{2}-1,\nu }
         \big(-Z\pm\ii0\sgn(\tau)\big) \word{and}
  \sim
     {\bf Z}_{\frac{d}{2}-1,-\nu }
         \big(-Z\pm\ii0\sgn(\tau)\big).
 \end{align}
 We expect that the following is true:
\begin{con} On $W_0$, the functions
\eqref{eq:tAdS_bisol_W0} form a basis of bisolutions of the
Klein-Gordon equation invariant wrt the restricted anti de Sitter
group.
\label{concon}\end{con}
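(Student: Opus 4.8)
The plan is to mirror the strategy behind the de Sitter Conjecture \ref{thm:dS_bisol}, reducing the problem to the second-order Gegenbauer ODE and counting dimensions, while handling the distributional subtleties at the singular loci $Z=\pm1$ separately. First I would observe that any bisolution invariant under the restricted anti-de Sitter group must, on the open dense subset of $W_0\times W_0$ where $x,x'$ are non-degenerately separated, be a function of the complete invariants of the group action on the regular set. On the universal cover these are the chordal quantity $Z=\langle x|x'\rangle$ together with the sign $\sgn(\tau-\tau')$, which the identity component preserves because it cannot reverse the time orientation. Substituting such an invariant function into the Klein-Gordon equation \eqref{kgoa}, and using that $-\Box-\big(\tfrac{d-1}{2}\big)^2+\nu^2$ acts on invariant functions as the Gegenbauer operator $(1-Z^2)\partial_Z^2-dZ\partial_Z+\nu^2-\big(\tfrac{d-1}{2}\big)^2$, I reduce the problem to this ODE on each of the two pieces $\{\tau>0\}$ and $\{\tau<0\}$.

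Second, since the Gegenbauer equation has the two linearly independent solutions $\mathbf{Z}_{\frac{d}{2}-1,\nu}$ and $\mathbf{Z}_{\frac{d}{2}-1,-\nu}$ (valid for $\tfrac{d-1}{2}\pm\nu\notin\{0,-1,-2,\dots\}$, see Appendix \ref{app:gegenbauer}), each of the two pieces contributes a two-dimensional family, giving at most four dimensions in total. The four candidate functions \eqref{eq:tAdS_bisol_W0} are exactly the result of selecting, on each piece, one of the two Gegenbauer solutions, with the $\pm\ii0\,\sgn(\tau)$ prescription dictating the side of the branch cut on which the function is to be evaluated. Their linear independence I would read off from their distinct large-$|Z|$ asymptotics $\sim Z^{-\frac{d-1}{2}\mp\nu}$ (which separates $\pm\nu$) together with the jump across $Z=-1$ encoded by $\pm\ii0$ (which separates the two branch prescriptions).

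Third, and this is where the $\pm\ii0$ prescription is genuinely needed, I would verify that each of the four functions is a bona fide \emph{distributional} bisolution across the characteristic surfaces $Z=1$ and $Z=-1$, i.e.\ that applying the Klein-Gordon operator produces no residual term supported on these loci. This is the anti-de Sitter analog of the corresponding de Sitter computation and rests on the jump relation \eqref{eq:Z+-} for $\mathbf{Z}_{\frac{d}{2}-1,\pm\nu}$ across its cut, exactly as in the Wick-rotation discussion surrounding \eqref{eq:Wick_G_AdS}. The $\sgn(\tau)$ factor is precisely what allows a consistent continuation through $Z=-1$ on the cover; on the proper $\AdS_d$ this continuation fails and instead produces the spurious antipodal delta noted after \eqref{eq:Wick_G_AdS}, which is why the statement is restricted to $W_0$ in $\tAdS_d$.

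The main obstacle — and the reason this remains a conjecture — is the completeness half of the dimension count: ruling out invariant distributional bisolutions that are \emph{not} captured by the smooth ODE on the regular set, such as distributions concentrated on the light cone $\{Z=1\}$ or its antipode $\{Z=-1\}$. The clean count is valid for classical invariant solutions on the regular set, but promoting it to the full distributional space requires showing that every invariant bisolution is determined by its restriction to the regular set, i.e.\ that none is supported on the singular orbits. I expect this to follow from a propagation-of-singularities argument for the Klein-Gordon operator on $\tAdS_d$ combined with the transitivity of the restricted group on each singular orbit, but making it rigorous — in exact parallel with the still-open de Sitter Conjecture \ref{thm:dS_bisol} — is the gap we do not close here.
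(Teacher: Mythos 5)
First, a point of comparison: the paper offers no proof of this statement. It is stated as Conjecture \ref{concon}, introduced by ``We expect that the following is true'', and Theorem \ref{resol-anti} is proved only conditionally on it; the authors remark that a complete proof ``would involve global coordinates and summation formulas for Gegenbauer functions'', i.e.\ a mode-decomposition argument, which is a different route from the invariant-theory/ODE reduction you propose. So there is nothing in the paper to measure your argument against, and your proposal must be judged on its own.

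Judged that way, your sketch is a sensible heuristic, consistent with how the paper motivates the analogous de Sitter statement (Conjecture \ref{thm:dS_bisol}, likewise unproven), but it is not a proof, and the gap you flag at the end is precisely the substantive content of the conjecture. Concretely, three things are missing. (i) The reduction ``invariant bisolution $\Rightarrow$ function of $Z$ and the time orientation satisfying the Gegenbauer ODE'' is justified only for classical solutions on the open orbits; for distributional bisolutions one must show that every invariant distribution on the regular set is of this form and that no invariant bisolution is concentrated on lower-dimensional orbits --- within $W_0$ the only such locus is the light cone $\{Z=-1\}$ (the antipodal cone $\{Z=1\}$ does not meet $W_0$, so half of your third step is vacuous). (ii) The dimension count is not quite the one you give: the three regions $V_{-1}$, $V_0$, $V_1$ a priori contribute a six-dimensional space of classical invariant solutions, and one must show that the requirement of being a distributional bisolution across $\{Z=-1\}$ --- no residual terms supported on the cone, which is where \eqref{eq:Z+-} and the behavior of ${\bf Z}_{\frac{d}{2}-1,\pm\nu}$ at the singular point $w=1$ enter --- cuts this down to exactly the four-dimensional span of \eqref{eq:tAdS_bisol_W0}; this matching computation is asserted but not performed. (iii) Linear independence of the four candidates requires genericity in $\nu$: for special values, e.g.\ when ${\bf Z}_{\frac{d}{2}-1,\nu}$ and ${\bf Z}_{\frac{d}{2}-1,-\nu}$ become proportional (cf.\ the $\sin\pi\lambda$ denominator in \eqref{formu3}), the asserted basis degenerates, a restriction absent from both the conjecture and your count. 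None of this makes your approach wrong; it makes it, like the paper's own discussion, a plausible but unproven outline.
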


\subsubsection{Resolvent of the d'Alembertian}

The essential self-adjointness of the d'Alembertian $-\Box$ on
$C_\mathrm{c}^\infty(\tAdS_d)$ is not covered by the references
 \cite{Rossmann78,vdBan84}. However, we expect that the methods of
 above references can be extended to  $\tAdS_d$, so that
 one can show  that the d'Alembertian is
 indeed essentially self-adjoint on $C_\mathrm{c}^\infty(\tAdS_d)$.

The main aim of this subsection is a computation of the integral
 kernel of the resolvent of the d'Alembertian on $\tAdS_d$.
As before, it is convenient to set
\begin{align}
 \label{eq:Gnu_AdS_tilde}
 G(-\nu^2) :=\Big(-\Box-\big(\tfrac{d-1}{2}\big)^2+\nu^2\Big)^{-1},
\end{align}
and denote the integral kernel of $G(-\nu^2)$ by
 $G(-\nu^2;x,x')$.
Below we will compute  $G(-\nu^2;x,x')$.
We state this computation  as a theorem. However, the arguments that we
present, quite  simple and convincing,
 use Conjecture \ref{concon}, which we have not proved. A natural
 strategy for a complete proof   would involve global coordinates and
 summation formulas for
Gegenbauer functions, and is much more complicated. It will not be given in this paper.

To describe $G(-\nu^2;x,x')$ explicitly, it is convenient
for $n\in\bZ$ to introduce
  open regions
\begin{align}
 W_n &:= \Big(
 V_{2n-1} \cup V_{2n} \cup V_{2n+1}\big)^{\rm cl}\Big)^{\circ},\quad
  n\in\bZ,
\end{align}
with $V_n$ as defined in Subsection \ref{ssc:AdS_geometry} and
with ${\rm cl}$ denoting the closure and $\circ$ the interior.
We have $W_{n}\cap W_{n+1}=   V_{2n+1}$ and
\begin{align}
\tAdS_d=\bigcup_{n\in\bZ}W_n.
\end{align}

\begin{thm}
 For $\nu^2\in\bC\setminus\bR$ and $\Re(\nu)>0$, the integral
 kernel of the resolvent \eqref{eq:Gnu_AdS_tilde} is given on
 $W_n$ by the formula
 \begin{align}
 \label{eq:tAds_resolvent}
 &G(-\nu^2;x,x')
 = \dfrac{\sqrt\pi\Gamma(\frac{d-1}{2}+\nu)
 }{\sqrt2(2\pi)^{\frac{d}{2}}2^{\nu} }
 \\ \notag
  &\quad\times\begin{cases}
               \ii \e^{-\ii |n|(\frac{d-1}{2}+\nu)\pi}
 {\bf Z}_{\frac{d}{2}-1,\nu }
         \Big(-(-1)^n Z
         +(-1)^n\ii0 s\Big), &\quad \Im\nu<0; \\ \notag
   -\ii\e^{\ii |n|(\frac{d-1}{2}+\nu)\pi}
 {\bf Z}_{\frac{d}{2}-1,\nu }
         \Big(-(-1)^n Z
         -(-1)^n\ii0 s\Big), &\quad    \Im\nu>0,
              \end{cases}
\end{align}
Here  $s$ can be represented by $s=\sgn(\sin(|\tau-\tau'|))$,
or
\begin{align}(x,x')\in
   V_{2n-1}&\Rightarrow s=(-1)^n\sgn(2n-1),\\ \notag
  (x,x')\in V_{2n}&\Rightarrow s=0,\\ \notag
(x,x')\in  V_{2n+1}&\Rightarrow s=(-1)^{n+1}\sgn(2n+1).
\end{align}
(Note that in $ V_{2n}$ we may set $s=-1$ or $s=1$
because the function is univalent).
\label{resol-anti}\end{thm}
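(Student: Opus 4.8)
The plan is to adapt the region-by-region strategy used for de Sitter space in Theorem~\ref{resofeyn}, now exploiting Conjecture~\ref{concon} together with the unwrapping of the time circle. First I would treat the central region $W_0$, which contains the diagonal $\{Z=1\}$. On $W_0$ the Wick-rotated expression \eqref{eq:Wick_G_AdS} is the natural candidate for the resolvent kernel: it solves the inhomogeneous Klein--Gordon equation with the correct $\delta$-singularity on the diagonal, and for $\Re(\nu)>0$ the Gegenbauer function ${\bf Z}_{\frac{d}{2}-1,\nu}(-Z\pm\ii0)$ decays like $|Z|^{-\frac{d-1}{2}-\nu}$, equivalently like $(\cos u)^{\frac{d-1}{2}+\nu}$, as one approaches spatial infinity $u\to\frac{\pi}{2}$. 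Since the resolvent of a self-adjoint operator at a non-real spectral parameter is bounded, its kernel must display exactly this decay. Invoking Conjecture~\ref{concon}, which asserts that the four functions \eqref{eq:tAdS_bisol_W0} span the invariant bisolutions on $W_0$, this decay requirement singles out $G(-\nu^2;x,x')$ uniquely on $W_0$ and fixes the boundary-value prescription there: $s=0$ on $V_0$, with $s=\pm1$ on the adjacent strips determined by $\sgn(\sin(\tau-\tau'))$.

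Next I would propagate the formula outward to $W_{\pm1}$ and then inductively to every $W_n$. The regions overlap, $W_n\cap W_{n+1}=V_{2n+1}$, and the continuation is constrained by two demands: the kernel must remain a bisolution across each interface, with no spurious singularity at the loci $\{Z=\pm1\}$ lying off the true diagonal, and it must continue to carry the decaying boundary mode $(\cos u)^{\frac{d-1}{2}+\nu}$ in every region. Geometrically, advancing from $W_n$ to $W_{n+1}$ corresponds to winding $\tau$ by one half-period and sends the effective argument to $-(-1)^{n}Z$, so this argument flips and its $\pm\ii0\,s$ prescription must be carried across the branch cut of ${\bf Z}_{\frac{d}{2}-1,\nu}$ along $]-\infty,-1]$. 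The connection formulas \eqref{formu1} and \eqref{formu3}, together with the jump identity \eqref{eq:Z+-}, convert each such crossing into an explicit monodromy factor; matching the two representations on the overlap $V_{2n+1}$ then forces the accumulated phase to equal $\ee^{\mp\ii|n|(\frac{d-1}{2}+\nu)\pi}$, with the sign correlated to $\Im\nu\lessgtr0$ as in the statement. The quantity $\frac{d-1}{2}+\nu$ governing the boundary behavior appears here as the reflection phase picked up at each bounce off spatial infinity.

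The subtle point---and the reason the construction succeeds on $\tAdS_d$ but fails on the proper $\AdS_d$---is the absence of any residual $\delta$-contribution at the antipodal locus $\{Z=-1\}$ after continuation. On the proper space, the identity \eqref{eq:Z+-} shows that $\big(-\Box-(\tfrac{d-1}{2})^2+\nu^2\big)$ applied to \eqref{eq:Wick_G_AdS} leaves a nonzero source on $\{Z=-1\}$; unwrapping the time circle separates the diagonal from its antipode onto distinct sheets $W_n$, and the phase factors are exactly what cancels these unwanted sources at each interface. I expect the main obstacle to be precisely this bookkeeping: one must verify, uniformly in $n$, that the signs $s$ and the monodromy phases conspire so that the piecewise-defined kernel glues into a single global distribution that is a bisolution everywhere except on the diagonal of $W_0$, while retaining the correct decay toward the spacelike boundary in all regions simultaneously. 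Once this cancellation is established across one interface, an induction on $|n|$ completes the argument.
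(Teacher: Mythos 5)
Your overall architecture is the same as the paper's: seed the kernel on $W_0$ with the Wick-rotated expression \eqref{eq:Wick_G_AdS}, continue it chart by chart across the overlaps $V_{2n+1}$ using the identity \eqref{eq:Z+-} (which produces one monodromy factor $\ee^{\mp\ii\pi(\frac{d-1}{2}+\nu)}$ per crossing), and invoke Conjecture \ref{concon} together with decay requirements to pin down the answer.

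There is, however, a genuine gap in your selection step. You claim that the spatial decay requirement plus Conjecture \ref{concon} ``singles out $G(-\nu^2;x,x')$ uniquely on $W_0$ and fixes the boundary-value prescription there.'' It does not. For $\Re(\nu)>0$, spatial decay only eliminates the basis bisolutions built from ${\bf Z}_{\frac{d}{2}-1,-\nu}$; both ${\bf Z}_{\frac{d}{2}-1,\nu}(-Z+\ii0\,\cdot)$ and ${\bf Z}_{\frac{d}{2}-1,\nu}(-Z-\ii0\,\cdot)$ decay identically as $|Z|\to\infty$, so a two-parameter family of invariant, spatially decaying Green functions survives on $W_0$, and in particular the $\pm\ii0$ prescription on $V_{\pm1}$ is not determined at this stage. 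The criterion that actually resolves the ambiguity --- and which your argument never states --- is decay in the \emph{time} direction: the kernel of a bounded resolvent must also stay bounded as $|\tau|\to\infty$, equivalently as $|n|\to\infty$, and the accumulated phase $\ee^{\mp\ii|n|\pi(\frac{d-1}{2}+\nu)}$ decays precisely when $\Im(\nu)\lessgtr0$. This is what ties the sign of the $\ii0$ prescription to $\sgn(\Im\nu)$; you assert that correlation but attribute it to ``matching the two representations on the overlap,'' which holds for either choice of sign and therefore selects nothing. The same temporal criterion is the entire content of the uniqueness argument in the paper: the two surviving spatially decaying bisolutions on $W_0$ each contain both boundary values $\pm\ii0$, so their continuations to the charts $W_n$ carry both phases $\ee^{+\ii|n|\pi(\frac{d-1}{2}+\nu)}$ and $\ee^{-\ii|n|\pi(\frac{d-1}{2}+\nu)}$ and hence grow exponentially in at least one time direction. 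Without this step your proof fixes neither the case distinction $\Im\nu\lessgtr0$ in \eqref{eq:tAds_resolvent} nor the uniqueness of the kernel.
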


\begin{proof}[Proof (assuming the validity of Conj. \ref{concon}).]
 We split the proof of \eqref{eq:tAds_resolvent} in two steps. First,
 we show that \eqref{eq:tAds_resolvent} is a fundamental solution
 with appropriate decay behavior as $|Z|\to\infty$ \emph{and}
 $|\tau|\to\infty$. Second, we argue that adding any bisolution,
 which is a (non-zero) linear combination of
 \eqref{eq:tAdS_bisol_W0} has exponential growth as $\tau\to +\infty$
 or $\tau\to-\infty$.

 On $W_0$, consider \eqref{eq:Wick_G_AdS}. On the overlap
 $V_1 = W_0 \cap W_2$, we have
 \begin{align} \label{eq:tAdS_intermediate}
      {\bf Z}_{\frac{d}{2}-1,\nu }
         \big(-Z\pm\ii0\big)
         &= \ee^{\mp \ii \pi \big(\tfrac{d-1}{2}+\nu\big)}
         {\bf Z}_{\frac{d}{2}-1,\nu }
         \big(-(-Z)\pm(-1)\ii0\big).
\end{align}
On the chart $W_2$, the integral kernel of the resolvent must be
a bisolution and it must on $V_1$ agree with
\eqref{eq:tAdS_intermediate}. Therefore, the $\ii0$ should switch
the sign from $V_{1}$ to $V_{3}$. On
$V_1$, we have $\tau\in]0,\pi[$. Hence
\begin{align}\label{eq:tAdS_intermediate2}
 \eqref{eq:tAdS_intermediate}
         = \ee^{\mp \ii \pi \big(\tfrac{d-1}{2}+\nu\big)}
         {\bf Z}_{\frac{d}{2}-1,\nu }
         \big(-(-Z)\pm(-1)\ii0 \sgn(\sin(|\tau-\tau'|))\big)
        \word{on} V_1
\end{align}
and \eqref{eq:tAdS_intermediate2} is the appropriate continuation
of \eqref{eq:tAdS_intermediate} to $W_2$.

Now notice that $\sgn(\sin(|\tau-\tau'|))=-1$ on $V_3$.
Therefore, in this region,
\begin{align}
 &\quad \ee^{\mp \ii \pi \big(\tfrac{d-1}{2}+\nu\big)}
         {\bf Z}_{\frac{d}{2}-1,\nu }
         \big(-(-Z)\pm(-1)\ii0 \sgn(\sin(|\tau-\tau'|))\big)
 \\ \notag
 &= \ee^{\mp \ii \pi \big(\tfrac{d-1}{2}+\nu\big)}
         {\bf Z}_{\frac{d}{2}-1,\nu }
         \big(-(-Z)\pm \ii0 \big)
  \\ \notag
 &= \ee^{\mp 2\pi \ii \big(\tfrac{d-1}{2}+\nu\big)}
         {\bf Z}_{\frac{d}{2}-1,\nu }
         \big(-(-1)^2 Z\pm (-1)^2 \ii0 \sgn(\sin(|\tau-\tau'|))\big).
\end{align}
 Inductively, we obtain \eqref{eq:tAds_resolvent}
for $n\geq0$. The continuation to negative $n$ works analogously.
Since only
${\bf Z}_{\frac{d}{2}-1,\nu }$ appears, both formulas have an
appropriate decay behavior as $|Z|\to\infty$ for any sign of
$\Im(\nu)$. However, the exponential prefactor
\begin{align}
 \ee^{\mp |n|\pi\ii \big(\tfrac{d-1}{2}+\nu\big)}
\end{align}
decays only for $\Im(\nu)\lessgtr0$ as  $|n|\to\infty$
(or equivalently, as $|\tau|\to\infty$).

Assuming Conjecture \ref{concon} we see that these are the only
fundamental solutions with appropriate decay behavior. Thus a basis
of bisolutions that decay as $|Z|\to\infty$ is on $W_0$ given by
\begin{align}
 &{\bf Z}_{\frac{d}{2}-1,\nu }
         \big(-Z+\ii0)\big)
         +{\bf Z}_{\frac{d}{2}-1,\nu }
         \big(-Z-\ii0\big)
         \\ \notag
    \word{and}
     &\sgn(\tau) \Big( {\bf Z}_{\frac{d}{2}-1,\nu }
         \big(-Z+\ii0)\big)
         -{\bf Z}_{\frac{d}{2}-1,\nu }
         \big(-Z-\ii0\big) \Big).
\end{align}
Both choices contain $+\ii0$ and $-\ii0$, and it is easy to see
that their continuation to the higher $W_n$ contains terms that
exponentially increase with time at least in one of the directions
$\tau>0$ resp. $\tau<0$.
\end{proof}


\subsubsection{Propagators from the resolvent}
\label{Propagators on anti-de Sitter space from the resolvent}

From the formula for the resolvent we can immediately determine the
operator-theoretic Feynman and anti-Feynman propagators for
for $n\in\mathbb{Z}$  in the regions $W_n$. We have
\begin{align}
 G_{\rm op}^{\F/ \overline\F} (x,x')
 = &\pm\ii
  \dfrac{\sqrt\pi\Gamma(\frac{d-1}{2}+\nu)
 }{\sqrt2(2\pi)^{\frac{d}{2}}2^{\nu} } \e^{\mp\ii |n|(\frac{d-1}{2}+\nu)\pi}
 {\bf Z}_{\frac{d}{2}-1,\nu }
         \Big(-(-1)^nZ
         \pm(-1)^n\ii0 s\Big),
\end{align}
where $s$ is as in Theorem \ref{resol-anti}.

The sum $G_{\rm op}^\F+G_{\rm op}^{ \overline\F}$ has a causal support
 (or
in the terminology of Def. \ref{def-special} the specialty condition holds):
\begin{align}\notag
 &G_{\rm op}^\F (x,x')+  G_{\rm op}^{ \overline\F}(x,x')
 = \ii
  \dfrac{\sqrt\pi\Gamma(\frac{d-1}{2}+\nu)
     }{\sqrt2(2\pi)^{\frac{d}{2}}2^{\nu} } \Bigg(
  \e^{-\ii |n|(\frac{d-1}{2}+\nu)\pi}
 {\bf Z}_{\frac{d}{2}-1,\nu }
         \Big(-(-1)^nZ
         +(-1)^n\ii0  s\Big)\\
&- \e^{\ii |n|(\frac{d-1}{2}+\nu)\pi}
 {\bf Z}_{\frac{d}{2}-1,\nu }
         \Big(-(-1)^nZ
         -(-1)^n\ii0 s\Big)\Bigg).\label{nonono}
\end{align}
In fact, \eqref{nonono}  vanishes for $x\in V_0$. We obtain the retarded and advanced propagator by multiplying it
with $\theta\big(\pm(\tau-\tau')\big)$.
The Pauli-Jordan propagator is then the difference of the retarded
and advanced propagator.
We use \eqref{eq:relDcurved} to define  $  G^{(\pm)} $
 obtaining on the chart $W_n$:
\begin{align}  \label{eq:pos_neg_props_AdS}
  G^{(\pm)} (x,x')
  = &
  \dfrac{\sqrt\pi\Gamma(\frac{d-1}{2}+\nu)
 }{\sqrt2(2\pi)^{\frac{d}{2}}2^{\nu} }
 \e^{\mp\ii n(\frac{d-1}{2}+\nu)\pi }
 \\ \notag &\times
 {\bf Z}_{\frac{d}{2}-1,\nu }
         \Big(-(-1)^nZ
         \pm(-1)^n\ii0 \tilde s\Big).
\end{align}
Here  $\tilde s$ can be represented by $\tilde s=\sgn(\sin(\tau-\tau'))$,
or
\begin{align}(x,x')\in
  V_{2n-1}&\Rightarrow \tilde s=(-1)^n,\\
  (x,x')\in  V_{2n}&\Rightarrow \tilde s=0,\\
(x,x')\in   V_{2n+1}&\Rightarrow \tilde s=(-1)^{n+1}.
\end{align}

Note that for $\nu^2<0$ the specialty condition is no longer
true. Therefore, although we can define $G_\op^\F$ and $G_\op^{ \overline\F}$,
we are not able to obtain other propagators from them.

\subsubsection{Trigonometric Pöschl-Teller
Hamiltonian}

In our further analysis of the anti-de Sitter space we will need properties of
the following 1-dimensional Schr\"odinger operator on $L^2[0,\tfrac{\pi}{2}]$:
\begin{align}\label{PT}
 H_{\alpha,\nu}^{\rm PT} := -\partial_u^2
 +\frac{\alpha^2-\tfrac14}{\sin(u)^2}
 + \frac{\nu^2-\tfrac14}{\cos(u)^2}.
\end{align}
It is called  the \emph{trigonometric Pöschl-Teller
Hamiltonian} \cite{PT33}
and
is one of the 1-dimensional Schr\"odinger operators exactly solvable in
terms of hypergeometric functions.

By an extension of standard
arguments (cf. \cite[Chapter X]{RSII}), one finds that
$H_{\alpha,\nu}^{\rm PT}$, viewed as an operator on $L^2[0,\tfrac{\pi}{2}]$,
is essentially self-adjoint if both $\nu^2\geq1$ and $\alpha^2\geq1$,
it has a positive Friedrichs extension if $\nu^2\geq0$ and $\alpha^2\geq0$,
and  all self-adjoint extensions are unbounded from below if
$\nu^2<0$ or $\alpha^2<0$.

\subsubsection{Propagators from the evolution of Cauchy data}
\label{sss:AdS_bc}

In this subsection we present an approach to propagators on
$\tAdS_d$ different from that
of Subsection \ref{Propagators on anti-de Sitter space from the resolvent}.
It is  based on the  evolution of the Cauchy data. We will use the stationarity of $\tAdS_d$.

The Klein-Gordon operator with effective mass $m$ in the coordinates
\eqref{orthis}
is given by
\begin{align}
\label{eq:AdS_KG}
 -\square_g + m^2
 &= -\frac{1}{\sqrt{|\det g|}} \del_\mu g^{\mu\nu} \sqrt{|\det g|}
 \del_\nu + m^2
 \\ \notag
 &= \cos(u)^2 \Big( \partial_\tau^2 -\frac{\Delta_{\bS^{d-2}}}{\sin(u)^2}
 -\tan(u)^{2-d} \partial_u \tan(u)^{d-2} \partial_u
 + \frac{m^2}{\cos(u)^2}\Big)
\end{align}
with $\Delta_{\bS^{d-2}}$ being the Laplace-Beltrami operator on the
$d-2$-dimensional sphere parametrized by the coordinates $\Omega$.
Gauging \eqref{eq:AdS_KG}  we obtain
\begin{align}
\label{eq:Pöschl1}
 &\quad\tan(u)^{\tfrac{d-2}{2}} \big( -\square_g + m^2  \big)
 \tan(u)^{\tfrac{2-d}{2}}
 \\
 &= \cos(u)^2 \Bigg( \partial_\tau^2 -\partial_u^2
 +\frac{-\Delta_{\bS^{d-2}} + \big(\tfrac{d-3}{2}\big)^2-\tfrac14}{\sin(u)^2}
 + \frac{\nu^2-\tfrac14}{\cos(u)^2}\Bigg) \label{eq:Pöschl2}
\end{align}
with $\nu^2$ as in \eqref{eq:AdS_spectral_parameter}.
For $d\geq3$, the spectrum of $-\Delta_{\bS^{d-2}}$ is
$\{ l(l+d-3)\;|\; l\in\bN_0\}$. For $d=2$, the term proportional to
$\sin(u)^{-2}$ vanishes.

Hence, restricted to eigenfunctions of
$-\Delta_{\bS^{d-2}}$,
\eqref{eq:Pöschl1} becomes,
up to the prefactor $\cos(u)^2$, the trigonometric Pöschl-Teller
Hamiltonian \eqref{PT}
with $\alpha := l+\tfrac{d-3}{2}$ if $d\geq3$ and $\alpha^2=\tfrac14$ if
$d=2$.

To define dynamics in $\tAdS^d$, one needs to fix a
self-adjoint extension of $H_{\alpha,\nu}^{\rm PT}$, i.e., boundary conditions
at spacelike infinity.
A comprehensive analysis of boundary conditions for $H_{\rm PT}$
and their application to anti-de Sitter QFT has been carried out
by Ishibashi and Wald \cite{IW04}.

Notice first that $\alpha^2<1$ if and only if $d=2$ or $d\in\{3,4\}$
and $l=0$. Hence, one might expect that boundary conditions at the
origin need to be fixed in these cases. But one can show that this
is merely an artifact of the choice of coordinates and that no
boundary conditions at $u=0$ are required \cite{IW04}. The important
part is fixing the boundary conditions (i.e., a self-adjoint
extension of $H_{\alpha,\nu}^{\rm PT}$) at spatial infinity $u=\tfrac{\pi}{2}$.

Now for $\nu^2\geq 1$ the operator
$H_{\alpha,\nu}^{\rm PT}$ is essentially self-adjoint, so the
dynamics is uniquely determined. We can compute all propagators---they
agree with those obtained from the operator-theoretic Feynman
propagator. In particular, the specialty condition is true.

For $0\leq\nu^2<1$ we have a one-parameter family of self-adjoint
extensions, depending on the boundary condition at spatial infinity. All of them can be
used to define the propagators. Among them there is a distinguished
boundary condition given by the Friedrichs extension, or equivalently,
by the analytic continuation in the parameter $\nu$.
By the uniqueness of analytic continuation, this leads to propagators
that agree
with those obtained from the operator-theoretic Feynman propagator.

Finally, for $\nu^2<0$ there is a one-parameter family of realizations of
$H_{\alpha,\nu}^{\rm PT}$, and all  are unbounded from below. Each of
them can be used to define an evolution of Cauchy data, and hence the
retarded and advanced propagator.  However, in contrast
to the case $0\leq\nu^2<1$, none of them is distinguished.

\appendix

\section{Projections and Krein spaces}
\label{app:Involutions_and_projections}
The main goal of this appendix is a short presentation
  of basic facts about Krein spaces, which provide a natural
  functional-analytic setting for the Klein-Gordon equation.
There exist comprehensive textbook treatments of spaces with
indefinite inner products \cite{AzizovIokhvidov,Bognar}. Our treatment
is perhaps more concise, concentrating on the concepts directly needed
in our paper. To a large extent we follow \cite{DS22}, with some
simplifications and improvements.

We start with
some useful but not well-known lemmas about projections, involutions and
complementary subspaces, presenting  constructions related
to pairs of complementary subspaces, which go back to Kato
\cite{kato:perturbation}. Then we  describe elements of the theory of Krein spaces.
The main result that we prove
is the proposition saying that every pair consisting of a  maximal uniformly positive and
maximal uniformly negative subspace is complementary, which is
crucial in the construction of the in-out Feynman propagator.

\subsection{Involutions}
\label{sub:Involutions}
Let $\cW$ be a vector space. We do not need topology on $\cW$ for the
moment. We use the term ``invertible'' as a synonym of ``bijective''.
\begin{definition}
  \label{def:complementary}
  We say that a pair $(\cZ_\bullet^{(+)},\cZ_\bullet^{(-)})$ of subspaces of  $\cW$ is \emph{complementary} if
  \begin{equation*}
    \cZ_\bullet^{(+)} \cap \cZ_\bullet^{(-)} = \{0\},
    \quad
    \cZ_\bullet^{(+)}+\cZ_\bullet^{(-)} = \cW.
  \end{equation*}
\end{definition}

  \begin{definition}
  \label{def:complementary_proj}
  We say that a pair of operators
  $(\Pi_\bullet^{(+)},\Pi_\bullet^{(-)})$ on $\cW$ is a pair
  of complementary projections if
  \begin{equation*}
    {(\Pi_\bullet^{(\pm)})}^2=\Pi_\bullet^{(\pm)},
    \quad
    \Pi_\bullet^{(+)}+    \Pi_\bullet^{(-)}=\one.
  \end{equation*}
\end{definition}

\begin{definition}
  An operator $S_\bullet$ on $\cW$ is called an \emph{involution}, if $S_\bullet^2 = \one$.
\end{definition}
Note that there is a 1-1 correspondence between involutions, pairs of
complementary projections and pairs of complementary subspaces:
\begin{equation}
  \label{projo-}
  \Pi^{(\pm)}_\bullet := \frac12 (\one \pm S_\bullet),
  \quad
  \cZ^{(\pm)}_\bullet := \Ran(\Pi^{(\pm)}_\bullet).
\end{equation}



\subsection{Pair of involutions I}
In this subsection we give a criterion for complementarity of two
subspaces, and then we construct the corresponding projections
following Kato \cite{kato:perturbation}.

Suppose that $S_1$ and $S_2$ are two involutions on $\cW$.
Let
\begin{align*}
  \Pi_i^{(\pm)} := \frac12 (\one \pm S_i),\quad \cZ_i^{(\pm)} := \Ran(\Pi_i^{(\pm)}),\quad i=1,2,
\end{align*}
be the corresponding pairs of complementary projections and
subspaces. Define
  \begin{align}
    \Upsilon
    \label{eq:Upsilon_operator4}
     & \mathrel{\phantom{:=}\mathllap{=}}
       \frac14(S_1+S_2)^2.
  \end{align}
Observe that $\Upsilon$ commutes with $\Pi_1^{(+)}$, $\Pi_1^{(-)}$, $\Pi_2^{(+)}$ and $\Pi_2^{(-)}$.

\begin{prop}
  \label{prop:pair_proj}
  The following conditions are equivalent:
  \begin{enumerate}
    \item[(i)] $\Upsilon$ is invertible.
    \item[(ii)] $\Pi_1^{(+)} + \Pi_2^{(-)}$ and $\Pi_2^{(+)} + \Pi_1^{(-)}$ are  invertible.
    \end{enumerate}
  Moreover, if one of the above holds, then the pairs $(\cZ_1^{(+)},\cZ_2^{(-)})$ as well as $(\cZ_2^{(+)},\cZ_1^{(-)})$ are complementary.
\end{prop}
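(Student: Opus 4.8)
The plan is to reduce the whole statement to a single algebraic factorization identifying $\Upsilon$ with the product of the two operators in (ii). First I would rewrite those operators in terms of the difference $S_1-S_2$. Using $\Pi_i^{(\pm)}=\tfrac12(\one\pm S_i)$, one finds
\[
\Pi_1^{(+)}+\Pi_2^{(-)}=\one+\tfrac12(S_1-S_2),\qquad
\Pi_2^{(+)}+\Pi_1^{(-)}=\one-\tfrac12(S_1-S_2).
\]
Abbreviating $A:=\Pi_1^{(+)}+\Pi_2^{(-)}$ and $B:=\Pi_2^{(+)}+\Pi_1^{(-)}$, their product in either order equals $\one-\tfrac14(S_1-S_2)^2$.

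Next I would record the elementary identity $(S_1+S_2)^2+(S_1-S_2)^2=2(S_1^2+S_2^2)=4\one$, valid because $S_1^2=S_2^2=\one$. Hence $\one-\tfrac14(S_1-S_2)^2=\tfrac14(S_1+S_2)^2=\Upsilon$, giving the central identity $AB=BA=\Upsilon$. Since $\Upsilon$ commutes with every $\Pi_i^{(\pm)}$ (as observed just before the statement), it commutes with $A$ and $B$ as well.

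With this factorization the equivalence is immediate. The implication (ii)$\Rightarrow$(i) holds because a product of bijective maps is bijective. For (i)$\Rightarrow$(ii) I would use that $AB=\Upsilon$ invertible furnishes $A$ with the right inverse $B\Upsilon^{-1}$, while $BA=\Upsilon$ furnishes $A$ with the left inverse $\Upsilon^{-1}B$, so $A$ is bijective; the argument for $B$ is symmetric. Here I rely on the purely algebraic convention, adopted in this subsection, that a linear map possessing both a left and a right inverse is automatically bijective.

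For the \emph{moreover} clause I would exploit that, under either hypothesis, both $A$ and $B$ are bijective. For the sum $\cZ_1^{(+)}+\cZ_2^{(-)}=\cW$ I note that $\Ran(A)\subseteq\cZ_1^{(+)}+\cZ_2^{(-)}$ while $\Ran(A)=\cW$. For triviality of the intersection, if $v\in\cZ_1^{(+)}\cap\cZ_2^{(-)}$ then $\Pi_1^{(-)}v=\Pi_2^{(+)}v=0$, whence $Bv=0$ and so $v=0$ by injectivity of $B$. Complementarity of $(\cZ_2^{(+)},\cZ_1^{(-)})$ follows by the same reasoning with the roles of $A$ and $B$ exchanged. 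The argument is entirely routine once the factorization $AB=BA=\Upsilon$ is in hand; the only point demanding a little care—the closest thing to an obstacle—is keeping the algebraic (bijective) notion of invertibility consistent, i.e. checking that the one-sided inverses built from $\Upsilon^{-1}$ genuinely force two-sided invertibility rather than merely one-sided.
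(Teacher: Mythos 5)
Your proof is correct. The heart of it---the factorization $\Upsilon=(\Pi_1^{(+)}+\Pi_2^{(-)})(\Pi_2^{(+)}+\Pi_1^{(-)})=(\Pi_2^{(+)}+\Pi_1^{(-)})(\Pi_1^{(+)}+\Pi_2^{(-)})$---is exactly the identity the paper's proof rests on, and your left-inverse/right-inverse argument is precisely the proof of the ``easy fact'' the paper invokes without proof (if $R=ST=TS$ then $R$ is bijective iff $S$ and $T$ are). Your derivation of the factorization via $(S_1+S_2)^2+(S_1-S_2)^2=4\one$ is a clean way to verify it, though the paper simply asserts the product formula. Where you genuinely diverge is the \emph{moreover} clause: the paper defers it to the following proposition, which exhibits the complementary projections explicitly as $\Lambda_{12}^{(+)}=\Pi_1^{(+)}\Upsilon^{-1}\Pi_2^{(+)}$ and so on, whereas you argue directly that surjectivity of $A$ forces $\cZ_1^{(+)}+\cZ_2^{(-)}=\cW$ and injectivity of $B$ kills the intersection. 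Your route is more elementary and self-contained; the paper's buys the explicit Kato-style projection formulas, which it needs anyway for the subsequent development (they reappear as the $\Pi_{\alpha,\beta}^{(\pm)}$ underlying the in-out Feynman propagator), so nothing is wasted there. Both are complete proofs.
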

\begin{proof}
 The equivalence of (i) and (ii) follows from
  \begin{align}
    \Upsilon
    & = (\Pi_1^{(+)} + \Pi_2^{(-)}) (\Pi_2^{(+)} + \Pi_1^{(-)})
  \end{align}
 by the following  easy fact:
  If $R,S,T$ are maps such that $R = ST = TS$, then
  $R$ is bijective if and only if both $T$ and $S$ are bijective.

  The last implication follows from the next proposition.
\end{proof}

In the setting of the above proposition we can use $\Upsilon$ to construct two pairs of complementary projections:
\begin{prop}
  \label{prop:compl-proj}
  Suppose that $\Upsilon$ is invertible.
  Then
  \begin{alignat*}{2} \Lambda_{12}^{(+)} & := \Pi_1^{(+)} \Upsilon^{-1} \Pi_2^{(+)} & \quad & \text{is the projection onto $\cZ_1^{(+)}$ along $\cZ_2^{(-)}$}, \\ \Lambda_{12}^{(-)} & := \Pi_2^{(-)} \Upsilon^{-1} \Pi_1^{(-)} & \quad & \text{is the projection onto $\cZ_2^{(-)}$ along $\cZ_1^{(+)}$}, \\ \Lambda_{21}^{(+)} & := \Pi_2^{(+)} \Upsilon^{-1} \Pi_1^{(+)} & \quad & \text{is the projection onto $\cZ_2^{(+)}$ along $\cZ_1^{(-)}$}, \\ \Lambda_{21}^{(-)} & := \Pi_1^{(-)} \Upsilon^{-1} \Pi_2^{(-)} & \quad & \text{is the projection onto $\cZ_1^{(-)}$ along $\cZ_2^{(+)}$}.
  \end{alignat*}

  In particular,
  \begin{equation*}
    \Lambda_{12}^{(+)} + \Lambda_{12}^{(-)} = \one,
    \quad
    \Lambda_{21}^{(+)} + \Lambda_{21}^{(-)} = \one.
  \end{equation*}
\end{prop}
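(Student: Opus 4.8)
The plan is to reduce everything to two purely algebraic facts about the idempotents $\Pi_i^{(\pm)}$ together with the (two-sided, algebraic) invertibility of $\Upsilon$, never invoking topology or adjoints. The first fact, already recorded before the statement, is that $\Upsilon$, and hence $\Upsilon^{-1}$, commutes with each of $\Pi_1^{(\pm)}$, $\Pi_2^{(\pm)}$; this will let me move $\Upsilon^{-1}$ freely past any projection. The second is the factorization from Prop.~\ref{prop:pair_proj}: expanding $(\Pi_1^{(+)}+\Pi_2^{(-)})(\Pi_2^{(+)}+\Pi_1^{(-)})$ and using $\Pi_i^{(+)}\Pi_i^{(-)}=\Pi_i^{(-)}\Pi_i^{(+)}=0$ collapses it to
\[
\Upsilon=\Pi_1^{(+)}\Pi_2^{(+)}+\Pi_2^{(-)}\Pi_1^{(-)},
\]
and, by the $1\leftrightarrow2$ symmetry of $\Upsilon$, also $\Upsilon=\Pi_2^{(+)}\Pi_1^{(+)}+\Pi_1^{(-)}\Pi_2^{(-)}$.

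First I would extract the \emph{sandwich identities}. Sandwiching the factorization between two copies of $\Pi_1^{(+)}$ annihilates the second summand through $\Pi_1^{(-)}\Pi_1^{(+)}=0$, so $\Pi_1^{(+)}\Upsilon\Pi_1^{(+)}=\Pi_1^{(+)}\Pi_2^{(+)}\Pi_1^{(+)}$; but commutativity gives $\Pi_1^{(+)}\Upsilon\Pi_1^{(+)}=\Upsilon\Pi_1^{(+)}$, whence
\[
\Pi_1^{(+)}\Pi_2^{(+)}\Pi_1^{(+)}=\Upsilon\Pi_1^{(+)},\qquad \Pi_2^{(+)}\Pi_1^{(+)}\Pi_2^{(+)}=\Upsilon\Pi_2^{(+)},
\]
the two minus-versions coming from the second factorization. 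These are the only non-routine computations, and the step where care is needed; everything downstream is bookkeeping.

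With these in hand, idempotency of $\Lambda_{12}^{(+)}=\Pi_1^{(+)}\Upsilon^{-1}\Pi_2^{(+)}$ is immediate: pulling the two $\Upsilon^{-1}$ to the front yields $\Upsilon^{-2}\Pi_1^{(+)}\Pi_2^{(+)}\Pi_1^{(+)}\Pi_2^{(+)}$, and $\Pi_2^{(+)}\Pi_1^{(+)}\Pi_2^{(+)}=\Upsilon\Pi_2^{(+)}$ turns the four-fold product into $\Upsilon\,\Pi_1^{(+)}\Pi_2^{(+)}$, returning $\Lambda_{12}^{(+)}$. For the range, the leading $\Pi_1^{(+)}$ gives $\Ran\Lambda_{12}^{(+)}\subseteq\cZ_1^{(+)}$, while the same sandwich identity yields $\Lambda_{12}^{(+)}\Pi_1^{(+)}=\Upsilon^{-1}\Pi_1^{(+)}\Pi_2^{(+)}\Pi_1^{(+)}=\Pi_1^{(+)}$, so $\Lambda_{12}^{(+)}$ restricts to the identity on $\cZ_1^{(+)}=\Ran\Pi_1^{(+)}$; hence $\Ran\Lambda_{12}^{(+)}=\cZ_1^{(+)}$.

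Finally I would settle the "along'' (kernel) part together with $\Lambda_{12}^{(+)}+\Lambda_{12}^{(-)}=\one$ in one stroke: pulling $\Upsilon^{-1}$ out front and invoking the factorization,
\[
\Lambda_{12}^{(+)}+\Lambda_{12}^{(-)}=\Upsilon^{-1}\bigl(\Pi_1^{(+)}\Pi_2^{(+)}+\Pi_2^{(-)}\Pi_1^{(-)}\bigr)=\Upsilon^{-1}\Upsilon=\one .
\]
Since $\Lambda_{12}^{(+)}$ is idempotent and $\Lambda_{12}^{(-)}=\one-\Lambda_{12}^{(+)}$, the pair is automatically complementary, so $\Ker\Lambda_{12}^{(+)}=\Ran\Lambda_{12}^{(-)}$; the leading $\Pi_2^{(-)}$ gives $\Ran\Lambda_{12}^{(-)}\subseteq\cZ_2^{(-)}$, and the trailing $\Pi_2^{(+)}$ in $\Lambda_{12}^{(+)}$ gives the reverse inclusion $\cZ_2^{(-)}=\Ker\Pi_2^{(+)}\subseteq\Ker\Lambda_{12}^{(+)}$. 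This pins down $\Ran\Lambda_{12}^{(-)}=\Ker\Lambda_{12}^{(+)}=\cZ_2^{(-)}$, so $\Lambda_{12}^{(+)}$ projects onto $\cZ_1^{(+)}$ along $\cZ_2^{(-)}$ and $\Lambda_{12}^{(-)}$ onto $\cZ_2^{(-)}$ along $\cZ_1^{(+)}$. The claims for $\Lambda_{21}^{(\pm)}$ and the relation $\Lambda_{21}^{(+)}+\Lambda_{21}^{(-)}=\one$ then follow verbatim by exchanging the labels $1\leftrightarrow2$. The one subtlety to keep in mind throughout is that $\Upsilon^{-1}$ is only an algebraic two-sided inverse, so the argument may use solely commutation and the idempotent relations.
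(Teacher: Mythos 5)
Your proof is correct and follows essentially the same route as the paper: both rest on the commutation of $\Upsilon^{-1}$ with all four projections and on the factorization $\Upsilon=\Pi_1^{(+)}\Pi_2^{(+)}+\Pi_2^{(-)}\Pi_1^{(-)}=\Pi_2^{(+)}\Pi_1^{(+)}+\Pi_1^{(-)}\Pi_2^{(-)}$. The only cosmetic difference is that the paper identifies range and kernel by writing $\Lambda_{12}^{(+)}=\Pi_1^{(+)}(\Pi_2^{(+)}+\Pi_1^{(-)})\Upsilon^{-1}=\Upsilon^{-1}(\Pi_1^{(+)}+\Pi_2^{(-)})\Pi_2^{(+)}$ with the middle factors invertible, whereas you get the same conclusions from your sandwich identities and the complementary idempotent $\one-\Lambda_{12}^{(+)}$.
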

\begin{proof}
  First we check that $\Lambda_{12}^{(+)}$ is a projection:
  \begin{align*}
    \bigl( \Lambda_{12}^{(+)} \bigr)^2 & = \Pi_1^{(+)} \Upsilon^{-1} \Pi_2^{(+)} \Pi_1^{(+)} \Upsilon^{-1} \Pi_2^{(+)} \\ & = \Pi_1^{(+)} \Upsilon^{-1} (\Pi_2^{(+)} \Pi_1^{(+)} + \Pi_1^{(-)} \Pi_2^{(-)}) \Upsilon^{-1} \Pi_2^{(+)} = \Lambda_{12}^{(+)}.
  \end{align*}
  Moreover,
  \begin{align*}
    \Lambda_{12}^{(+)} & = \Pi_1^{(+)} (\Pi_2^{(+)}+\Pi_1^{(-)}) \Upsilon^{-1}=\Upsilon^{-1} (\Pi_1^{(+)}+\Pi_2^{(-)}) \Pi_2^{(+)}.
  \end{align*}
  But $ (\Pi_2^{(+)}+\Pi_1^{(-)}) \Upsilon^{-1}$ and $\Upsilon^{-1} (\Pi_1^{(+)}+\Pi_2^{(-)})$ are invertible.
  Hence $\Ran(\Lambda_{12}^{(+)})=\Ran (\Pi_1^{(+)})$ and $\Ker(\Lambda_{12}^{(+)}) =\Ker(\Pi_2^{(+)})=\Ran(\Pi_2^{(-)})$.
  This proves the statement of the proposition about $\Lambda_{12}^{(+)}$.
  The remaining statements are proven analogously.
\end{proof}

\begin{remark} Note that the notation for projections
  $\Lambda_{12}^{(\pm)}$ and   $\Lambda_{21}^{(\pm)}$ is different
  than in \cite{DS22}. \end{remark}
\subsection{Pair of  involutions II}
\label{sub:Pairs of admissible involutions}

Let $S_i$, $(\Pi_i^{(+)},\Pi_i^{(-)})$, $(\cZ_i^{(+)},\cZ_i^{(-)})$,
$i=1,2$, be
as in the previous subsection.
Set
\begin{equation}
  K := S_2 S_1.\label{kaku}
\end{equation}

\begin{prop}
$K$ is invertible and
  \begin{align}
    \label{chazar0}
    S_1KS_1=S_2 K S_2                                          & =
                                                                 K^{-1}.\end{align}
\end{prop}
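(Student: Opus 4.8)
The plan is to exploit the single defining property of an involution, namely $S_i^2=\one$, which already makes each $S_i$ its own inverse. First I would record that, since $S_1^2=\one=S_2^2$, both $S_1$ and $S_2$ are bijective with $S_1^{-1}=S_1$ and $S_2^{-1}=S_2$. Consequently $K=S_2S_1$, being a composition of two bijections, is itself bijective, and its inverse is computed directly as
\begin{align}
K^{-1}=(S_2S_1)^{-1}=S_1^{-1}S_2^{-1}=S_1S_2.
\end{align}
This settles the invertibility claim and, more usefully, produces the closed form $K^{-1}=S_1S_2$ that the two conjugation identities must reproduce.

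Next I would verify the two identities \eqref{chazar0} by the same mechanism, substituting the definition $K=S_2S_1$ and collapsing the adjacent squares. For the first,
\begin{align}
S_1KS_1=S_1(S_2S_1)S_1=S_1S_2(S_1S_1)=S_1S_2S_1^2=S_1S_2,
\end{align}
and for the second,
\begin{align}
S_2KS_2=S_2(S_2S_1)S_2=(S_2S_2)S_1S_2=S_2^2S_1S_2=S_1S_2.
\end{align}
Comparing each right-hand side with the expression $K^{-1}=S_1S_2$ obtained above immediately gives $S_1KS_1=S_2KS_2=K^{-1}$.

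Honestly, there is no genuine obstacle here: the statement is a purely algebraic consequence of $S_i^2=\one$, requiring no topology, no completeness, and no appeal to the earlier complementarity results of Propositions \ref{prop:pair_proj} and \ref{prop:compl-proj}. The only point worth stating explicitly is that the argument never uses commutativity of $S_1$ and $S_2$ --- the cancellations are engineered so that each conjugation moves one factor past its own square --- so the identities hold for an \emph{arbitrary} pair of involutions. If anything, the ``work'' is purely presentational: making sure the reader sees that both conjugations land on the same element $S_1S_2$, which is exactly $K^{-1}$.
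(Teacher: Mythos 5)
Your proof is correct and is exactly the elementary computation the paper has in mind --- indeed the paper states this proposition without proof, evidently regarding it as immediate from $S_i^2=\one$, which is precisely what you verify. Your explicit identification $K^{-1}=S_1S_2$ and the observation that no commutativity is needed are both accurate.
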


In what follows we will use
the decomposition $\cW = \cZ_1^{(+)} \oplus \cZ_1^{(-)}$.
Under the assumption that                                                            $\one+K$
                                                               is
                                                               invertible,
                                                               we define
\begin{equation}
 c := \Pi_1^{(+)} \frac{\one-K}{\one+K} \Pi_1^{(-)}, \quad d := \Pi_1^{(-)} \frac{\one-K}{\one+K} \Pi_1^{(+)}.\label{kaku1}
\end{equation}
where $c$, resp. $d$  are interpreted as  operators from $\cZ_1^{(-)}$
to $\cZ_1^{(+)}$, resp. from $\cZ_1^{(+)}$ to $\cZ_1^{(-)}$.

\begin{prop}
\label{prop:1+K_etc}
The following conditions are equivalent:
  \begin{itemize}
\item[(i)] $\Upsilon$ is invertible (or  Condition (ii) of
      Proposition \ref{prop:pair_proj} is true).
\item[(ii)] $\one+K$ is invertible.
  \end{itemize}
  Suppose that the above conditions are true. As we know from
  Prop. \ref{prop:pair_proj}, the pairs of subspaces
  $(\cZ_1^{(+)}, \cZ_2^{(-)})$ and $(\cZ_2^{(+)}, \cZ_1^{(-)})$
  are then complementary. Here are new formulas for the
  corresponding projections:
  \begin{alignat*}{3}
    \Lambda_{12}^{(+)} & =
    \begin{bmatrix}
      \one & c \\ 0 & 0
    \end{bmatrix}
     \quad & \text{ projects onto $\cZ_1^{(+)}$ along $\cZ_2^{(-)}$},
    \\
    \Lambda_{12}^{(-)}     & =
    \begin{bmatrix}
      0 & -c   \\
      0 & \one
    \end{bmatrix}
    \quad & \text{ projects onto $\cZ_2^{(-)}$ along $\cZ_1^{(+)}$},
    \\
    \Lambda_{21}^{(+)}     & =
    \begin{bmatrix}
      \one & 0 \\
      -d  & 0
    \end{bmatrix}
    \quad & \text{ projects onto $\cZ_2^{(+)}$ along $\cZ_1^{(-)}$},
    \\
    \Lambda_{21}^{(-)}     & =
    \begin{bmatrix}
      0   & 0    \\
      d  & \one
    \end{bmatrix}
    \quad & \text{ projects onto $\cZ_1^{(-)}$ along $\cZ_2^{(+)}$}.
  \end{alignat*}
Besides,
                                                             $\one-dc$
                                                             and
                                                             $\one-cd$
                                                             are
                                                             invertible,
and we have the following formulas:
  \begin{subequations}
    \begin{align}\label{chazar.}
    \Upsilon &  =\frac1{4}(\one+K)(\one+K^{-1})=
    \begin{bmatrix}
     ( \one-cd)^{-1}  & 0         \\
      0         & (\one-d c)^{-1}
    \end{bmatrix}
 ,\\
      K                     & = \mathrlap{
        \begin{bmatrix}
          (\one+cd )(\one-cd )^{-1} & -2c(\one-d c)^{-1}         \\
          -2d (\one-cd )^{-1}       & (\one+d c)(\one-d c)^{-1}
        \end{bmatrix}
        ,}
      \label{chazar2}     \\
      \Pi_1^{(+)}           & =  \begin{bmatrix}        \one & 0 \\        0    & 0      \end{bmatrix}     ,
                             \quad      \Pi_2^{(+)}            =      \begin{bmatrix}
      (\one-cd )^{-1}     & c(\one-d c)^{-1}     \\        -d (\one-cd )^{-1} & -d c(\one-d c)^{-1}      \end{bmatrix},
      \label{chazar22}      \\      \Pi_1^{(-)}           &
      =      \begin{bmatrix}        0 & 0    \\        0 &
      \one      \end{bmatrix}      ,
      \quad      \Pi_2^{(-)}            =      \begin{bmatrix}
      -cd (\one-cd )^{-1} & -c(\one-d c)^{-1} \\        d (\one-cd
      )^{-1}   & (\one-d c)^{-1}      \end{bmatrix}
      ,      \label{chazar23}      \\      S_1                   &
      =      \begin{bmatrix}        \one & 0     \\        0    &
      -\one      \end{bmatrix}      ,
      \quad      S_2                  =       \begin{bmatrix}
      (\one+cd )(\one-cd )^{-1} & 2c(\one-d c)^{-1}           \\
      -2d (\one-cd )^{-1}       & -(\one+d c)(\one-d
      c)^{-1}      \end{bmatrix}      .      \label{chazar25}
    \end{align}
  \end{subequations}
\end{prop}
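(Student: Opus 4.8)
The plan is to reduce everything to the single Cayley-type transform $M := (\one-K)(\one+K)^{-1}$ and to exploit that $M$ is block-off-diagonal in the splitting $\cW = \cZ_1^{(+)}\oplus\cZ_1^{(-)}$. First I would record the identity behind the equivalence of (i) and (ii): expanding $\Upsilon = \tfrac14(S_1+S_2)^2$ and using $S_1^2=S_2^2=\one$ together with $S_2S_1 = K$ and $S_1S_2 = K^{-1}$ gives $\Upsilon = \tfrac14(2\one + K + K^{-1}) = \tfrac14(\one+K)(\one+K^{-1}) = \tfrac14(\one+K)K^{-1}(\one+K)$. Since the preceding proposition guarantees that $K$, hence $K^{-1}$, is bijective, the middle factor is a bijection; injectivity of $\Upsilon$ then forces injectivity of the rightmost $\one+K$ and surjectivity of $\Upsilon$ forces surjectivity of the leftmost $\one+K$, with the converse immediate. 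This proves (i)$\Leftrightarrow$(ii), and (i) is already equivalent to Condition (ii) of Proposition \ref{prop:pair_proj}.

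For the matrix formulas I would work throughout in the decomposition $\cW = \cZ_1^{(+)}\oplus\cZ_1^{(-)}$, in which $S_1 = \begin{bmatrix}\one & 0\\0&-\one\end{bmatrix}$ and the $\Pi_1^{(\pm)}$ are the diagonal idempotents by definition. The conceptual crux is to show that $M$ is purely off-diagonal. Because $M$ is a function of $K$ and the preceding proposition gives $S_1 K S_1 = K^{-1}$, I would compute $S_1 M S_1 = (\one-K^{-1})(\one+K^{-1})^{-1} = -K^{-1}MK = -M$, using $\one\pm K^{-1} = K^{-1}(K\pm\one)$ and that $M$ commutes with $K$. Hence $S_1 M = -M S_1$, which forces the diagonal blocks of $M$ to vanish, so $M = \begin{bmatrix}0 & c\\ d & 0\end{bmatrix}$ with $c,d$ exactly the operators defined in \eqref{kaku1}.

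Next I would invert the Cayley transform. A direct computation gives $\one + M = 2(\one+K)^{-1}$ and $\one - M = 2K(\one+K)^{-1}$, both bijective under (ii); in particular $\one + M = \begin{bmatrix}\one & c\\ d & \one\end{bmatrix}$ is bijective, and its two Schur complements show that $\one-cd$ and $\one-dc$ are bijective. Block-inverting $\one+M$ and simplifying with the push-through identities $(\one-cd)c = c(\one-dc)$, $(\one-dc)d = d(\one-cd)$, and $\one + d(\one-cd)^{-1}c = (\one-dc)^{-1}$, yields $(\one+M)^{-1}$, whence $K = (\one+M)^{-1}(\one-M)$ and $\Upsilon = (\one+M)^{-1}(\one-M)^{-1} = (\one-M^2)^{-1} = \begin{bmatrix}(\one-cd)^{-1} & 0\\ 0 & (\one-dc)^{-1}\end{bmatrix}$ as in \eqref{chazar.}. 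The block form of $K$ in \eqref{chazar2} follows, and then $S_2 = K S_1$ and $\Pi_2^{(\pm)} = \tfrac12(\one\pm S_2)$ produce \eqref{chazar25}, \eqref{chazar22} and \eqref{chazar23}.

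Finally, for the four $\Lambda$-projections I would feed the block forms of $\Upsilon^{-1} = \begin{bmatrix}\one-cd & 0\\ 0 & \one-dc\end{bmatrix}$, $\Pi_1^{(\pm)}$ and $\Pi_2^{(\pm)}$ into Kato's formulas from Proposition \ref{prop:compl-proj}, e.g. $\Lambda_{12}^{(+)} = \Pi_1^{(+)}\Upsilon^{-1}\Pi_2^{(+)}$, and collapse each product with the same push-through identities; this returns $\begin{bmatrix}\one & c\\0&0\end{bmatrix}$ together with its three companions, and summing the pairs recovers $\Lambda_{12}^{(+)}+\Lambda_{12}^{(-)} = \Lambda_{21}^{(+)}+\Lambda_{21}^{(-)} = \one$. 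I expect the main obstacle to be bookkeeping rather than depth: establishing the anticommutation $S_1 M S_1 = -M$ cleanly is the one genuinely structural step, after which the whole computation hinges on securing the invertibility of $\one-cd$ and $\one-dc$ \emph{before} any inverse is written, and on carrying the push-through identities consistently so that every block collapses to the claimed closed form.
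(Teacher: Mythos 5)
Your proposal is correct and follows essentially the same route as the paper: both establish (i)$\Leftrightarrow$(ii) from $\Upsilon=\tfrac14(\one+K)(\one+K^{-1})$, deduce that $\tfrac{\one-K}{\one+K}$ is off-diagonal from $S_1KS_1=K^{-1}$, and then obtain all the block formulas by inverting $\begin{bmatrix}\one & c\\ d & \one\end{bmatrix}$ and feeding the result into $S_2=KS_1$ and $\Lambda_{12}^{(+)}=\Pi_1^{(+)}\Upsilon^{-1}\Pi_2^{(+)}$. The only cosmetic difference is that you get invertibility of $\one-cd$ and $\one-dc$ via Schur complements of $\one+M$, whereas the paper reads it off directly from $\Upsilon^{-1}=\one-M^2$ being block diagonal.
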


\begin{proof}
  We have
  \beq
  \Upsilon=\frac14(S_1+S_2)^2=\frac14(\one+K)(\one+K^{-1}).\eeq
  But $(\one+K^{-1})=K^{-1}(\one+K)$. Hence $\one+K$ is invertible iff
  $\one+K^{-1}$ is. Therefore, (i)$\Leftrightarrow$(ii).

  For the remainder of the proof we assume that $\one+K$ is invertible.
We have                                                               \begin{align} \label{chazar}
                                                                        S_1\frac{\one-K}{\one+K} S_1
                                                                        & = - \frac{\one-K}{\one+K}.
                                                                      \end{align} Therefore
                                                                      \beq\Pi_1^{(+)}\frac{\one-K}{\one+K}\Pi_1^{(+)}=\Pi_1^{(-)}\frac{\one-K}{\one+K}\Pi_1^{(-)}=0.\eeq
                                          Hence,
                                          \beq                            \label{chazar1}
      \frac{\one-K}{\one+K}  =
      \begin{bmatrix}
        0   & c \\
        d  & 0
      \end{bmatrix}.\eeq
      This implies
      \beq\frac{1}{\one+K}=\frac12\begin{bmatrix}\one&c\\d&\one\end{bmatrix},\quad
      \frac{1}{\one+K^{-1}}=\frac12\begin{bmatrix}\one&-c\\-d&\one\end{bmatrix}.
\label{chazara}\eeq
Multiplying the two expressions of \eqref{chazara}  yields
\beq\Upsilon^{-1}=\begin{bmatrix}\one-cd&0\\0&\one-dc\end{bmatrix}.\eeq
Hence we proved both identities of \eqref{chazar.}, as well as
invertibility of $\one-cd$ and $\one-dc$.

We check that
  \begin{align}
    \begin{bmatrix}
      \one & c \\ d  & \one
    \end{bmatrix}
    ^{-1}                          =
       \begin{bmatrix}
      (\one-cd )^{-1} & -c(\one-d c)^{-1} \\ -d (\one-cd )^{-1} & (\one-d c)^{-1}
    \end{bmatrix}
       .
  \end{align}
  Now
  \beq K=2     \begin{bmatrix}
      \one & c \\ d  & \one
    \end{bmatrix}
    ^{-1}
    -\begin{bmatrix}\one&0\\0&\one\end{bmatrix}\eeq
    yields \eqref{chazar2}.

    The formulas for $\Pi_1^{(\pm)}$ and $S_1$ are obvious. We obtain
    $S_2$ from $S_2=KS_1$. From $S_2$ we get $\Pi_2^{(\pm)}$.

Now $\Lambda_{12}^{(+)}=\Pi_1^{(+)}\Upsilon^{-1}\Pi_2^{(+)}$ yields
\eqref{chazar22}, etc.
\end{proof}

The operators $c$, $d$ are sometimes called {\em angular operators}.

\subsection{Pair of  self-adjoint involutions in a Hilbert space}

Suppose now that $\cW$ is a Hilbert space and $S_i$, $i=1,2$, is a
pair of self-adjoint involutions. Obviously, the corresponding
projections $\Pi_i^{(+)}$, $\Pi_i^{(-)}$ are then orthogonal.

We will use the orthogonal decomposition $\cW=\cZ_1^{(+)}\oplus
\cZ_1^{(-)}$. In this decomposition we can write
\begin{align}
  \Pi_2^{(+)}=\begin{bmatrix}A&B\\B^*&C\end{bmatrix},
                                     &\qquad\text{where}\quad 0\leq
                                       A\leq\one,\quad 0\leq
                                       C\leq\one.\label{rott}\end{align}
                                     Using
                                     $(\Pi_2^{(+)})^2=\Pi_2^{(+)}$ we obtain
\begin{align}  \big(A-\tfrac12\big)^2=\tfrac14-BB^*,&\qquad   \big(C-\tfrac12\big)^2=\tfrac14-B^*B.\label{roott}
\end{align}

For an operator $K$, $\sigma(K)$ will denote its spectrum. If $K$ is  self-adjoint  we will  write
\beq\inf K=\inf\sigma(K),\quad\sup K=\sup\sigma(K).\eeq
It follows from \eqref{roott} that $\frac14\geq\sup BB^*=\sup B^*B=\|B\|^2$, and hence
\beq0\leq\inf\big(\tfrac14-BB^*\big)=\inf\big(\tfrac14-B^*B\big).\eeq

The following proposition describes the situation where the
angle between the projections $\Pi_1^{(+)}$ and $\Pi_2^{(+)}$ is not more
than $\frac\pi4$:
\begin{prop}
The following conditions are equivalent:
\begin{enumerate}\item $A\geq\frac12$ and $C\leq\frac12$.
  \item $A=\frac12+\sqrt{\frac14-BB^*}$ and $C=\frac12-\sqrt{\frac14-B^*B}$.
  \end{enumerate}\label{qroott}
  \end{prop}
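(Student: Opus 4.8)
The plan is to reduce both implications to a single fact: a positive self-adjoint operator has a unique positive square root. The starting point is the pair of identities recorded immediately before the proposition,
\begin{align*}
\big(A-\tfrac12\big)^2 = \tfrac14 - BB^*, \qquad \big(C-\tfrac12\big)^2 = \tfrac14 - B^*B,
\end{align*}
together with the positivity $\tfrac14 - BB^* \geq 0$ and $\tfrac14 - B^*B \geq 0$ already deduced there. I would first note that $A-\tfrac12$ and $C-\tfrac12$ are self-adjoint (being differences of self-adjoint operators) and that the bounds $0\leq A,C\leq\one$ from \eqref{rott} give $-\tfrac12\leq A-\tfrac12,\,C-\tfrac12\leq\tfrac12$, so all square-root manipulations below take place among bounded self-adjoint operators.

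Since the conditions on $A$ and on $C$ decouple, I would handle them separately. For $A$, observe that $A\geq\tfrac12$ is precisely $A-\tfrac12\geq0$; combined with $(A-\tfrac12)^2=\tfrac14-BB^*$, this exhibits $A-\tfrac12$ as a positive operator whose square is $\tfrac14-BB^*$, so by uniqueness of the positive square root it must equal $\sqrt{\tfrac14-BB^*}$, giving $A=\tfrac12+\sqrt{\tfrac14-BB^*}$. The converse is immediate, since that formula forces $A-\tfrac12=\sqrt{\tfrac14-BB^*}\geq0$, hence $A\geq\tfrac12$.

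For $C$ the argument is symmetric but uses $\tfrac12-C$ rather than $C-\tfrac12$: the condition $C\leq\tfrac12$ reads $\tfrac12-C\geq0$, and because $(\tfrac12-C)^2=(C-\tfrac12)^2=\tfrac14-B^*B$, uniqueness of the positive square root yields $\tfrac12-C=\sqrt{\tfrac14-B^*B}$, i.e. $C=\tfrac12-\sqrt{\tfrac14-B^*B}$, with the converse again immediate. Conjoining the two equivalences completes the proof of (1)$\Leftrightarrow$(2).

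I do not expect a genuine obstacle here: the whole argument is a clean application of square-root uniqueness, and the only point needing a moment's care is the bookkeeping that ensures $A-\tfrac12$ and $\tfrac12-C$ are self-adjoint and that the operators $\tfrac14-BB^*$, $\tfrac14-B^*B$ are genuinely positive---both of which are guaranteed by \eqref{rott} and the computation preceding the proposition.
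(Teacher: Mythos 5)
Your argument is correct and is essentially the paper's own proof: the paper likewise deduces (1)$\Rightarrow$(2) from the identities \eqref{roott} by taking the positive square root, and notes the converse is obvious. You have simply spelled out the square-root-uniqueness step more explicitly.
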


  \proof 1.$\Leftarrow $ 2. is obvious.

  1.$\Rightarrow $ 2. follows from
\eqref{roott}, where by 1.  we need to take the positive
square root. \qed

The following consequence of Prop.
\ref{qroott} will be useful in the theory of Krein spaces:
\begin{lem}
Let $P$ be an orthogonal
projection and $S$  a self-adjoint involution. Let $1\geq\alpha>0$ and
\begin{align}\label{qe1}
  PSP&\geq\alpha P,\\\label{qe2}
  (\one-P)S(\one-P)&\leq0.\end{align}
Then
\beq  (\one-P)S(\one-P)\leq-\alpha(\one-P).\label{qe3}\eeq
and $T:=S(1-P)+PS$ is invertible with \beq\|T^{-1}\|\leq\frac{1}
  {1-\sqrt{1-\alpha^2}}.\label{bound}\eeq
\label{lemma}\end{lem}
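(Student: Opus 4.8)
The plan is to work in the orthogonal decomposition $\cW=\Ran P\oplus\Ran(\one-P)$ and write $S$ as a self-adjoint block matrix
\[
S=\begin{bmatrix} a & b \\ b^{*} & d \end{bmatrix},\qquad
a:=PSP,\quad b:=PS(\one-P),\quad d:=(\one-P)S(\one-P),
\]
with $a^{*}=a$, $d^{*}=d$ (here $b$ is read as a map $\Ran(\one-P)\to\Ran P$). The hypotheses \eqref{qe1}, \eqref{qe2} become $a\geq\alpha$ on $\Ran P$ and $d\leq 0$ on $\Ran(\one-P)$. Expanding the involution identity $S^{2}=\one$ block-wise yields the three relations
\[
a^{2}+bb^{*}=\one,\qquad b^{*}b+d^{2}=\one,\qquad ab+bd=0 .
\]
First I would record the norm of the off-diagonal block: from $a\geq\alpha\geq0$ we get $a^{2}\geq\alpha^{2}$, so $bb^{*}=\one-a^{2}\leq(1-\alpha^{2})\one$ and hence $\|b\|^{2}=\|bb^{*}\|\leq1-\alpha^{2}$. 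Writing $\beta:=\sqrt{1-\alpha^{2}}$ (and noting $\beta<1$ precisely because $\alpha>0$), this reads $\|b\|\leq\beta$.

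For the first assertion \eqref{qe3} I would combine $d^{2}=\one-b^{*}b\geq(1-\beta^{2})\one=\alpha^{2}\one$ with $d\leq0$: by the spectral theorem a negative self-adjoint operator whose square is bounded below by $\alpha^{2}$ has spectrum contained in $(-\infty,-\alpha]$, so $d\leq-\alpha$, which is exactly $(\one-P)S(\one-P)\leq-\alpha(\one-P)$.

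For the bound on $T^{-1}$ the naive route — estimating $\|Tw\|$ from below via the triangle inequality applied to $T=\begin{bmatrix} a & 2b \\ 0 & d\end{bmatrix}$ — is too lossy to give the stated constant, so instead I would reduce to a sum of two orthogonal projections. Using $S^{2}=\one$ one computes $TS=PS^{2}+S(\one-P)S=P+Q$ with $Q:=S(\one-P)S$, which is an orthogonal projection. Since $S$ is unitary ($\|S\|=1$), writing $T=(P+Q)S$ gives $\|T^{-1}\|=\|(P+Q)^{-1}\|$, so it suffices to bound $P+Q$ from below. The key computation is that $M:=P+Q-\one=\begin{bmatrix} bb^{*} & bd \\ db^{*} & -b^{*}b\end{bmatrix}$ squares to a block-diagonal operator,
\[
M^{2}=\begin{bmatrix} bb^{*} & 0 \\ 0 & b^{*}b\end{bmatrix},
\]
where the diagonal entries simplify via $a^{2}+bb^{*}=b^{*}b+d^{2}=\one$ and the off-diagonal entries vanish because $b^{*}b=\one-d^{2}$ commutes with $d$. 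Hence $\|M\|=\|M^{2}\|^{1/2}=\|b\|\leq\beta$, so $P+Q=\one+M\geq(1-\beta)\one$; as $\beta<1$ this makes $P+Q$ invertible with $\|(P+Q)^{-1}\|\leq(1-\beta)^{-1}=\bigl(1-\sqrt{1-\alpha^{2}}\bigr)^{-1}$, giving \eqref{bound}.

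The hard part is extracting the \emph{sharp} constant rather than a cruder bound; the decisive moves are the identity $T=(P+Q)S$, which converts $T$ into a shifted sum of two orthogonal projections, and the observation that $(P+Q-\one)^{2}$ diagonalizes. Everything else — the three block relations and the spectral-theorem step for part one — is routine.
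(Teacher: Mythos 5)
Your proof is correct, and while it starts from the same orthogonal block decomposition of $S$ with respect to $P$ that the paper uses, the argument for the norm bound on $T^{-1}$ is genuinely different. The first part is essentially the paper's: the paper writes $S=\bigl[\begin{smallmatrix}2A-\one&2B\\2B^*&2C-\one\end{smallmatrix}\bigr]$ and uses the projection identities to force $2A-\one=\sqrt{\one-4BB^*}$ and $2C-\one=-\sqrt{\one-4B^*B}$, then passes from $\sqrt{\one-4BB^*}\geq\alpha$ to $4B^*B\leq(1-\alpha^2)\one$; your derivation of $d\leq-\alpha$ from $d^2=\one-b^*b\geq\alpha^2\one$ together with $d\leq0$ is the same computation in the notation $b=2B$, $d=2C-\one$. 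For the bound \eqref{bound} the routes diverge: the paper computes $TT^*=\one+3P-SPS-SPSP-PSPS$ explicitly as a $2\times2$ block matrix and exhibits it as a block-diagonal operator $\mathrm{diag}\bigl((\one-2\sqrt{BB^*})^2,(\one-2\sqrt{B^*B})^2\bigr)$ plus a manifestly positive term $WW^*$, then repeats the argument for $T^*T$; you instead factor $T=(P+Q)S$ with $Q=S(\one-P)S$ an orthogonal projection, so that $\|T^{-1}\|=\|(P+Q)^{-1}\|$, and bound $P+Q=\one+M$ from below via the identity $M^2=\mathrm{diag}(bb^*,b^*b)$, which gives $\|M\|=\|b\|\leq\sqrt{1-\alpha^2}$. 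Your reduction to a sum of two orthogonal projections is cleaner: it needs only one lower bound rather than separate estimates on $TT^*$ and $T^*T$, it avoids the somewhat ad hoc construction of $W$, and it makes transparent why the sharp constant $1-\sqrt{1-\alpha^2}$ appears (it is exactly $1-\|P+Q-\one\|$). The paper's version, on the other hand, stays entirely inside elementary matrix manipulation and does not require noticing the factorization $T=(P+Q)S$. All the individual steps you use check out, including the commutation of $b^*b=\one-d^2$ with $d$ that kills the off-diagonal blocks of $M^2$, and the invertibility of $T$ follows from $P+Q\geq(1-\sqrt{1-\alpha^2})\one>0$ together with the unitarity of $S$.
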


\begin{proof} We set $S_1:=2P-\one$, so that $P=\Pi_1^{(+)}$, and
$S_2:=S$. Thus we are in the setting of this subsection. We
write $\Pi_2^{(+)}=\frac{S_2+\one}{2}$  as in \eqref{rott}, so that
\begin{align}
S=\begin{bmatrix}2A-\one& 2 B \\ 2B^*&2C-\one\end{bmatrix}.
\end{align}
Thus,
\begin{align}\label{qe1a}
  PSP&\geq0\ \Leftrightarrow\ A\geq\frac12,\\
  (\one-P)S(\one-P)&\leq0 \Leftrightarrow\ C\leq\frac12.\label{qe2a}\end{align}
Hence  \eqref{qe1} and \eqref{qe2} imply
the conditions of Proposition \ref{qroott}, which allows us to rewrite
\eqref{qe1a} as
\begin{align}
S=\begin{bmatrix}\sqrt{\one-4BB^*}& 2B\\ 2B^*&-\sqrt{\one-4B^*B}\end{bmatrix}.
\end{align}
By \eqref{qe1}, $\sqrt{\one-4BB^*}\geq\alpha$. So
$1-\alpha^2\geq 4BB^*$. This implies $1-\alpha^2\geq 4B^*B$, and hence
$-\sqrt{\one-4B^*B}\leq-\alpha$, which proves \eqref{qe3}.

 Now
\begin{align}
TT^*=\one+3P-SPS-SPSP-PSPS.
\end{align}
Written as a $2\times2$ matrix, it is
\begin{align}
  TT^*&=\begin{bmatrix}\one+12BB^*&-4\sqrt{\one-4BB^*}B\\
    -4B^*\sqrt{\one-4BB^*}&\one-4B^*B
  \end{bmatrix} \notag \\
  =&   \begin{bmatrix} (\one-2\sqrt{BB^*})^2&0\\0&(\one-2\sqrt{B^*B})^2
  \end{bmatrix}+WW^*,
\end{align}
where
        \begin{align}
          W:=&\;\begin{bmatrix}2B(B^*B)^{-\frac14}\sqrt{\one+2\sqrt{B^*B}}\\
            - 2(B^*B)^{\frac14}\sqrt{\one-2\sqrt{B^*B}}\end{bmatrix}, \notag\\
          \text{so that }\quad W^* =&\;
  \begin{bmatrix}2\sqrt{\one+2\sqrt{B^*B}}(B^*B)^{-\frac14}B^*&
       - 2\sqrt{\one-2\sqrt{B^*B}}(B^*B)^{\frac14}\end{bmatrix}.
\end{align}
Now $WW^*\geq0$ and
\beq
\inf (\one-2\sqrt{BB^*})^2 =\inf (\one-2\sqrt{B^*B})^2\geq
(1-\sqrt{1-\alpha^2})^2.\eeq
Thus $TT^*\geq (1-\sqrt{1-\alpha^2})^2$.
An analogous argument (where $P$ is replaced with $\one-P$)
shows $T^*T\geq (1-\sqrt{1-\alpha^2})^2$. This proves
\eqref{bound}.
\end{proof}

\subsection{Hilbertizable spaces}
\label{sub:Hilbertizable spaces}

\begin{definition}
  \label{def:Hilbertizable}
  Let $\cW$ be a complex\footnote{Analogous definitions and results are
  valid for \emph{real} Hilbertizable spaces.} topological vector space.
  We say that it is \emph{Hilbertizable} if it has the topology of a
  Hilbert space for some scalar product $\cinner{\cdot}{\cdot}_\bullet$
  on $\cW$. We will then say that $\cinner{\cdot}{\cdot}_\bullet$ is
  \emph{compatible with (the Hilbertizable structure of)} $\cW$.
  The Hilbert space $\bigl( \cW, \cinner{\cdot}{\cdot}_\bullet \bigr)$
  will be occasionally denoted $\cW_\bullet$. We denote the
  corresponding norm by $\norm{\,\cdot\,}_\bullet$, the orthogonal
  complement of $\cZ\subset\cW$ by $\cZ^{\perp\bullet}$ and the Hermitian
  adjoint of an operator $A$ by $A^{*\bullet}$.
\end{definition}

In what follows $\cW$ is a Hilbertizable space.
Let $\cinner{\cdot}{\cdot}_1$, $\cinner{\cdot}{\cdot}_2$ be two scalar products compatible with $\cW$.
Then there exist constants $0 < c \leq C$ such that
\begin{equation*}
  c \cinner{w}{w}_1 \leq \cinner{w}{w}_2 \leq C \cinner{w}{w}_1.
\end{equation*}
Let $R$ be a linear operator on $\cW$.
We say that it is \emph{bounded} if for some (hence for all) compatible scalar products $\cinner{\cdot}{\cdot}_\bullet$ there exists a constant $C_\bullet$ such that
\begin{equation*}
  \norm{Rw}_\bullet \leq C_\bullet \norm{w}_\bullet.
\end{equation*}
Let $Q$ be a sesquilinear form on $\cW$.
We say that it is \emph{bounded} if for some (hence for all) compatible scalar products $\cinner{\cdot}{\cdot}_\bullet$ there exists $C_\bullet$ such that
\begin{equation*}
  \abs{\cinner{v}{Qw}} \leq C_\bullet \norm{v}_\bullet \norm{w}_\bullet, \quad v,w \in \cW.
\end{equation*}


\subsection{Pseudounitary spaces}

Let $(\cW,Q)$ be a Hilbertizable space equipped with a bounded
Hermitian form,
\beq(v|Qw)= \overline{(w|Qv)},\quad v,w\in\cW.\eeq

\begin{definition}
  Let $\cZ\subset \cW$.
  We define its \emph{$Q$-orthogonal companion} as follows: \[\cZ^{\perp Q} := \{w\in\cW\mid\cinner{w}{Qv}=0,\; v\in\cZ\}.
  \]
\end{definition}
Clearly, $\cZ^{\perp Q}$   is a closed subspace of $\cW$.

  \begin{definition}
    Let $w\in\cW$. We say that $w$ is \emph{positive, negative, resp. neutral} if
    \beq(w|Qw)\geq0,\quad (w|Qw)\leq0,\quad\text{resp.}
    \quad (w|Qw)=0.\eeq
        We say that a subspace $\cZ\subset\cW$ is
   \emph{positive, negative, resp. neutral} if all its elements are positive, negative, resp. neutral elements.
\end{definition}
  \begin{definition}
We say that $(\cW,Q)$ is a {\em pseudounitary space} if $\cW^{\perp Q}=\{0\}$.\end{definition}

\subsection{Krein spaces}
\label{sub:Admissible involutions and Krein spaces}

Let $(\cW,Q)$ be a Hilbertizable space equipped with a bounded Hermitian form.
\begin{definition}
  \label{admissi}
  A bounded involution $S_\bullet$ on $\cW$ will be called
  \emph{admissible} if it preserves $Q$, that is,
  \beq
    \cinner{S_\bullet v}{Q S_\bullet w}    = \cinner{ v}{Q w}, \eeq
  and
  \begin{equation}
    \label{eq:bull_space}
    {\cinner{v}{w}}_\bullet := \cinner{v}{Q S_\bullet w} = \cinner{S_\bullet v}{Q w}
  \end{equation}
  is a scalar product compatible with the Hilbertizable structure of $\cW$.
\end{definition}

\begin{definition}
  A  space $(\cW,Q)$ is called a \emph{Krein space} if it possesses an admissible involution.
\end{definition}

 Clearly, a Krein space is a pseudounitary space.

\begin{remark} In the literature sometimes instead of the term
  ``admissible involution'' one finds ``fundamental symmetry''.
\end{remark}

For any admissible involution $S_\bullet$, we define the corresponding \emph{particle projection} $\Pi_\bullet^{(+)}$ and \emph{particle space} $\cZ_\bullet^{(+)}$, as well as the \emph{antiparticle projection} $\Pi_\bullet^{(-)}$ and \emph{antiparticle space} $\cZ_\bullet^{(-)}$, as in~\eqref{projo-}.
The decomposition $\cW\simeq\cZ_\bullet^{(+)}\oplus\cZ_\bullet^{(-)}$
is often called a {
\em fundamental decomposition}. Note the following relations:
\begin{align*}
  {\cinner{v}{w}}_\bullet & = {\cinner{\Pi_\bullet^{(+)} v}{\Pi_\bullet^{(+)} w}}_\bullet + {\cinner{\Pi_\bullet^{(-)} v}{\Pi_\bullet^{(-)} w}}_\bullet, \\
  \cinner{v}{Q w} & = {\cinner{\Pi_\bullet^{(+)} v}{\Pi_\bullet^{(+)} w}}_\bullet - {\cinner{\Pi_\bullet^{(-)} v}{\Pi_\bullet^{(-)} w}}_\bullet.
\end{align*}

\begin{definition}
  Let $A$ be a bounded operator on $\cW$. Then there exists a unique
  operator $A^{*Q}$ called the {\em $Q$-adjoint of } $A$ such that
  \begin{align}
   \cinner{A^{*Q}v}{Qw}=\cinner{v}{QAw},\quad v,w\in\cW.
  \end{align}
\end{definition}

Let $\cZ\subset \cW$ and let $A$ be an operator on $\cW$.
We have the identities:
\begin{align}
  \cZ^{\perp Q}&=S_\bullet \cZ^{\perp\bullet},\\
  A^{* Q}&=S_\bullet A^{*\bullet}S_\bullet.
\end{align}
With
the help of these identities
it is easy to show various properties of $\perp Q$ and $*\bullet$:

\begin{prop} \label{prop:Qorth_props}
\begin{enumerate}
 \item  If $\cZ$ is a closed subspace, then
          $(\cZ^{\perp Q})^{\perp Q}=\cZ$.
\item If $\cZ_1,\cZ_2$ are complementary subspaces
            in $\cW$, then so are
          $\cZ_1^{\perp Q},\cZ_2^{\perp Q}$.
        \item
  Suppose that $
            (\Pi^{(+)},\Pi^{(-)})$ is a pair of complementary projections.
          Then $(\Pi^{(+)*Q}, \Pi^{(-)*Q})$ is also a pair of complementary projections and
          \begin{align}
            \cR(\Pi^{(\pm)*Q})= \cN(\Pi^{(\mp)*Q})= \cR(\Pi^{(\mp)})^{\perp Q}= \cN(\Pi^{(\pm)})^{\perp Q}.
          \end{align}
        \end{enumerate}\end{prop}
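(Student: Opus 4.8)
The final statement is Proposition \ref{prop:Qorth_props}, which asserts three properties of the $Q$-orthogonal companion and the $Q$-adjoint. Let me sketch how to prove these.

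The plan is to exploit the two displayed identities stated immediately before the proposition, namely $\cZ^{\perp Q}=S_\bullet\cZ^{\perp\bullet}$ and $A^{*Q}=S_\bullet A^{*\bullet}S_\bullet$, which reduce everything about the indefinite form $Q$ to the corresponding (well-understood) Hilbert-space notions of orthogonal complement and Hermitian adjoint, conjugated by the admissible involution $S_\bullet$. Since $S_\bullet$ is a bounded \emph{bijective} involution ($S_\bullet^2=\one$), it maps closed subspaces to closed subspaces and complementary pairs to complementary pairs, so it transports the standard Hilbert-space facts over to their $Q$-versions.

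\medskip

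\noindent\textbf{Part (i).} First I would use $\cZ^{\perp Q}=S_\bullet\cZ^{\perp\bullet}$ twice. Applying it to $\cZ^{\perp Q}$ in place of $\cZ$ gives
\begin{align}
(\cZ^{\perp Q})^{\perp Q}
= S_\bullet\big(\cZ^{\perp Q}\big)^{\perp\bullet}
= S_\bullet\big(S_\bullet\cZ^{\perp\bullet}\big)^{\perp\bullet}.
\end{align}
Since $S_\bullet$ is self-adjoint for $\cinner{\cdot}{\cdot}_\bullet$ (this follows from \eqref{eq:bull_space}) and involutive, one has $(S_\bullet\cY)^{\perp\bullet}=S_\bullet\cY^{\perp\bullet}$ for any subspace $\cY$; hence the right-hand side equals $S_\bullet S_\bullet(\cZ^{\perp\bullet})^{\perp\bullet}=(\cZ^{\perp\bullet})^{\perp\bullet}$. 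For $\cZ$ closed this is just $\cZ$ by the standard Hilbert-space biorthogonal-complement theorem, giving the claim.

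\medskip

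\noindent\textbf{Part (ii).} If $\cZ_1,\cZ_2$ are complementary in the sense of Definition \ref{def:complementary}, I would first record the trivial intersection and sum relations for the Hilbert-space orthogonal complements $\cZ_i^{\perp\bullet}$, which hold because complementarity is a purely linear-algebraic/topological condition invariant under passing to orthogonal complements in a Hilbertizable space. Then I apply the bijection $S_\bullet$: since $\cZ_i^{\perp Q}=S_\bullet\cZ_i^{\perp\bullet}$ and $S_\bullet$ is a linear isomorphism of $\cW$, it preserves both $\cap$ and $+$, so $\cZ_1^{\perp Q}\cap\cZ_2^{\perp Q}=\{0\}$ and $\cZ_1^{\perp Q}+\cZ_2^{\perp Q}=\cW$ follow from the corresponding statements for $\cZ_1^{\perp\bullet},\cZ_2^{\perp\bullet}$.

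\medskip

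\noindent\textbf{Part (iii).} Here I would combine the adjoint identity with Definition \ref{def:complementary_proj}. From $A^{*Q}=S_\bullet A^{*\bullet}S_\bullet$ and the fact that $\bullet$-adjunction preserves the projection relations, one checks directly that $(\Pi^{(\pm)})^{*Q}$ satisfy $\big((\Pi^{(\pm)})^{*Q}\big)^2=(\Pi^{(\pm)})^{*Q}$ and $(\Pi^{(+)})^{*Q}+(\Pi^{(-)})^{*Q}=\one^{*Q}=\one$. The range/nullspace identities then follow by chasing the standard relations $\cR(A^{*\bullet})=\cN(A)^{\perp\bullet}$ through the conjugation by $S_\bullet$, together with $\cR(\Pi^{(+)})=\cN(\Pi^{(-)})$ and part (i).

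\medskip

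I expect no serious obstacle: the whole proposition is designed so that the two preliminary identities do all the work. The only point requiring a little care is justifying $(S_\bullet\cY)^{\perp\bullet}=S_\bullet\cY^{\perp\bullet}$ and the analogous transport statements, i.e.\ being explicit that $S_\bullet$ is a $\cinner{\cdot}{\cdot}_\bullet$-\emph{unitary} involution, which is exactly what admissibility in Definition \ref{admissi} guarantees.
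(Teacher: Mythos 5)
Your proposal is correct and follows exactly the route the paper intends: the paper gives no written proof at all, merely remarking that the proposition follows ``with the help of these identities'', i.e.\ $\cZ^{\perp Q}=S_\bullet\cZ^{\perp\bullet}$ and $A^{*Q}=S_\bullet A^{*\bullet}S_\bullet$ together with the $\bullet$-unitarity of the admissible involution $S_\bullet$, which is precisely your reduction. One small caution on part (ii): the claim that complementarity is ``invariant under passing to orthogonal complements'' is false for non-closed subspaces (a dense non-closed hyperplane and a line are complementary, yet their $\bullet$-orthogonal complements are $\{0\}$ and a closed hyperplane, which do not span $\cW$); the correct justification is that the bounded projections $\Pi^{(\pm)}$ attached to the pair have bounded $\bullet$-adjoints with $\cR(\Pi^{(+)*\bullet})=\cN(\Pi^{(+)})^{\perp\bullet}=\cZ_2^{\perp\bullet}$ and $\cN(\Pi^{(+)*\bullet})=\cR(\Pi^{(+)})^{\perp\bullet}=\cZ_1^{\perp\bullet}$ --- in effect your part (iii) --- so (ii) should be read with the tacit assumption that the subspaces are closed and the associated projections bounded.
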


\begin{definition}\label{pseudo-uni}
  Let $R$ be a bounded invertible operator on $(\cW,Q)$.
We say that $R$ is \emph{pseudo-unitary} if
  \begin{equation}\label{prepre}
    \cinner{R v}{Q R w} = \cinner{v}{Q w}.
  \end{equation}
\end{definition}

\subsection{Krein spaces with conjugation}

\begin{definition}
  \label{def:conjugation.}
  An antilinear involution $v \mapsto \varepsilon v$ on a Krein  space
  $(\cW,Q)$
  will be called a {\em conjugation} if it antipreserves $Q$, that is
  \beq(v|Qw)=- \overline{(\varepsilon v|Q\varepsilon w)}\eeq
  and there exists an admissible involution $S_\bullet$ such that
  $\varepsilon S_\bullet \varepsilon=-S_\bullet$.
\end{definition}

 Note that then
  \begin{equation*}
    \cinner{\varepsilon v}{\varepsilon w}_\bullet =
    \overline{\cinner{v}{w}}_\bullet.   \end{equation*}

\begin{definition}
We say that an operator $R$   is  \emph{real} if
  $\overline{R} := \varepsilon R \varepsilon =R$. We say that
  $R$ is \emph{anti-real} if $\overline{R}=-R$, that is, if $\ii R$ is real.
\end{definition}

Krein spaces with conjugations are especially important:
  Suppose that $(\cW,Q)$ is a Krein space with  conjugation.
  Clearly, if $S_\bullet$ is an admissible anti-real involution, then
  \begin{equation*}
     \overline{\Pi_\bullet^{(+)}} = \Pi_\bullet^{(-)}, \quad  \overline{\cZ_\bullet^{(+)}} = \cZ_\bullet^{(-)},
  \end{equation*}
  so that $\cW = \cZ_\bullet^{(+)} \oplus  \overline{\cZ_\bullet^{(+)}}$.


\subsection{Maximal uniformly positive/negative subspaces}

Let $(\cW,Q)$ be a Krein space.
We want to characterize definite subspaces with good properties.
Following \cite{Bognar} we make the following definition.
\begin{definition}\label{def:maxunipos}
  Let $\cZ$ be a subspace of $\cW$.
  \begin{enumerate}
  \item
    We say that it is {\em uniformly
positive/negative} if   for some  scalar product $(\cdot|\cdot)_\bullet$
  compatible with the Hilbertizable structure of $\cW$ there exists
  $\alpha_\bullet>0$ such that
  \begin{align}
  \label{eq:maxunipos...}
   v\in\cZ & \Rightarrow\cinner{v}{Qv}\geq \alpha_\bullet (v|v)_\bullet,
   \word{resp.}
   v\in\cZ  \Rightarrow
   \cinner{v}{Qv}\leq - \alpha_\bullet (v|v)_\bullet.
  \end{align}
\item We say that $\cZ$ is  {\em maximal
  uniformly positive/negative} if it is a maximal subspace with the
  property of uniform positivity/negativity.
 \end{enumerate}
\end{definition}

The following proposition, whose statement partially overlaps with
Thm.~V.5.2. and Cor. V.~7.4. in \cite{Bognar}, relates maximal
uniformly positive/negative spaces to fundamental decompositions
and admissible involutions.

\begin{prop}\label{kaku3}
  Let $\cZ_\bullet^{(+)}$ be a  subspace of $\cW$.
  Set $\cZ_\bullet^{(-)} := \cZ_\bullet^{(+)\perp Q}$.
  The following conditions are equivalent:
  \begin{enumerate}
  \item $\cZ_\bullet^{(+)}$ is maximal uniformly positive.
    \item $\cZ_\bullet^{(+)}$ is maximal uniformly positive and $\cZ_\bullet^{(-)}$ is
      maximal uniformly negative.
          \item The spaces $\cZ_\bullet^{(+)}$ and $\cZ_\bullet^{(-)}$ are complementary, and if $(\Pi_\bullet^{(+)},\Pi_\bullet^{(-)})$ is the corresponding pair of projections, then
          $    S_\bullet := \Pi_\bullet^{(+)}-\Pi_\bullet^{(-)}$
          is an admissible involution.
  \end{enumerate}
  \label{pos3}
\end{prop}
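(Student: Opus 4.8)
The plan is to fix once and for all a reference admissible involution $S_0$ (it exists because $\cW$ is a Krein space) and to work in the associated compatible Hilbert space $\cW_0=(\cW,(\cdot|\cdot)_0)$, where $(v|w)_0:=(v|QS_0w)$. Then $(v|Qw)=(v|S_0w)_0$, so on $\cW_0$ the form $Q$ is represented by the self-adjoint involution $S_0$, giving a reference fundamental decomposition $\cW=\cW_0^{(+)}\oplus\cW_0^{(-)}$ with $\cW_0^{(\pm)}=\Ran\bigl(\tfrac12(\one\pm S_0)\bigr)$. Since any two compatible scalar products are equivalent, uniform positivity/negativity may be tested against $(\cdot|\cdot)_0$ throughout. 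I would then prove the cycle (1)$\Rightarrow$(3)$\Rightarrow$(2)$\Rightarrow$(1); the implication (2)$\Rightarrow$(1) is immediate.

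For (3)$\Rightarrow$(2): if $S_\bullet$ is admissible then $S_\bullet v=v$ for $v\in\cZ_\bullet^{(+)}$, so $(v|Qv)=(v|v)_\bullet$, which dominates a multiple of $(v|v)_0$ by compatibility; hence $\cZ_\bullet^{(+)}$ is uniformly positive and, symmetrically, $\cZ_\bullet^{(-)}$ uniformly negative. Maximality follows by extraction: were there a uniformly positive $\cZ\supsetneq\cZ_\bullet^{(+)}$, pick $w\in\cZ$ with nonzero $w^-=\Pi_\bullet^{(-)}w\in\cZ_\bullet^{(-)}$; since $w^+=\Pi_\bullet^{(+)}w\in\cZ_\bullet^{(+)}\subseteq\cZ$, also $w^-=w-w^+\in\cZ$, and $(w^-|Qw^-)<0$ contradicts positivity of $\cZ$.

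The heart is (1)$\Rightarrow$(3), which I would carry out via the angular-operator (graph) representation. First, uniform positivity is preserved under $(\cdot|\cdot)_0$-closure (as $Q$ is bounded), so maximality forces $\cZ:=\cZ_\bullet^{(+)}$ to be closed. Writing $P:=\tfrac12(\one+S_0)$ and using $(v|Qv)=\|Pv\|_0^2-\|(\one-P)v\|_0^2\geq\alpha\|v\|_0^2$, I obtain the contraction estimate $\|(\one-P)v\|_0\leq\kappa\|Pv\|_0$ with $\kappa=\sqrt{(1-\alpha)/(1+\alpha)}<1$ for $v\in\cZ$. Consequently $P|_\cZ$ is injective and bounded below, its range $D\subseteq\cW_0^{(+)}$ is closed, and $\cZ$ is the graph of a contraction $k:D\to\cW_0^{(-)}$, $\|k\|\leq\kappa<1$, defined by $k(Pv)=(\one-P)v$. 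Maximality now enters decisively: if $D\subsetneq\cW_0^{(+)}$, extending $k$ by zero on $D^{\perp_0}$ yields a strictly larger graph that is still uniformly positive by the same computation, contradicting maximality; hence $D=\cW_0^{(+)}$ and $\cZ=\{x+kx:x\in\cW_0^{(+)}\}$.

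Finally I would identify the companion and conclude. A direct evaluation of the $Q$-orthogonality condition shows $\cZ_\bullet^{(-)}=\cZ^{\perp Q}=\{k^*y+y:y\in\cW_0^{(-)}\}$ with $k^*:\cW_0^{(-)}\to\cW_0^{(+)}$ the $(\cdot|\cdot)_0$-adjoint, whence $(z|Qz)=\|k^*y\|_0^2-\|y\|_0^2\leq-(1-\|k\|^2)\|y\|_0^2$ gives uniform negativity. Since $\|k^*k\|\leq\|k\|^2<1$, the two graphs are complementary (solve $(\one-k^*k)x=u^+-k^*u^-$), so the projections $\Pi_\bullet^{(\pm)}$ of this complementary pair are bounded and $S_\bullet=\Pi_\bullet^{(+)}-\Pi_\bullet^{(-)}$ is a bounded involution; $Q$-orthogonality of the summands (using $(\cZ^{\perp Q})^{\perp Q}=\cZ$ from Prop.~\ref{prop:Qorth_props}) shows $S_\bullet$ preserves $Q$, and $(v|v)_\bullet=(\Pi_\bullet^{(+)}v|Q\Pi_\bullet^{(+)}v)-(\Pi_\bullet^{(-)}v|Q\Pi_\bullet^{(-)}v)$ is, by the two uniform bounds together with boundedness of $Q$, equivalent to $\|v\|_0^2$, so $(\cdot|\cdot)_\bullet$ is compatible and $S_\bullet$ admissible. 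The main obstacle is precisely the contractivity $\|k\|<1$ together with the maximality-driven surjectivity $D=\cW_0^{(+)}$; once the full-graph representation is secured, uniform negativity of the companion, complementarity, and compatibility all follow directly.
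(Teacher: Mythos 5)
Your proof is correct, and for the essential implication (1)$\Rightarrow$(3) it takes a genuinely different route from the paper. The paper first shows that $\cZ_\bullet^{(-)}=\cZ_\bullet^{(+)\perp Q}$ is \emph{negative} (by a maximality argument: a strictly positive $v_1\in\cZ_\bullet^{(-)}$ would make $\Span(\cZ_\bullet^{(+)},v_1)$ uniformly positive), and then feeds the pair consisting of the $(\cdot|\cdot)_0$-orthogonal projection $P$ onto $\cZ_\bullet^{(+)}$ and the reference involution $S_0$ into Lemma \ref{lemma}: the hypotheses $PS_0P\geq\alpha P$ and $(\one-P)S_0(\one-P)\leq0$ yield both the uniform negativity of $\cZ_\bullet^{(-)}$ and the invertibility of $T=S_0(\one-P)+PS_0$, whose surjectivity gives $\cW=\cZ_\bullet^{(+)}+\cZ_\bullet^{(-)}$. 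You instead represent $\cZ_\bullet^{(+)}$ as the graph of a contraction $k$ over the \emph{reference} positive subspace $\cW_0^{(+)}$, use maximality to force the domain of $k$ to be all of $\cW_0^{(+)}$, compute $\cZ_\bullet^{(+)\perp Q}$ explicitly as the graph of $k^*$, and obtain complementarity from the invertibility of $\one-k^*k$. Both are angular-operator arguments at heart (the paper's Lemma \ref{lemma} is itself proven via the block representation of $S_0$ relative to $P$), but they are applied to different pairs of subspaces. Your version is more self-contained — it bypasses Lemma \ref{lemma} entirely and gets the negativity of the $Q$-orthogonal companion for free from the graph computation rather than needing the preliminary maximality step — while the paper's version produces the quantitative bound $\|T^{-1}\|\leq(1-\sqrt{1-\alpha^2})^{-1}$ as a by-product and reuses machinery developed systematically in the appendix for pairs of involutions. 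A further small plus on your side: you supply the extraction argument for maximality in (3)$\Rightarrow$(2) (decompose $w=w^++w^-$ and observe $w^-\in\cZ$ is strictly negative), which the paper merely asserts.
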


\begin{proof}[Proof of Prop.~\ref{pos3}.]
Assume 3). Then $(\cdot|\cdot)_\bullet:=(\cdot|QS_\bullet\cdot)$ is
compatible and
\beq(v|Qv)_\bullet=\pm(v|v)_\bullet,\quad v\in\cZ_\bullet^{(\pm)}.\eeq
Hence $\cZ_\bullet^{(\pm)}$ are maximal uniformly
positive/negative. This proves 3)$\Rightarrow$2).

2)$\Rightarrow$1) is obvious.

Now assume 1).
Let $S_0$ be an arbitrary admissible involution with
the corresponding scalar product $(\cdot|\cdot)_0$.
First note that $\cZ_\bullet^{(-)}$ is negative. Indeed, suppose that
$v_1\in\cZ^{(-)}_\bullet$ is strictly positive. Then for some $\alpha_1$
\begin{align}
(v_1|Qv_1)\geq\alpha_1(v_1|v_1)_0.
\end{align}
Hence $\Span(\cZ_\bullet^{(+)},v_1)$ is uniformly positive, which
contradicts the maximality
of $\cZ_\bullet^{(+)}$.

Let $P$ by the orthogonal projection (in the sense of $(\cdot|\cdot)_0$) onto
$\cZ_\bullet^{(+)}$. Then an arbitrary element of $\cZ_\bullet^{(+)}$ has the form
$Pv$ and of $\cZ_\bullet^{(-)}$ the form $S_0(\one-P)v$ for some $v\in\cW$.

By the uniform positivity of $\cZ_\bullet^{(+)}$, resp. by negativity of
$\cZ_\bullet^{(-)}$, we have
\begin{align}
  (v|PS_0Pv)_0&=(Pv|S_0Pv)_0=(Pv|QPv)\geq\alpha(Pv|Pv)_0
\end{align}
and
\begin{align}
  (v|(\one-P)S_0(\one-P)v)_0 &=  (S_0(\one-P)v|(\one-P)v)_0\notag\\
  &=(S_0(\one-P)v|QS_0(\one-P)v)\leq0 .
  \end{align}
 Lemma \ref{lemma} then implies the uniform negativity of
 $\cZ_\bullet^{(-)}$:
\begin{align}
  (v|(\one-P)S_0(\one-P)v)_0&\leq -\alpha (v|(\one-P)v)_0 \notag
  \\ &=
 -\alpha  (S_0(\one-P)v|S_0(\one-P)v)_0 .
 \end{align}

Clearly, $0\neq w\in\cZ_\bullet^{(+)}\cap \cZ_\bullet^{(-)}$ has to be
simultaneously positive and negative. Hence $\cZ_\bullet^{(+)}\cap \cZ_\bullet^{(-)}=\{0\}$.

As maximal positive/negative subspaces,
$\cZ_\bullet^{(+)}$ and $ \cZ_\bullet^{(-)}$ are automatically closed.

  Set
  \beq T:=S_0(\one-P)+PS_0.\eeq
For any $w\in\cW$,
  $PS_0w\in\cZ_\bullet^{(+)}$ and $S_0(\one-P)w\in
  \cZ_\bullet^{(-)}$. Hence the range of $T$ is contained in
  $\cZ_\bullet^{(+)}+\cZ_\bullet^{(-)}$.   By Lemma \ref{lemma}, $T$
  is invertible, hence the range of $T$ is $\cW$. Therefore,
  $\cW=\cZ_\bullet^{(+)}+\cZ_\bullet^{(-)}$.

We have proved that $\cZ_\bullet^{(+)}$ and
$\cZ_\bullet^{(-)}$ are complementary. Let $S_\bullet$ be the
corresponding involution. It is obviously bounded. Besides,
\beq
(v|v)_\bullet:= (v|QS_\bullet v)\geq\alpha(v|v)_0.\eeq
Hence $(\cdot|\cdot)_\bullet$ is compatible. This ends the proof of
1)$\Rightarrow$3).
\end{proof}

Here is another proposition about fundamental decompositions.
Note that it does not involve a reference to the topology of $\cW$,
but only to the form $Q$.

\begin{prop}
Let  $\cZ_\bullet^{(+)}$ and
$\cZ_\bullet^{(-)}$ be complementary  subspaces of a Krein space $(\cW,Q)$,
$Q$-orthogonal to one another. Assume  that
$\cZ_\bullet^{(\pm)}$ are positive resp. negative,
contain no neutral elements apart
from $0$ and are complete in the norm $\|v\|_{(\pm)}:=\sqrt{\pm(v|Qv)}$. Then $\cZ_\bullet^{(\pm)}$ is maximal
uniformly positive/negative and
  $\cZ_\bullet^{(-)} := \cZ_\bullet^{(+)\perp Q}$, so that we are
  precisely in the setting described by  Prop.
\ref{kaku3}.\end{prop}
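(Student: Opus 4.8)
The plan is to manufacture the compatible scalar product that exhibits $\cZ_\bullet^{(+)},\cZ_\bullet^{(-)}$ as a fundamental decomposition, and then to quote Prop.~\ref{kaku3}. First I would verify the orthogonality identity $\cZ_\bullet^{(-)}=\cZ_\bullet^{(+)\perp Q}$. The inclusion $\subseteq$ is exactly the assumed $Q$-orthogonality. For $\supseteq$, take $w\in\cZ_\bullet^{(+)\perp Q}$ and decompose $w=w^{(+)}+w^{(-)}$ along the complementary pair; pairing against $w^{(+)}\in\cZ_\bullet^{(+)}$ and using $Q$-orthogonality of the two subspaces gives $(w^{(+)}|Qw^{(+)})=0$, so $w^{(+)}=0$ by the absence of nonzero neutral elements, whence $w\in\cZ_\bullet^{(-)}$. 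Writing $\Pi_\bullet^{(\pm)}$ for the (purely algebraic) projections associated with the complementary decomposition, I would then define
\[
(v|w)_\bullet:=(\Pi_\bullet^{(+)}v|Q\Pi_\bullet^{(+)}w)-(\Pi_\bullet^{(-)}v|Q\Pi_\bullet^{(-)}w).
\]
Positivity of $\cZ_\bullet^{(+)}$, negativity of $\cZ_\bullet^{(-)}$, and the no-neutral-vectors hypothesis make this a genuine Hermitian positive-definite scalar product, with $\|v\|_\bullet^2=\|\Pi_\bullet^{(+)}v\|_{(+)}^2+\|\Pi_\bullet^{(-)}v\|_{(-)}^2$ by construction.

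The decisive step is to show that $(\cdot|\cdot)_\bullet$ is compatible with the Hilbertizable topology of $\cW$; everything else is then automatic. Completeness of $\cW_\bullet:=(\cW,\|\cdot\|_\bullet)$ follows immediately from the assumed completeness of $\cZ_\bullet^{(\pm)}$ in $\|\cdot\|_{(\pm)}$, so $\cW_\bullet$ is a Hilbert space on the same underlying vector space as the ambient Hilbert space $\cW_0:=(\cW,(\cdot|\cdot)_0)$ for some fixed compatible $(\cdot|\cdot)_0$. I would compare these two complete norms by the closed graph theorem applied to $\mathrm{id}\colon\cW_0\to\cW_\bullet$. If $v_n\to v$ in $\|\cdot\|_0$ and $v_n\to u$ in $\|\cdot\|_\bullet$, then for every $z\in\cW$ boundedness of $Q$ on $\cW_0$ gives $(z|Qv_n)\to(z|Qv)$, whereas splitting $z=z^{(+)}+z^{(-)}$ and using $Q$-orthogonality together with Cauchy--Schwarz in the Hilbert spaces $(\cZ_\bullet^{(\pm)},\pm(\cdot|Q\cdot))$ gives $(z|Qv_n)=(z^{(+)}|Qv_n^{(+)})+(z^{(-)}|Qv_n^{(-)})\to(z|Qu)$. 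Hence $(z|Q(v-u))=0$ for all $z$, and since a Krein space is pseudounitary ($\cW^{\perp Q}=\{0\}$) we conclude $v=u$, so the graph is closed. The closed graph theorem then yields $\|v\|_\bullet\le C\|v\|_0$, and the bounded inverse theorem (both spaces being complete) supplies the reverse bound, so the two norms are equivalent and $(\cdot|\cdot)_\bullet$ is compatible. I expect this interplay of nondegeneracy and completeness to be the only genuine obstacle; the remainder is bookkeeping.

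With compatibility in hand, the projections $\Pi_\bullet^{(\pm)}$ are $\|\cdot\|_\bullet$-orthogonal, hence bounded, so $S_\bullet:=\Pi_\bullet^{(+)}-\Pi_\bullet^{(-)}$ is a bounded involution. A one-line computation using $Q$-orthogonality shows $(S_\bullet v|QS_\bullet w)=(v|Qw)$ and $(v|QS_\bullet w)=(v|w)_\bullet$, so $S_\bullet$ preserves $Q$ and induces the compatible scalar product $(\cdot|\cdot)_\bullet$; thus $S_\bullet$ is an admissible involution. This places us in condition~3 of Prop.~\ref{kaku3}, whose equivalence with conditions~1 and~2 gives at once that $\cZ_\bullet^{(+)}$ is maximal uniformly positive and $\cZ_\bullet^{(-)}$ is maximal uniformly negative, completing the proof.
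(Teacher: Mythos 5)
Your proof is correct and follows the same overall architecture as the paper's: build the candidate scalar product $(v|w)_\bullet=(v|QS_\bullet w)$ from the involution of the decomposition, observe that $\cW$ is complete in $\|\cdot\|_\bullet$ because $\cZ_\bullet^{(\pm)}$ are complete in $\|\cdot\|_{(\pm)}$, compare the two complete norms by a Banach-type theorem, and conclude via the admissibility of $S_\bullet$ and Prop.~\ref{kaku3}. The one place where you genuinely diverge is the norm comparison, and your version is the more careful one. The paper asserts that $\|v\|_\bullet\leq C\|v\|_1$ follows ``clearly'' from the boundedness of $Q$ and then applies the bounded inverse theorem; but that inequality for general $v$ requires the algebraic projections $\Pi_\bullet^{(\pm)}$ (equivalently $S_\bullet$) to be bounded in $\|\cdot\|_1$, which is not known at that stage --- boundedness of $Q$ only gives the estimate separately on each subspace. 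You instead apply the closed graph theorem to $\mathrm{id}\colon\cW_0\to\cW_\bullet$, using Cauchy--Schwarz in the definite inner-product spaces $(\cZ_\bullet^{(\pm)},\pm(\cdot|Q\cdot))$ together with the nondegeneracy $\cW^{\perp Q}=\{0\}$ to close the graph; this yields the same inequality without presupposing bounded projections, after which the bounded inverse theorem gives equivalence exactly as in the paper. You also make explicit two points the paper leaves implicit: the verification that $\cZ_\bullet^{(-)}=\cZ_\bullet^{(+)\perp Q}$ (via the no-neutral-elements hypothesis) and the final appeal to the implication $3)\Rightarrow2)$ of Prop.~\ref{kaku3} to obtain \emph{maximality} of the uniformly definite subspaces, which the paper's last sentence does not address. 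In short: same route, but your treatment of the decisive analytic step repairs a small gap in the paper's argument.
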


\begin{proof}
Let $S_\bullet$ be the involution defined by
$\cW= \cZ_\bullet^{(+)}\oplus
\cZ_\bullet^{(-)}$.
As usual, we introduce  the corresponding scalar product $(v|w)_\bullet:=(v|QS_\bullet
w)$ and the norm $\|\cdot\|_\bullet$.
Note that $\|v\|_\bullet = \| v\|_{(\pm)}$ if $v\in
\cZ_\bullet^{(\pm)}$.

Let $\|\cdot\|_1$ be any compatible  norm. Clearly, by the
boundedness of $Q$, we have
\beq \|v\|_\bullet\leq C\|v\|_1.\eeq

Consider the identity operator from $\cW$ with $\|\cdot\|_\bullet$
to
$\cW$ with $\|\cdot\|_1$. In both norms $\cW$ is  complete. Then
the identity is bounded. Hence it is closed. The operator is
bijective. Hence by Banach's theorem its inverse is bounded. Therefore
we have
\beq
\|v\|_1\leq c\|v\|_\bullet.\eeq
 Thus, $\cZ_\bullet^{(\pm)}$ are uniformly positive
resp. negative.
\end{proof}

\begin{prop}
  Let $S_1$, $S_2$ be a pair of admissible involutions.
Define $K, c, d$ as in \eqref{kaku} and \eqref{kaku1}.
Then
  $K$ is pseudo-unitary on $(\cW,Q)$ and
  $K$ is positive  with respect to both $(\cdot|\cdot)_1$ and
  $(\cdot|\cdot)_2$.
Besides, $\|c\|<1$  and $c^*=d$ with respect to $(\cdot|\cdot)_1$.
\label{kaku2}\end{prop}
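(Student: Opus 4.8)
The plan is to carry out the whole argument inside the Hilbert space $(\cW,(\cdot|\cdot)_1)$, in which both $S_1$ and $S_2$ are bounded self-adjoint involutions, and to treat the four assertions in the order: pseudo-unitarity, positivity (which simultaneously secures the invertibility of $\one+K$ that is tacitly needed to define $c,d$ via \eqref{kaku1}), and finally the Cayley-transform analysis giving $\|c\|<1$ and $c^*=d$. Pseudo-unitarity is immediate: since $S_2$ and then $S_1$ preserve $Q$, one has $(Kv|QKw)=(S_2S_1v|QS_2S_1w)=(S_1v|QS_1w)=(v|Qw)$, so $K^{*Q}=K^{-1}$. Note also that $K=S_2S_1$ is a product of two involutions, hence bounded and invertible with $K^{-1}=S_1S_2$.

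Next I would establish the self-adjointness and positivity, which is the conceptual core. Using $A^{*\bullet}=S_\bullet A^{*Q}S_\bullet$ together with $K^{*Q}=K^{-1}=S_1S_2$, I get $K^{*1}=S_1K^{-1}S_1=S_1(S_1S_2)S_1=S_2S_1=K$, so $K$ is $(\cdot|\cdot)_1$-self-adjoint, and the analogous computation with $S_2$ gives $K^{*2}=K$. For positivity I would compute directly: since $S_1$ preserves $Q$, $(v|Kv)_1=(v|QS_1S_2S_1v)=(S_1v|QS_2S_1v)=(S_1v|S_1v)_2>0$ for $v\neq0$, and symmetrically $(v|Kv)_2=(v|QS_2Kv)=(v|QS_1v)=(v|v)_1>0$. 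Thus $K$ is a positive, self-adjoint, boundedly invertible operator on $(\cW,(\cdot|\cdot)_1)$, so $\sigma(K)\subset[\delta,M]$ with $0<\delta\le M<\infty$; in particular $\one+K\ge\one$ is invertible, so $c,d$ are legitimately defined.

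Finally, for $\|c\|<1$ and $c^*=d$ I would study the Cayley-type transform $L:=\frac{\one-K}{\one+K}$. It is $(\cdot|\cdot)_1$-self-adjoint, and by the spectral mapping theorem $\sigma(L)=\{\tfrac{1-\lambda}{1+\lambda}:\lambda\in\sigma(K)\}\subset(-1,1)$, whence $\|L\|_1<1$. The relation \eqref{chazar0}, $S_1KS_1=K^{-1}$, gives $S_1LS_1=-L$, i.e. $L$ is off-diagonal in the fundamental decomposition $\cW=\cZ_1^{(+)}\oplus\cZ_1^{(-)}$, which is orthogonal for $(\cdot|\cdot)_1$; this is exactly the block form $\bigl[\begin{smallmatrix}0&c\\ d&0\end{smallmatrix}\bigr]$ of \eqref{chazar1}. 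Self-adjointness of an off-diagonal block operator against an orthogonal decomposition forces the $(+,-)$ block's adjoint to equal the $(-,+)$ block, i.e. $c^{*1}=d$, which is the claim $c^*=d$ with respect to $(\cdot|\cdot)_1$; and restricting $L$ to $\cZ_1^{(-)}$ yields $\|cv\|_1=\|Lv\|_1\le\|L\|_1\|v\|_1$, so $\|c\|\le\|L\|_1<1$.

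I expect the only genuinely delicate point to be the logical ordering rather than any single estimate: $c$ and $d$ do not exist until $\one+K$ is known invertible, so the positivity step must precede the block-form argument, and one must use the full invertibility of $K$ (not merely $K\ge0$) to push $\sigma(K)$ strictly away from $0$ and thereby obtain the \emph{strict} bound $\|L\|_1<1$.
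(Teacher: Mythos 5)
Your proposal is correct and follows essentially the same route as the paper: the key step in both is the identity $(v|Kv)_1=(S_1v|QS_2S_1v)=(S_1v|S_1v)_2$, combined with the equivalence of compatible scalar products, to get positivity and hence invertibility of $\one+K$ and $\|\tfrac{\one-K}{\one+K}\|<1$. You additionally spell out the $(\cdot|\cdot)_1$-self-adjointness of $K$, the positivity with respect to $(\cdot|\cdot)_2$, and the block-form argument for $c^{*}=d$, all of which the paper's terse proof leaves implicit but which are consistent with its surrounding identities \eqref{chazar0} and \eqref{chazar1}.
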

\begin{proof}
  $K$ is pseudo-unitary as the product of two pseudo-unitary transformations.
  The inequality
  \begin{align*}
    (v|K v)_1 & = (S_1 v|Q S_2 S_1 v) = (S_1 v|S_1 v)_2 \geq a (S_1 v|S_1 v)_1 = a (v|v)_1
  \end{align*}
  with $a>0$ shows the positivity of $K$ with respect to
  $(\cdot|\cdot)_1$. Therefore, $\one+K$ is  invertible and $\|\frac{\one-K}{\one+K}\|<1$.
  Hence $\|c\|<1$.
\end{proof}

We finally show that any pair consisting of a maximal uniformly
positive and a maximal uniformly negative subspace is complementary.
(See also Lem. V.7.6. in \cite{Bognar}).
\begin{prop} Suppose that $\cZ_1^{(+)}$ is a maximal uniformly positive space
  and $\cZ_2^{(-)}$ is a maximal uniformly negative space. Then they are
  complementary.
\end{prop}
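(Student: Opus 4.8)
The plan is to reduce the statement to the machinery of pairs of admissible involutions developed in Subsection~\ref{sub:Pairs of admissible involutions} and the following ones, so that the analytic work already carried out in Prop.~\ref{kaku3}, Prop.~\ref{kaku2} and Lemma~\ref{lemma} can be invoked directly. First I would manufacture two admissible involutions adapted to the given subspaces. Since $\cZ_1^{(+)}$ is maximal uniformly positive, Prop.~\ref{kaku3} applied with $\cZ_1^{(-)}:=\cZ_1^{(+)\perp Q}$ yields a pair of complementary projections $(\Pi_1^{(+)},\Pi_1^{(-)})$ with $\Ran\Pi_1^{(+)}=\cZ_1^{(+)}$ and an admissible involution $S_1:=\Pi_1^{(+)}-\Pi_1^{(-)}$. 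For the negative subspace I would apply the same proposition to the Krein space $(\cW,-Q)$, in which $\cZ_2^{(-)}$ is maximal uniformly \emph{positive} and the $Q$-orthogonal companion coincides with the $(-Q)$-orthogonal companion; this produces complementary projections $(\Pi_2^{(+)},\Pi_2^{(-)})$ with $\Ran\Pi_2^{(-)}=\cZ_2^{(-)}$ and an admissible involution $S_2:=\Pi_2^{(+)}-\Pi_2^{(-)}$ (the sign chosen so that $S_2$, and not $-S_2$, is admissible for $Q$ itself).

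Next I would set $K:=S_2S_1$ as in \eqref{kaku}. By Prop.~\ref{kaku2}, $K$ is pseudo-unitary and positive with respect to the compatible scalar product $(\cdot|\cdot)_1$; in particular $\one+K$ is invertible. This is the step where the real content sits, since Prop.~\ref{kaku2} itself rests on Prop.~\ref{kaku3} and on the quantitative invertibility bound of Lemma~\ref{lemma}.

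Finally, by Prop.~\ref{prop:1+K_etc} the invertibility of $\one+K$ is equivalent to the invertibility of $\Upsilon$, i.e.\ to condition~(i) of Prop.~\ref{prop:pair_proj}. That proposition then asserts precisely that the pair $(\cZ_1^{(+)},\cZ_2^{(-)})=(\Ran\Pi_1^{(+)},\Ran\Pi_2^{(-)})$ is complementary, which is the claim. Alternatively one may read off complementarity directly from the explicit angular-operator formulas of Prop.~\ref{prop:1+K_etc}, where $\Lambda_{12}^{(+)}$ is shown to project onto $\cZ_1^{(+)}$ along $\cZ_2^{(-)}$, so that $\cZ_1^{(+)}\cap\cZ_2^{(-)}=\{0\}$ and $\cZ_1^{(+)}+\cZ_2^{(-)}=\cW$.

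I expect the only genuinely delicate point to be bookkeeping rather than analysis: one must ensure that the involution attached to $\cZ_2^{(-)}$ carries the correct sign, so that $\cZ_2^{(-)}=\Ran\Pi_2^{(-)}$ is the antiparticle space of an involution admissible for $Q$. Passing through $(\cW,-Q)$ simultaneously flips the roles of positive and negative and replaces the admissible involution by its negative, and these two sign changes must be tracked consistently. Once $S_1$ and $S_2$ are correctly normalised, the conclusion is immediate from the cited propositions.
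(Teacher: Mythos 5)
Your proposal is correct and follows essentially the same route as the paper's proof: take the $Q$-orthogonal companions, invoke Prop.~\ref{kaku3} to get the two admissible involutions $S_1$, $S_2$, use Prop.~\ref{kaku2} to see that $K=S_2S_1$ is positive so that $\one+K$ is invertible, and conclude via Prop.~\ref{prop:1+K_etc}. The only difference is that you make explicit the $(\cW,-Q)$ sign bookkeeping for the negative subspace, which the paper leaves implicit.
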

\begin{proof}
Set $\cZ_1^{(-)}:=\cZ_1^{(+)\perp Q}$ and
$\cZ_2^{(+)}:=\cZ_2^{(-)\perp Q}$.
Let  $S_1$ resp. $S_2$ be the involutions corresponding
to the pairs of complementary subspaces
$(\cZ_1^{(+)},\cZ_1^{(-)})$, resp.\ $(\cZ_2^{(+)},\cZ_2^{(-)})$. They are admissible.
By Prop. \ref{kaku2}, $K=S_2S_1$ is positive. Hence $\one+K$ is
invertible. Thus the result follows from Prop. \ref{prop:1+K_etc}.
\end{proof}

\section{Gegenbauer equation}
\label{app:gegenbauer}
For the convenience of the reader, we present in this appendix
basic statements about Gegenbauer functions needed
 in Sections
 \ref{ssc:dS} and \ref{sec:AdS}.
More details on Gegenbauer functions can be found e.g. in \cite{DGR23a},
on which this section is based.

Here is the {\em Gegenbauer equation}:
\begin{align}
\Bigg((1-w^2)\partial_w^2-2(1+\alpha )w\partial_w
+\lambda ^2-\Big(\alpha +\frac{1}{2}\Big)^2\Bigg)f(w)=0.\label{gege0}
\end{align}
We will express its solutions  in terms of the {\em Olver normalized  Gauss hypergeometric
function}:
\begin{align}
\label{eq:hyper_series}
{\bf F}(a,b;c;z):=\frac{F(a,b;c;z)}{\Gamma(c)}
=\sum_{n=0}^\infty\frac{(a)_n(b)_nz^n}{\Gamma(c+n)n!}.
\end{align}
The defining series converges only in the unit disc, but
${\bf F}(a,b;c;z)$ extends to a holomorphic function on
$\bC \backslash [1 , \infty[$ as well as on a universal
cover of $\bC \backslash \{ 0, 1 \}$.

In what follows complex power functions should be interpreted
as their principal branches (holomorphic on
$\bC \setminus ]- \infty , 0]$). For example $w \mapsto (1-w)^\alpha$
is holomorphic away from $[1,\infty[$. In addition, we will frequently
use the notation
\begin{equation}
 (w^2-1)^\alpha_\bullet := (w-1)^\alpha(w+1)^\alpha.
\end{equation}
The function $(w^2-1)^\alpha_\bullet$ is holomorphic on
$\bC\backslash]-\infty,1]$, whereas $ (w^2-1)^\alpha$ is holomorphic
on $\bC\backslash\big([-1,1]\cup\ii\bR\big)$. One has
$(w^2-1)^\alpha_\bullet = (w^2-1)^\alpha$ only for
$\re(w)>0$. However, $(1-w^2)^\alpha = (1-w)^\alpha (1+w)^\alpha$ for
all $w\not\in]-\infty,-1]\cup[1,\infty[$.

Standard solutions of the Gegenbauer equations are characterized by
simple behavior at one of the three singular points $1, -1 , \infty$.
Due to the $w\mapsto -w$ symmetry of the equation \eqref{gege0},
solutions of the second type are obtained from solutions of the first
type by negating the argument. Therefore we consider 4 functions,
corresponding to 2 behaviors at $1$ and 2 behaviors at $\infty$.
All of them are holomorphic on $\bC \backslash ]- \infty, 1 ]$.

\begin{itemize}
\item The solution characterized by asymptotics $\sim 1$ at $1$:
\begin{alignat}{2}\label{solu1}
    S_{\alpha ,\pm \lambda }(w)&
:= F\Big(\frac12+\alpha+\lambda,\frac12+\alpha-\lambda;\alpha+1;
\frac{1-w}{2}\Big)\\
 &= \left( \frac{2}{w+1} \right)^\alpha
 F\Big(\frac12+\lambda,\frac12-\lambda;\alpha+1;\frac{1-w}{2}\Big).
 \label{solu1_form2}
\end{alignat}
$S_{\alpha, \lambda}$ is distinguished among the four solutions
introduced here by the fact that it is holomorphic on
$\bC \backslash ] - \infty, -1 ] $ rather than only on
$\bC \backslash ]- \infty, 1 ]$.
\item The solution $\frac{2^{2\alpha}}{(w^2-1)^{\alpha}_\bullet}
S_{-\alpha ,\lambda }(w)$ is characterized by asymptotics
$\sim \frac{2^\alpha}{(w-1)^{\alpha}}$ at $1$.
\item The solution characterized by asymptotics
$\sim w^{-\frac12-\alpha-\lambda}$ at $+\infty$:
\begin{alignat}{2}\label{solu3}
 Z_{\alpha ,\lambda }(w)
 &=
 ( w \pm 1)^{-\frac12-\alpha -\lambda } F
\Big(
\frac12+\lambda,\frac12+\lambda+\alpha;1+2\lambda;\frac{2}{1\pm w}\Big)\\\notag
&=
w^{-\frac12-\alpha -\lambda } F\Big(\frac14+\frac\alpha2+\frac\lambda2,
\frac34+\frac\alpha2+\frac\lambda2;1+\lambda;
\frac1{w^2}\Big).
\end{alignat}
\item The solution $Z_{\alpha, - \lambda}(w)$ is characterized by asymptotics
$\sim w^{-\frac12-\alpha+\lambda}$ at $+\infty$.

\end{itemize}

All these 4 functions can be expressed in terms of $
S_{\alpha,\lambda}$, but for typographical reasons it is
convenient to introduce also $ Z_{\alpha,\lambda}$.
We will use Olver's normalization:
\begin{align}
{\bf
  S}_{\alpha,\lambda}(w):=\frac{1}{\Gamma(\alpha+1)}S_{\alpha,\lambda}(w),
  \qquad
{\bf
  Z}_{\alpha,\lambda}(w):=\frac{1}{\Gamma(\lambda+1)}Z_{\alpha,\lambda}(w).
\end{align}
We note the identities
\begin{align}
\label{eq:identities_SZ_signs}
  {\bf S}_{\alpha,\lambda}(w)={\bf S}_{\alpha,-\lambda}(w),&\quad
     {\bf Z}_{\alpha,\lambda}(w)=\frac{ {\bf Z}_{-\alpha ,\lambda }(w)
     }{(w^2-1)^\alpha_\bullet}.
  \end{align}

Here are the connection formulas:
\begin{align}\label{formu2}
  {\bf S}_{\alpha,\lambda}(-w)
=&-\frac{\cos(\pi\lambda)}{\sin(\pi\alpha)}
{\bf S}_{\alpha,\lambda}(w)
+\frac{2^{2\alpha}\pi {\bf S}_{-\alpha,-\lambda}(w)}{\sin(\pi\alpha)
\Gamma(\frac12+\alpha+\lambda)
\Gamma(\frac12+\alpha-\lambda) (1-w^2)^{\alpha}},
\\ \label{formu1}
{\bf Z}_{\alpha,\lambda}(w)
=&-\frac{2^{\lambda-\alpha-\frac12}\sqrt{\pi}
{\bf S}_{\alpha,\lambda}(w) }{\sin(\pi\alpha)
\Gamma(\frac12-\alpha+\lambda)}
+\frac{2^{\lambda+\alpha-\frac12}\sqrt{\pi}}{\sin(\pi\alpha)
\Gamma(\frac12+\alpha+\lambda)}
\frac {{\bf S}_{-\alpha,-\lambda}(w)}{(w^2-1)^{\alpha}_\bullet},
\\ \label{formu3}
{\bf S}_{\alpha,\lambda}(w)
=&  \frac{2^{-\lambda+\alpha-\tfrac12}\sqrt\pi}{\sin\pi\lambda}
\Bigg( -\frac{{\bf Z}_{\alpha,\lambda}(w)
}{\Gamma\big( \tfrac12+\alpha-\lambda \big)}
+ \frac{2^{2\lambda}{\bf Z}_{\alpha,-\lambda}(w)
}{\Gamma\big(\tfrac12+\alpha+\lambda\big)}
\Bigg).
\end{align}
From its definition, it is clear that ${\bf Z}_{\alpha,\lambda}$ satisfies
\begin{align}
\label{eq:Z+-}
{\bf Z}_{\alpha,\lambda} (-w\mp\ii0)
= \ee^{\pm\ii\pi\big(\tfrac12+\alpha+\lambda\big)}
{\bf Z}_{\alpha,\lambda} (w\pm\ii0),\quad
w \in \bR.
\end{align}

For further information on Gegenbauer functions (in various conventions), consult
for example \cite{DGR23a,NIST,GR,WW,Durand76}.


\paragraph{Acknowledgement.}
The work of J.D. and C.G.
 was supported by National Science Center of Poland under the
    grant UMO-2019/35/B/ST1/01651. We would like to
    thank Markus Fröb, Christian Gérard,
    Wojciech Kamiński, Nicola Pinamonti and Karl-Henning Rehren for enlightening discussions.

\paragraph{Data availability statement.}
All relevant data is contained in the manuscript.

\paragraph{Conflicts of interest.} The authors declare no
conflict of interest.

\footnotesize

\end{document}